\newtheorem{theorem}{Theorem}
\newtheorem{lemma}{Lemma}
\newtheorem{proposition}{Proposition}
\newtheorem{corollary}{Corollary}
\newtheorem{definition}{Definition}
\DeclareMathAlphabet{\mathbb}{U}{bbold}{m}{n}
\begin{document}

\title{To Save Mobile 
 Crowdsourcing from Cheap-talk: \\ A Game Theoretic Learning Approach}

\author{Shugang Hao,~\IEEEmembership{Member,~IEEE,} 
and~Lingjie~Duan,~\IEEEmembership{Senior Member,~IEEE}
\thanks{Shugang Hao and Lingjie Duan are with the Pillar of Engineering Systems and Design, Singapore University of Technology and Design, Singapore, 487372 Singapore. E-mail: shugang\_hao@sutd.edu.sg, lingjie\_duan@sutd.edu.sg.


This work was supported by the Ministry of Education,
Singapore, under its Academic Research Fund Tier 2 under Grant Project
No. MOE-T2EP20121-0001. This work was also supported by the SUTD Kickstarter Initiative (SKI) Grant with project no. SKI 2021\_04\_07 and the Joint SMU-SUTD Grant with project no. 22-LKCSB-SMU-053. 

(Corresponding author: Lingjie Duan.)
 
}
}



\maketitle

\begin{abstract}
Today mobile crowdsourcing platforms invite users to provide anonymous reviews about service experiences, yet many reviews are found biased to be extremely positive or negative. The existing methods find it difficult to learn from biased reviews to infer the actual service state, as the state can also be extreme and the platform cannot verify the truthfulness of reviews immediately. Further, reviewers can hide their (positive or negative) bias types and proactively adjust their anonymous reviews against the platform's inference.  
To our best knowledge, we are the first to study how to save mobile crowdsourcing from cheap-talk and strategically learn from biased users' reviews. 
We formulate the problem as a dynamic Bayesian game, including users' service-type messaging  and the platform's follow-up rating/inference. Our closed-form PBE shows that 
an extremely-biased user may still honestly message to convince the platform of listening to his review. Such Bayesian game-theoretic learning obviously outperforms the latest common schemes especially when there are multiple diversely-biased users to compete. For the challenging single-user case, we further propose a time-evolving mechanism with the platform's commitment inferences to ensure the biased user's truthful messaging all the time, whose performance improves with more time periods to learn from more historical data. 

\end{abstract}

\begin{IEEEkeywords}
mobile crowdsourcing, cheap-talk, dynamic Bayesian game, strategic learning, truthful mechanism design
\end{IEEEkeywords}

\section{Introduction}
\IEEEPARstart{T}{oday} mobile crowdsourcing platforms (e.g., TripAdvisor and Waze) invite users to submit anonymous reviews for rating their experienced services (e.g., of hotels, restaurants, and trips). Yet, many reviews are found biased to be extremely positive or negative. 
For example, 
one recent investigation reveals that anonymous users posted 79\% of the five-star fraudulent hotel reviews on TripAdvisor \cite{TA2}. As another example, carpet-cleaning company Hadeed was targeted by many anonymous negative reviews on Yelp \cite{Y}.  
 Given escalating criticisms of biased reviews in crowdsourcing platforms \cite{hbr}, it is critical for crowdsourcing platforms to strategically learn from these biased reviews to best infer the actual service state. However, the existing methods find it difficult to learn from biased reviews. On the platform side, it is challenging for the platform to verify users' reviews or even identities promptly, especially when the platform faces feedback on real-time service (e.g., Waze for navigation \cite{WZ-STA}).  Besides, extremely positive and negative reviews can also be the truth and they appear to be the majority among many reviews (e.g., \cite{hbr,hu2009online,gg}).  
 
  On the user side, major platforms' privacy protection policy allows reviewers to hide their identities and past reviews (implying their positive or negative bias types)  from a platform. For instance, TripAdvisor allows reviewers to maintain anonymity to feedback on its platform\cite{TA-4}, and Waze drivers can hide their identities when posting real-time traffic updates on the live map \cite{WZ-LM}. Such anonymity may skew the platform's final rating. Further, users are smart to adjust their reviews in proactive ways that mislead the platform's optimal service inference or recommendation \cite{TA-4}.
  For example,
suburban-area residents in the traffic navigation platform Waze tend to send fake reports about a speed trap, a wreck, and some other traffic snarl during rush hours to deflect the traffic near their homes  \cite{Waze1}. Besides, some policemen purposely posted no congestion messages in Waze to attract drivers there to catch speeders \cite{Waze3}.  
 These biased users behave very differently from malicious data attackers in the literature of crowdsourcing systems, who simply send fake reports to maximally undermine systems' inference accuracy (e.g., \cite{james2020sybil, tahmasebian2020crowdsourcing, zhao2023data,wang2020truth}). Crowdsourcing systems thus cannot learn from malicious attackers and just abandon their reviews in the final rating. In contrast, here a biased user aims to selfishly mislead the system to his preferred state and may still honestly reveal his preferred state in some cases. It is thus meaningful for the platform to strategically learn from such biased users.   \cite{li2022harnessing,yuan2020distributed} studied learning from crowdsourcing workers with uncertain information yet ignored the consideration of their biases.

Our problem to save crowdsourcing is more related to cheap-talk games in the economics or algorithmic game theory literature (e.g., 
 \cite{battaggion2022bright, crastr1982,  karakocc2021cheap,  McGChe2013, ShiCoo2017 }), in which information senders (i.e., users in our problem) observe the nature state and send messages to proactively affect the receiver's (platform's)  inference of the actual state. This literature largely assumes that users have limited biases and just prefer the receiver to take inference slightly away from the actual state realization.
There are only a few recent works to analyze and tackle extreme bias (e.g., \cite{bhattacharya2018optimality,boleslavsky2016evolving}). However, unlike crowdsourcing, they only consider a simple scenario of one or two users, and strongly assume that a user's bias type is fixed and publicly known to aid the platform's state learning. In practice, a user to post anonymous reviews may have a positive or negative bias type, which is unknown to the platform. He may be bribed to target at over-the-top praise for business owners \cite{TA2}, or message poor rating to maximally attack service competitors or deflect the traffic near home \cite{Waze1}. Thus, prior game theoretic techniques cannot be applied to our problem to negate cheap-talk.

We summarize our key novelty and main results below. 
\begin{itemize}
    \item \emph{Strategic crowdsourcing to learn from biased users:} To our best knowledge, we are the first to study how to save mobile crowdsourcing from cheap-talk and strategically learn from 
    biased users. 
    Unlike the cheap-talk literature (e.g., \cite{battaggion2022bright, crastr1982,  karakocc2021cheap,  McGChe2013, ShiCoo2017, bhattacharya2018optimality,boleslavsky2016evolving}), our model is not limited to one or two users. We formulate the problem as a dynamic Bayesian game, including the service type messaging from users with hidden biases and the platform’s follow-up inference from the messages.  We also allow the users' biases to be extremely positive or negative, and practically consider that users' extreme biases are private information which are hidden from the platform. We present new analytical studies to provide guidance on \textit{how a platform can best learn useful information from biased users' reviews to provide service rating}.
    
\item \emph{Bayesian game theoretic approach to negate cheap-talks:} 
We formulate a two-stage dynamic Bayesian game including an arbitrary number of users to
crowdsource service reviews for the platform's inference. In Stage I, users experiencing the service type send unverifiable messages on type realization to the platform, and then in Stage II the platform takes all the users' messages to best infer the actual service state. 
Our PBE is in closed-form and shows that the platform’s Bayesian game theoretic learning can greatly reduce biased users’ cheap-talks, where a user with extreme bias may still honestly message to
convince the platform of listening to his  review. Perhaps surprisingly, he may even message his non-preferred type to convince the platform.
 Such Bayesian game theoretic learning obviously outperforms the latest common schemes in the
literature (e.g., majority-voting and blind abandoning) especially
when there are multiple users of diverse biases to compete.


\item \emph{Time-evolving inference mechanism for truthful messaging:}
Though Bayesian game theoretic learning from multiple users, the system may still suffer non-trivial efficiency loss if the user number is small or the gap between high- and low-quality service types' distributions is large.
To further save crowdsourcing from cheap-talk and reduce system loss even in the challenging scenario of one available user, we design the platform's time-evolving commitment mechanism to enable the biased user's truthfully messaging all the time. Though this design problem has high dimensionality in the time domain, we manage to solve everything in closed-form.  Its performance improves if there are more time periods to learn from historical data, or the gap between high- and low-quality service types' distributions enlarges. Thus, it is beneficial for the platform to use the time-evolving mechanism than the game theoretic learning when the user number is small or the time period number is large.

 

\end{itemize}

\begin{figure}
    \centering
\includegraphics[scale=0.47]{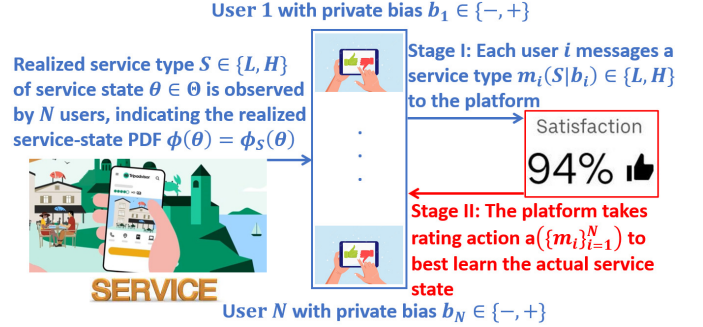}
    \caption{System model on $N$ users' messaging of observed service type $S$$\in$$\{L, H\}$ of service state $\theta$ (e.g., of hotels, restaurants and trips) to a crowdsourcing platform (e.g., TripAdvisor), who takes recommendation action $a(\{m_i\}_{i=1}^N)$ to best infer the actual service state. Here we consider the practice that the actual (continuous) service state $\theta$ cannot be accurately estimated by users, and only its low- or high-quality type is observed by each user. }
    \label{Fig.1}
\end{figure}

The rest of this paper is organized as follows. In Section II, we will introduce the system model and the dynamic Bayesian game formulation. In Section III, we will analyze two benchmark schemes including majority-voting and blind abandoning in the literature. In Section IV, we will analyze the Bayesian game theoretic learning approach and examine its system loss for an arbitrary number of biased users. In Section V, we will propose the time-evolving mechanism for the platform to incentivize truthful messaging from the biased user. In Section VI, we will extend our analysis to asymmetrically-distributed biases for users. Section VII concludes the paper.

\section{System Model and Problem Formulation}

\subsection{System Model of Crowdsourcing from Biased Users } 

As shown in Figure~\ref{Fig.1}, we consider a typical two-stage crowdsourcing scenario: $N$ users observe service type $S$$\in$$\{L, H\}$ (low- or high-quality) from their experiences and send review messages in Stage I, and then in Stage II the platform makes a service rating or quality inference for the recommendation. 

We first introduce some basics of the system model. Similar to \cite{boleslavsky2016evolving} and \cite{chakraborty2010persuasion}, 
we model that actual service state $\theta$ may vary in a continuous range $\Theta$$\subseteq$$R$, following a general probability density function (PDF) $\phi(\theta)$$\in$$\{\phi_L(\theta)$, $\phi_H(\theta)\}$. In practice, users cannot accurately estimate the actual (continuous) service state $\theta$. Instead, they can only observe the realized distribution type $S$$\in$$\{L, H\}$: low-quality type $S$=$L$ with  PDF function $\phi(\theta)$=$\phi_L(\theta)$ and high-quality type $S$=$H$ with PDF $\phi(\theta)$=$\phi_H(\theta)$ (e.g., \cite{li2019recommending,amin2018evaluating}).\footnote{We can similarly extend our PBE and system loss analysis to allow another medium service type $M$ in the distribution space $ \{\phi_L(\theta),  \phi_M(\theta), \phi_H(\theta)\}$. We prove that PBE.1-2 in Table~\ref{t2-p6} still occur at the PBE and the system loss is still reduced as the user number $N$ increases. Please refer to Appendix L for details.}  Further, we consider the challenging scenario that $\phi_L(\theta)$ and $\phi_H(\theta)$ have the same support $\Theta$, creating the maximal possible ambiguity to the platform's inference from the users' messages. This reflects the practice that another customer may still find a highly-rated hotel's service unsatisfactory, and might like to stay in a poorly-rated hotel \cite{2020Comprehending}.   
Unlike users, the platform has no observation on the high or low service type, and just knows the public probability distribution of $Pr(\phi(\theta)$$=$$\phi_H(\theta))$$=$$p_H$ and $Pr(\phi(\theta)$$=$$\phi_L(\theta))$$=$$1$$-$$p_H$ beforehand. Mean and variance of either PDF $\phi_j(\theta)$ are known to the platform as  $E[\theta|\phi(\theta)$=$\phi_j(\theta)]$=$\mu_j$ and $V[\theta|\phi(\theta)$=$\phi_j(\theta)]$=$\sigma_j^2$, respectively, where $j$$\in$$\{H, L\}$ and we assume $\mu_H$$\geq$$\mu_L$ and $\sigma_H^2$$\ne$$\sigma_L^2$ without loss of generality.

We then illustrate the two-stage interaction between $N$ users and the platform. In Stage I, after observing the type $S$ in $\{L, H\}$ of service state $\theta$, each user $i$$\in$$\{1,\cdots,N\}$  with bias type $b_i$  sends the platform unverifiable type message $m_i(S|b_i)$$\in$$\{L,H\}$. His private bias $b_i$ is realized from set $\{-, +\}$, telling negative or positive bias to mislead the platform from actual PDF $\phi(\theta)$. The platform only knows the prior probability distribution of possible biases as $Pr(b_i$=$+$)=$q_+$ and $Pr(b_i$=$-$)=1$-q_+$ with 0$<$$q_+$$<$1 for each user. Denote set $B$ to contain all the users' possible bias combinations and set $M$ to contain all the users' possible message combinations. In Stage II, after receiving $N$ users' messages $\{m_i\}_{i=1}^N$$\in$$M$, the platform takes a continuous inference action $a$ about the final service rating to best tell the actual service state.  Note that the platform's action $a$ is not necessarily equal to mean $\mu_H$ or $\mu_L$,  but is continuous to be any linear combination (e.g., 94\% satisfaction in Figure~\ref{Fig.1}).

Based on the above definitions, we next present each user's utility and the platform's cost functions, respectively. According to the cheap-talk literature (e.g., \cite{boleslavsky2016evolving,chakraborty2010persuasion,emons2019strategic}) and the fact that a positively/negatively-biased user expects the platform's action/rating as high/low as possible, user $i$'s utility depends on his bias $b_i$ to reflect self-preference for the platform's recommendation action $a(\{m_i\}_{i=1}^N)$ as follows:
\begin{align}\label{u_s_}
    &u_{i}(a(\{m_i\}_{i=1}^N), b_i) = b_i \cdot a(\{m_i\}_{i=1}^N), \ b_i \in \{-, +\}.
\end{align}
The linear function above tells a user's negative or positive preference/bias on the platform's recommendation, reflecting gain (or loss) on service rating for a positively (or negatively)-biased user. 
As the platform's action strategy $a(\{m_i\}_{i=1}^N)$ is an inference from all the users' messages and observations, user $i$'s expected utility depends on the actual PDF of service state and the other users' possible messages to affect the platform's inference. It is given by: 
 \begin{align}
    \bar{u}(b_i) \!\!=\!\! \int_{\theta \in \Theta} & \bigg( \sum_{j \in \{L, H\} }  \!\!\!\!\!\!Pr(\phi(\theta)\!\! =\!\! \phi_j(\theta)) \!\!\!\!\!\!\!\!\sum_{\{m_i\}_{i=1}^N \in M}\!\!\!\!\!\!\!\! Pr(\{m_i\}_{i=1}^N|\phi_j(\theta)) \nonumber  \\
     &\ \ \ \ \ \ \ \ \ \ \ \ \ \ \ \ u_{i}(a(\{m_i\}_{i=1}^N), b_i) \phi_j(\theta) \bigg) d\theta. \label{u-s}
\end{align}
User $i$ participates in rating in the long run and aims for maximizing his time-average utility in \eqref{u-s} over all possible PDF realizations.

Following the cheap-talk literature (e.g., \cite{crastr1982,karakocc2021cheap,ShiCoo2017}) and the crowdsourcing literature (e.g., \cite{tripkovic2021cluster,amini2013crowdlearner}), the platform's cost measures the square distance or error from its inference action $a(\{m_i\}_{i=1}^N)$ and realized state $\theta$, and is given as follows:
\begin{align}\label{u_r_}
    c(a(\{m_i\}_{i=1}^N), \theta) = (a(\{m_i\}_{i=1}^N) - \theta)^2, 
\end{align}
and its expected cost under a given inference strategy is thus
\begin{align}
    &\bar{c}(a(\{m_i\}_{i=1}^N)) \!\!=\!\! \int_{\theta\in \Theta} \bigg( \sum_{j \in \{L, H\} }  \!\!\!\!\!\!Pr(\phi(\theta)\!\! =\!\! \phi_j(\theta))  \nonumber \\
     &\!\!\!\!\!\!\!\!\sum_{\{m_i\}_{i=1}^N \in M}\!\!\!\!\!\!\!\! Pr(\{m_i\}_{i=1}^N|\phi_j(\theta)) c(a(\{m_i\}_{i=1}^N), \theta) \phi_j(\theta) \bigg) d\theta. \label{u-r}
\end{align}

\subsection{Dynamic Bayesian Game Formulation}

We are now ready to formulate a two-stage dynamic Bayesian game including $N$ crowdsourcing users and the platform in Figure~\ref{Fig.1} as follows:
\begin{itemize}
    \item Stage I: each user $i$$\in$$\{ 1, 2, \cdots, N\}$ with private bias $b_i$$\in$$\{-, +\}$ observes the service type $S$$\in$$\{L, H\}$ of service state $\theta$ and simultaneously messages $m_i(S|b_i)$$\in$$\{L, H\}$ to the crowdsourcing platform to maximize his expected utility in \eqref{u-s}.\footnote{Our analysis for PBE and system loss performance can be extended to the case of sequentially-messaging users, which will be presented in Appendix K. }  The users with service experiences have information advantages over the platform to know the realized type of service state and their own bias types. 
    \item Stage II: after observing users' messages $\{m_i\}_{i=1}^N$, the platform chooses an inference action strategy $a(\{m_i\}_{i=1}^N)$ to minimize its expected cost in \eqref{u-r}.
\end{itemize}
Note that the platform's best strategy of inference action to minimize \eqref{u-r} is
\begin{align}
     &a^{*}(\{m_i\}_{i=1}^N)= E[\theta|\{m_i\}_{i=1}^N] \label{a} \\
     =& \arg\min_a \int_{\theta\in \Theta} \sum_{j \in \{L, H\} } Pr\big(\phi_j(\theta)|\{m_i\}_{i=1}^N\big)  c(a, \theta) \phi_j(\theta) d\theta. \nonumber 
\end{align}

Through Bayesian game theoretic analysis, we want to characterize the perfect Bayesian equilibrium (PBE) to proactively guide the platform's Bayesian game theoretic learning against any biased messaging.  { We then define PBE as follows.
\begin{definition}
A PBE is a set of strategies, $\{m_i^*\}_{i=1}^N$ and $a^*({\{m_i\}_{i=1}^N})$, with  beliefs $Pr(\phi(\theta)|\{m_i^*\}_{i=1}^N)$ such that
\begin{itemize}
    \item Each user $i$'s messaging $m_i^*(S|b_i)$ maximizes $\bar{u}(b_i)$ in \eqref{u-s}, given $\{m_j^*\}_{j \ne i}$, $a^*({\{m_i\}_{i=1}^N})$ and $Pr(\phi(\theta)|\{m_i^*\}_{i=1}^N)$ in \eqref{ul'-2} later.  
    \item The platform's inference $a^*({\{m_i\}_{i=1}^N})$ minimizes $\bar{c}(\cdot)$ in \eqref{u-r}, given $\{m_i^*\}_{i = 1}^N$ and $Pr(\phi(\theta)|\{m_i^*\}_{i=1}^N)$.

\end{itemize}
\end{definition}
}



\subsection{Definition of Performance Metric: System Loss} 

Define $\bar{c}^{e,*}$ as the platform's expected cost at a particular PBE $e$ out of PBE set $E$ of the game, which has the form as \eqref{u-r} given the platform's recommendation action $a_e^{*}(\{m_i\}_{i=1}^N)$ in \eqref{a}. In the ideal case that the platform knows the actual PDF type (though not attainable), the minimal possible loss is obtained and we use it as a comparison with our learning approach later. Here, the platform cares about minimizing its expected cost according to the realized PDF, and its recommendation action is just the mean of distribution:
\begin{align*}
     a^{**} = \arg\min_a \int_{\theta\in \Theta} c(a,\theta) \phi_j(\theta) d\theta, j \in\{L, H\}
\end{align*}
given $\phi(\theta)$=$\phi_j(\theta)$ is realized.
After substituting it to \eqref{u-r}, the platform's minimum expected cost $\bar{c}^{**}$ is given by: 
\begin{align}
    \bar{c}^{**} = p_H\sigma_H^2+(1-p_H)\sigma_L^2. \label{eq:scN}
\end{align}

To evaluate the platform's inference error in our approach from this minimum, we finally define (expected) system loss $\bar{L}$ as the difference between the platform's expected costs $\bar{c}^{e,*}$ under the worst PBE $e$ out of set $E$ and $\bar{c}^{**}$ in \eqref{eq:scN} under the optimum. It is given by: 
\begin{align}
   \bar{L}(\mu_H,\mu_L,q_+,p_H) = 
   |\max_{e \in E} \bar{c}^{e,*} - \bar{c}^{**}|. \label{POA}
\end{align}
Note that a larger $\bar{L}$ implies a larger efficiency loss of our Bayesian game theoretic learning approach from the optimum.

\section{Two Benchmark Schemes: Majority-Voting and blind Abandoning}\label{w3:S3} 

Before analyzing our Bayesian game theoretic learning approach, in this section,  we first analyze two common benchmarks in the literature to be compared with our learning approach later.

\subsection{Benchmark 1: The Platform's Majority-Voting Scheme}

For the benchmark 1 in the crowdsourcing literature (e.g., \cite{tao2018domain,xu2018reward,9364734}), the platform simply favors the message with more users. Thus, each positively-biased user $i$ blindly messages high-PDF type (i.e., $m_i(S|b_i$=$+)$=$H$) and each negatively-biased user $j$ blindly messages low-PDF type (i.e., $m_j(S|b_j$=$-)$=$L$) to best mislead the platform. The platform's inference action is then
\begin{align*}
   &\bar{a}_1 = \\
   &\begin{cases}
      \arg\min_a \int_{\theta \in \Theta} c(a,\theta) \phi_H(\theta) d\theta, &\!\!\!\!\text{if} \ k \!>\! \frac{N}{2}, \\
     \arg\min_a \!\! \int_{\theta \in \Theta} \!\!\!\! \!\!\sum\limits_{j\in\{L, H\} }\!\!\!\!\!\!\!\!Pr(\phi(\theta) \!\! = \!\! \phi_j(\theta)) c(a, \theta) \phi_j(\theta) d\theta, &\!\!\!\!\text{if} \ k \!=\! \frac{N}{2}, \\
     \arg\min_a \int_{\theta \in \Theta} c(a,\theta) \phi_L(\theta) d\theta, &\!\!\!\!\text{if} \ k \!<\! \frac{N}{2},
   \end{cases} 
\end{align*}
where $k$ out of $N$ is the number of users messaging $m$=$H$. We can solve $\bar{a}_1$ because of $c(a,\theta)$'s convexity in $a$. Together with \eqref{u-r}, we have the following.
 \begin{lemma}\label{lemma-bm-2}
   At the benchmark 1 where the platform adopts the majority-voting scheme,  its recommendation action is 
    \begin{align}\label{a_1}
        \bar{a}_1 = \begin{cases}
            \mu_H, &\text{if} \ k > \frac{N}{2}, \\
            p_H\mu_H + (1-p_H)\mu_L, &\text{if} \ k = \frac{N}{2}, \\
            \mu_L, &\text{if} \ k < \frac{N}{2},
        \end{cases}
    \end{align}
    where $k$ is the number of users messaging $m=H$. The resultant system loss is
    \begin{align}
        &\bar{L}_1 = (\mu_H\!-\!\mu_L)^2\frac{1}{2^N}\bigg(p_H\!\!\sum_{k=0}^{\lceil\frac{N}{2}\!-\!1\rceil}C_N^k\! +\!(1\!-\!p_H) \!\!\!\sum_{l=\lfloor\frac{N}{2}\!+\!1\rfloor}^{N} \!\!\!C_N^l   \label{u-r-b-2} \\
        &\ \ \ \ \ \ \ \ \ \ \ \ \ \ \ \ \ \ \ \ \ \ \ \ \ \ \ \ \ \ \ \ +\text{$\mathbb{1}_{\lfloor\frac{N}{2}\rfloor=\frac{N}{2}}$}\cdot p_H(1\!-\!p_H)C_N^\frac{N}{2}\bigg), \nonumber
    \end{align}
which does not monotonically decrease with use number $N$. Further, we have $\bar{L}_1 > p_H(1-p_H)(\mu_H-\mu_L)^2$, which can be arbitrarily large for any $N$ as $\mu_H - \mu_L \to \infty$. 
\end{lemma}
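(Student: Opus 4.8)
The plan is to derive the recommendation action $\bar{a}_1$ directly from the argmin in the scheme's definition, then evaluate the expected cost in \eqref{u-r} outcome-by-outcome over the realizations of the $N$ users' biases, and finally isolate the monotonicity and lower-bound claims through a parity analysis in $N$.

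First I would pin down $\bar{a}_1$. Since $c(a,\theta)=(a-\theta)^2$ is strictly convex in $a$, the first-order condition gives $\arg\min_a E[(a-\theta)^2]=E[\theta]$ for whichever posterior the majority induces via \eqref{a}: a strict $H$-majority forces the posterior to $\phi_H$ (so the action is $\mu_H$), a strict $L$-majority forces $\phi_L$ (action $\mu_L$), and a tie leaves the prior untouched (action $p_H\mu_H+(1-p_H)\mu_L$). This reproduces \eqref{a_1}.

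Second, I would compute \eqref{u-r} under these actions. The crucial observation is that under majority-voting the message profile, hence the action, depends only on how many of the $N$ users are positively biased, and this count is independent of the realized PDF type. Conditioning on a distribution $\phi_j$, the bias--variance identity $E[(a-\theta)^2\mid\phi_j]=\sigma_j^2+(a-\mu_j)^2$ lets me subtract $\bar{c}^{**}$ of \eqref{eq:scN} cleanly, leaving a per-outcome excess cost of $(1-p_H)(\mu_H-\mu_L)^2$ when the action is $\mu_H$, of $p_H(\mu_H-\mu_L)^2$ when it is $\mu_L$, and of $p_H(1-p_H)(\mu_H-\mu_L)^2$ at a tie (this last by expanding $a=p_H\mu_H+(1-p_H)\mu_L$ and using $a-\mu_H=-(1-p_H)(\mu_H-\mu_L)$, $a-\mu_L=p_H(\mu_H-\mu_L)$). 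Weighting these by the probabilities of a strict $L$-majority, a strict $H$-majority, and a tie, which with the symmetric biases $q_+=\tfrac12$ of this section equal $\frac{1}{2^N}\sum_{k<N/2}C_N^k$, $\frac{1}{2^N}\sum_{l>N/2}C_N^l$, and $\frac{1}{2^N}C_N^{N/2}$ respectively, yields exactly \eqref{u-r-b-2}.

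The delicate part is the monotonicity and the lower bound, and the key simplification is the symmetric-tail identity $\sum_{k<N/2}C_N^k=\sum_{k>N/2}C_N^k=\tfrac12\big(2^N-\mathbb{1}_{N\,\mathrm{even}}\,C_N^{N/2}\big)$. Substituting it collapses \eqref{u-r-b-2} to the compact form $\bar{L}_1=(\mu_H-\mu_L)^2\big[\tfrac12-t\big(\tfrac12-p_H(1-p_H)\big)\big]$ with $t=\mathbb{1}_{N\,\mathrm{even}}\,C_N^{N/2}/2^N$. Since $t=0$ for odd $N$, $\bar{L}_1$ equals $\tfrac12(\mu_H-\mu_L)^2$ for every odd $N$; since $t>0$ and $p_H(1-p_H)\le\tfrac14<\tfrac12$ for even $N$, $\bar{L}_1$ sits strictly below $\tfrac12(\mu_H-\mu_L)^2$ there. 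Hence $\bar{L}_1$ oscillates with the parity of $N$, dropping at even $N$ and returning to $\tfrac12(\mu_H-\mu_L)^2$ at the neighboring odd $N$ (e.g.\ $N=2$ versus $N=3$), which is the claimed failure of monotonicity. For the lower bound, writing $x=p_H(1-p_H)\le\tfrac14$ and noting $t<1$ for all $N$, the inequality $\bar{L}_1>x(\mu_H-\mu_L)^2$ reduces to $(\tfrac12-x)(1-t)>0$, true because both factors are strictly positive. This gives $\bar{L}_1>p_H(1-p_H)(\mu_H-\mu_L)^2$, and as $p_H(1-p_H)>0$ is fixed while $(\mu_H-\mu_L)^2\to\infty$, $\bar{L}_1$ grows without bound for every $N$. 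I expect the parity bookkeeping in the tail identity to be the only genuine obstacle; the rest is a convexity argument plus a bias--variance expansion.
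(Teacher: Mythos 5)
Your proof is correct, and for the derivation of $\bar{a}_1$ in \eqref{a_1} and of the loss expression \eqref{u-r-b-2} it follows essentially the paper's route: convexity of the quadratic cost pins the action to the relevant posterior mean, and the loss comes from substituting into \eqref{u-r}; your bias--variance bookkeeping (per-outcome excess costs $(1-p_H)(\mu_H-\mu_L)^2$, $p_H(\mu_H-\mu_L)^2$, and $p_H(1-p_H)(\mu_H-\mu_L)^2$, weighted by the binomial probabilities of the three majority outcomes, which are independent of the realized PDF because each user's message depends only on his bias) is exactly the computation the paper leaves implicit. Where you genuinely diverge is the non-monotonicity claim: the paper proves it by a single numerical counterexample, computing $\bar{L}_1(N{=}1)=\tfrac12(\mu_H-\mu_L)^2$, $\bar{L}_1(N{=}2)=\tfrac38(\mu_H-\mu_L)^2$, $\bar{L}_1(N{=}3)=\tfrac12(\mu_H-\mu_L)^2$ at $p_H=0.5$, whereas you collapse \eqref{u-r-b-2} through the symmetric-tail identity to $\bar{L}_1=(\mu_H-\mu_L)^2\big[\tfrac12-t\big(\tfrac12-p_H(1-p_H)\big)\big]$ with $t=\mathbb{1}_{N\,\mathrm{even}}\,C_N^{N/2}/2^N$, showing the loss sits at $\tfrac12(\mu_H-\mu_L)^2$ for every odd $N$ and strictly below it for every even $N$. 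This parity argument establishes non-monotonicity for all $p_H\in(0,1)$, not just $p_H=\tfrac12$, so it is strictly more general than the paper's. For the lower bound the two arguments are equivalent in substance: the paper replaces $p_H$ and $1-p_H$ by the smaller quantity $p_H(1-p_H)$ term by term and uses $\sum_k C_N^k=2^N$, while you factor the collapsed form as $(\tfrac12-p_H(1-p_H))(1-t)>0$. One cosmetic remark: in your weighting sentence the list of excess costs (action $\mu_H$, action $\mu_L$, tie) and the list of probabilities ($L$-majority, $H$-majority, tie) appear in mismatched order; the final expression is unaffected only because $\sum_{k<N/2}C_N^k=\sum_{l>N/2}C_N^l$, so the intended correspondence is worth stating explicitly.
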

The proof of Lemma~\ref{lemma-bm-2} is given in Appendix A. In \eqref{a_1} of the benchmark 1, the platform acts consistently as the mean of PDF type with more supporting messages, and takes action as the expected mean if high- and low-PDF messages are counted equal. Since each biased user blindly messages his biased PDF type to the platform, its rating action under the majority-voting mechanism can be quite different from the actual service state. Accordingly, its system loss $\bar{L}_1$ in \eqref{u-r-b-2} can be arbitrarily large as the mean difference of PDFs $\mu_H-\mu_L$ tends to infinity, regardless of the crowdsoucing effect from many users. Note that our analysis of the performance of the majority-voting mechanism is consistent with the literature and real-world examples (e.g., \cite{tullock1959problems,saunders2006majority,M1}).

\subsection{Benchmark 2: The Platform's Blind Abandoning Scheme}

For the benchmark 2 in the crowdsourcing literature (e.g., \cite{james2020sybil,  zhao2023data,wang2020truth}), the platform abandons biased anonymous reviews and never chooses to learn from them, which is also known as the ``babbling equilibrium" in the cheap-talk literature \cite{farrell1996cheap}. Thus, the platform wants to blindly minimize its expected cost according to its initial belief on PDF as follows:
\begin{align}\label{a-bm-1}
   \min_a \int_{\theta \in \Theta} \sum_{j\in\{ L, H\} }Pr(\phi(\theta) = \phi_j(\theta)) c(a, \theta) \phi_j(\theta) d\theta.
\end{align}
We can solve \eqref{a-bm-1} because of $c(a,\theta)$'s convexity in $a$.
Together with \eqref{u-r}, we have the following.

\begin{lemma}\label{lemma-bm-1}
    At the benchmark 2 where the platform abandons biased anonymous reviews, it reaches the  ``babbling equilibrium" to always choose recommendation action:
    \begin{align*}
        \bar{a}_2 = p_H\mu_H + (1-p_H)\mu_L. 
    \end{align*}
    The resultant system loss is
    \begin{align}\label{u-r-b-1}
        \bar{L}_2 = p_H(1-p_H)(\mu_H\!-\!\mu_L)^2,
    \end{align}
    which can be arbitrarily large as $\mu_H-\mu_L \to \infty$.
\end{lemma}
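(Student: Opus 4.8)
The plan is to reduce the integral objective in \eqref{a-bm-1} to a single quadratic in $a$, minimize it, and then evaluate the system loss \eqref{POA} directly. First I would apply a bias--variance decomposition to each inner integral: expanding $(a-\theta)^2$ in \eqref{u_r_} and using the known mean $\mu_j$ and variance $\sigma_j^2$ of $\phi_j(\theta)$ gives
\[
\int_{\theta \in \Theta} (a - \theta)^2 \phi_j(\theta)\, d\theta = (a - \mu_j)^2 + \sigma_j^2, \quad j \in \{L, H\}.
\]
Substituting this into \eqref{a-bm-1} turns the objective into
\[
f(a) = p_H\big[(a-\mu_H)^2 + \sigma_H^2\big] + (1-p_H)\big[(a-\mu_L)^2 + \sigma_L^2\big],
\]
which is strictly convex in $a$ by the convexity of $c(a,\theta)$ noted just before \eqref{a-bm-1}.

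Next I would set $f'(a)=0$, i.e.\ $p_H(a-\mu_H) + (1-p_H)(a-\mu_L) = 0$, which yields the claimed babbling action $\bar{a}_2 = p_H\mu_H + (1-p_H)\mu_L$. Since the platform discards all reviews under blind abandoning, the messages carry no information and this is the unique equilibrium, so the maximization over the equilibrium set $E$ in \eqref{POA} is trivial and is attained at the single cost $\bar{c}^{e,*} = f(\bar{a}_2)$.

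To finish, I would evaluate $f(\bar{a}_2)$ and subtract $\bar{c}^{**}$ from \eqref{eq:scN}. The variance terms $p_H\sigma_H^2 + (1-p_H)\sigma_L^2$ appearing in $f(\bar{a}_2)$ coincide exactly with $\bar{c}^{**}$ and cancel, leaving the squared-deviation part $\bar{L}_2 = p_H(\bar{a}_2-\mu_H)^2 + (1-p_H)(\bar{a}_2-\mu_L)^2$. Using the identities $\bar{a}_2-\mu_H = -(1-p_H)(\mu_H-\mu_L)$ and $\bar{a}_2-\mu_L = p_H(\mu_H-\mu_L)$ and factoring out the common $(\mu_H-\mu_L)^2$, the two terms collapse to $p_H(1-p_H)(\mu_H-\mu_L)^2$, which is \eqref{u-r-b-1}. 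Finally, since $0 < p_H < 1$, the prefactor $p_H(1-p_H)$ is strictly positive and independent of the means, so $\bar{L}_2 \to \infty$ as $\mu_H-\mu_L \to \infty$. I do not expect a genuine obstacle here: the only points needing care are the clean cancellation of the variance contribution against $\bar{c}^{**}$ and the observation that the $\max_{e\in E}$ in \eqref{POA} is vacuous because the babbling equilibrium is unique.
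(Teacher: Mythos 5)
Your proposal is correct and follows essentially the same route the paper takes (the paper omits a formal proof, simply invoking the convexity of $c(a,\theta)$ in $a$ to solve \eqref{a-bm-1} and then substituting into \eqref{u-r} and \eqref{POA}): bias--variance decomposition, first-order condition giving $\bar{a}_2 = p_H\mu_H + (1-p_H)\mu_L$, and cancellation of the variance terms against $\bar{c}^{**}$ to leave $p_H(1-p_H)(\mu_H-\mu_L)^2$. Your explicit handling of the trivial $\max_{e\in E}$ and the requirement $0<p_H<1$ for the divergence claim are both sound.
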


At the benchmark 2, the platform always takes the same fixed action, regardless of biased users' reviews. 

\begin{corollary}\label{C-1}
    The platform always incurs greater system loss at the benchmark 1 than 2.
\end{corollary}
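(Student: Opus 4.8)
The plan is to exploit that both losses carry the common factor $(\mu_H-\mu_L)^2$, so the claim reduces to a scalar comparison. Writing $\bar{L}_1=(\mu_H-\mu_L)^2 c_1$ from \eqref{u-r-b-2} and $\bar{L}_2=(\mu_H-\mu_L)^2\,p_H(1-p_H)$ from \eqref{u-r-b-1}, it suffices to prove $c_1>p_H(1-p_H)$ for every $N\ge 1$, where
\begin{align*}
c_1=\frac{1}{2^N}\Big(p_H\!\!\sum_{k=0}^{\lceil\frac{N}{2}-1\rceil}\!\!C_N^k+(1-p_H)\!\!\!\sum_{l=\lfloor\frac{N}{2}+1\rfloor}^{N}\!\!\!C_N^l+\mathbb{1}_{\lfloor\frac{N}{2}\rfloor=\frac{N}{2}}\,p_H(1-p_H)C_N^{\frac{N}{2}}\Big).
\end{align*}
I note at the outset that this inequality is exactly the final bound asserted in Lemma~\ref{lemma-bm-2} (since $\bar{L}_2$ equals its right-hand side), so one legitimate route is simply to cite that bound together with \eqref{u-r-b-1}. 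I would instead give the short self-contained verification below.

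First I would invoke the symmetry $C_N^k=C_N^{N-k}$ to show that the two tail sums coincide; call their common value $T$. The terms weighted by $p_H$ and $1-p_H$ then merge, and since $p_H+(1-p_H)=1$ the expression collapses to $c_1=T/2^N+\mathbb{1}_{\lfloor N/2\rfloor=N/2}\,p_H(1-p_H)C_N^{N/2}/2^N$, eliminating the $p_H$-dependence of the dominant term.

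Next I would split on the parity of $N$. For odd $N$ there is no central binomial coefficient, so $2^N=2T$ and the indicator vanishes, giving $c_1=1/2$; since $p_H(1-p_H)\le 1/4<1/2$, the strict inequality is immediate. For even $N$ one has $2^N=2T+C_N^{N/2}$, hence $T=(2^N-C_N^{N/2})/2$ and, writing $w=p_H(1-p_H)$ and $r=C_N^{N/2}/2^N$, this yields $c_1=\tfrac12+r\,(w-\tfrac12)$. The target $c_1>w$ rearranges to $\tfrac{1-r}{2}>w(1-r)$; because $r<1$ for $N\ge 2$, dividing by $1-r>0$ reduces it to $w<\tfrac12$, which again holds as $w\le 1/4$.

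The only mildly delicate step is the bookkeeping for even $N$: I must check that the limits $\lceil N/2-1\rceil$ and $\lfloor N/2+1\rfloor$ partition $\{0,\dots,N\}$ into the two equal tails plus the single midpoint, so that $2^N=2T+C_N^{N/2}$ holds exactly, and that $r<1$ so the final division is valid. Everything else is routine arithmetic, and the fact that the bound already appears inside Lemma~\ref{lemma-bm-2} serves as a useful sanity check on the computation.
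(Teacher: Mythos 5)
Your proof is correct, but it runs along a genuinely different line from the paper's. The paper derives the corollary by combining Lemma~\ref{lemma-bm-1} with the final bound of Lemma~\ref{lemma-bm-2}, and that bound is obtained (Appendix~A) by a coefficient-wise estimate: since $p_H\in(0,1)$, each weight $p_H$ and $1-p_H$ in \eqref{u-r-b-2} is strictly larger than $p_H(1-p_H)$, so replacing both by $p_H(1-p_H)$ and observing that the three binomial sums together exhaust $\sum_{k=0}^{N}C_N^k=2^N$ immediately gives $\bar{L}_1>p_H(1-p_H)(\mu_H-\mu_L)^2=\bar{L}_2$. You instead evaluate the normalized loss $c_1$ exactly, using the symmetry $C_N^k=C_N^{N-k}$ to merge the two tails and a parity split to handle the central term, and then compare against $w=p_H(1-p_H)\le 1/4$. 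Both arguments are sound and your bookkeeping of the summation limits checks out ($T=2^{N-1}$ for odd $N$; $2T+C_N^{N/2}=2^N$ and $r<1$ for even $N\ge 2$). The paper's route is shorter; yours buys the exact value of $\bar{L}_1/(\mu_H-\mu_L)^2$, namely $\tfrac12$ for odd $N$ and $\tfrac12+\tfrac{C_N^{N/2}}{2^N}\bigl(p_H(1-p_H)-\tfrac12\bigr)$ for even $N$, which makes the non-monotonicity in $N$ noted in Lemma~\ref{lemma-bm-2} transparent, and it even dispenses with the restriction $p_H\in(0,1)$ that the paper's strict coefficient replacement requires. The only caveat, shared by both proofs, is that strictness fails in the degenerate case $\mu_H=\mu_L$.
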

As compared to the benchmark 2, at the benchmark 1 the platform's action may be quite different from the actual state since each biased user with majority advantage always messages his biased PDF type to the platform.  Thus, it incurs even greater loss than the benchmark 2. Like the benchmark 1, if the two PDFs' means are quite different, its fixed inference action incurs huge inaccuracy on actual state learning and its loss $\bar{L}_2$ in \eqref{u-r-b-1} can be arbitrarily large as the mean difference on PDFs $\mu_H-\mu_L$ tends to infinity. Therefore, we are well motivated to propose Bayesian game theoretic learning to improve from the next section. 

\section{Our Bayesian Game Theoretic Approach}\label{w3:S3-} 

In this section, we present our Bayesian game theoretic approach to negate cheap-talk for an arbitrary user number $N$. The platform asks for biased users' observations and strategically takes recommendation actions, which stimulates competition among biased users (of diverse biases) to persuade the platform of their own messages. Note that each user needs to take all the others' possible biases/messaging strategies into consideration when deciding his own messaging strategy, leading to more involved message combinations for the platform to strategically learn from. Here we adopt the widely used concept of PBE to analyze how best-acting users converge to a stable equilibrium. We first analyze each user's best strategy against any possible strategy combinations of the others. Then we combine their best response functions to find the fixed point as the stable PBE.

\subsection{Users' Strategy Simplification for Bayesian Game Theoretic Analysis}\label{w3:S3.1}

 The Bayesian game theoretic analysis is involved, as each user's messaging has multiple ways to mislead the platform's inference from his observed type to his privately biased one.  
Each user $i$ with bias $b_i$ and his observed service type $S$ will decide among four possible deterministic strategy candidates:
\begin{itemize}
    \item[1)] user $i$'s honest messaging strategy: 
    
    $m_i(S|b_i)$=$S$, $\forall S \in \{L, H\}$,
    \item[2)] user $i$'s blindly high-type messaging strategy: 
    
    $m_i(S|b_i)$=$H$, $\forall S \in \{L, H\}$,
    \item[3)] user $i$'s blindly low-type messaging strategy: 
    
    $m_i(S|b_i)$=$L$, $\forall S \in \{L, H\}$,
    \item[4)] user $i$'s reversed type messaging strategy:
    
    $m_i(S$=$H|b_i)$=$L$ and $m_i(S$=$L|b_i)$=$H$.
\end{itemize}

As in \cite{9364734, dipalantino2009crowdsourcing} and \cite{ghosh2012crowdsourcing}, here we reasonably consider users’ symmetric messaging strategy under the
same bias, i.e., $m_i(S_i|b_i) = m_j(S_j|b_j)$ if $b_i = b_j$ and
$S_i=S_j$.\footnote{Our analysis for PBE and system loss performance can be extended to the case of asymmetrically-messaging users, which will be presented in Appendix K. } 
Note that there are still 16 combinations of $N$ biased users' messaging strategies (either bias with 4 possible strategy candidates) even if we focus on users' symmetric messaging strategy under the same bias, yet our following lemma helps equivalently reduce the space to analyze users' equilibrium strategy.

\begin{lemma}\label{L11}
At the PBE of the dynamic Bayesian game with arbitrary $N$ users and the positive bias probability $q_+ \in (0, 1)$, only the following 6 out of the total 16 strategy combinations (SC) of $N$ users' messaging may occur:
\begin{itemize}
    \item[SC.1)] Blindly low-type messaging strategy for negatively-biased users (i.e., $m_i(S|b_i$=$-)$=$L$) and honest messaging strategy for positively-biased users  
    (i.e., $m_i(S|b_i$=$+)$=$S$, $\forall S \in \{L, H\}$). 

    \item[SC.2)] Honest messaging strategy for negatively-biased users (i.e., $m_i(S|b_i$=$-)$=$S$) and blindly high-type messaging strategy for positively-biased users (i.e., $m_i(S|b_i$=$+)$=$H$, $\forall S \in \{L, H\}$).
    
    \item[SC.3)] Blindly high-type messaging strategy for negatively-biased users (i.e., $m_i(S|b_i$=$-)$=$H$) and reversed messaging strategy for positively-biased users (i.e., $m_i(S$=$H|b_i$=$+)$=$L$ 
    and $m_i(S$=$L|b_i$=$+)$=$H$, $\forall S$$\in$$\{L, H\}$).

    

   \item[SC.4)] Reversed messaging strategy for negatively-biased users (i.e., $m_i(S$=$H|b_i$=$-)$=$L$,  $m_i(S$=$L|b_i$=$-)$=$H$) and blindly low-type messaging strategy for positively-biased users (i.e., $m_i(S|b_i$=$+)$=$L$, $\forall S \in \{L, H\}$).
    
    
\item[SC.5)] Reversed messaging strategy for negatively-biased users (i.e., $m_i(S$=$H|b_i$=$-)$=$L$, $m_i(S$=$L|b_i$=$-)$=$H$) and honest messaging strategy for positively-biased users (i.e., $m_i(S|b_i$=$+)$=$S$, $\forall S \in \{L, H\}$).


\item[SC.6)] Honest messaging strategy for negatively-biased users (i.e., $m_i(S|b_i$=$-)$=$S$) and reversed messaging strategy for positively-biased users (i.e., $m_i(S$=$H|b_i$=$+)$=$L$ and $m_i(S$=$L|b_i$=$+)$=$H$, $\forall j \in \{L, H\}$).    
    
\end{itemize}
\end{lemma}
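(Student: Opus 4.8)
The plan is to collapse the platform's Stage~II inference to a one-dimensional sufficient statistic and then run a best-response elimination over the finite strategy grid. Because messages are anonymous and we restrict to symmetric strategies under the same bias, everything the platform can learn is the number $k$ of $H$-messages among the $N$ users. For a candidate symmetric profile with negatively-biased rule $\sigma^-$ and positively-biased rule $\sigma^+$, I would first write the per-user probability of an $H$-message conditioned on the realized type as
\begin{align*}
  r_j = q_+ \,\mathbb{1}\{\sigma^+(j)=H\} + (1-q_+)\,\mathbb{1}\{\sigma^-(j)=H\}, \quad j \in \{L,H\},
\end{align*}
so that $k \mid \phi_j \sim \mathrm{Bin}(N, r_j)$. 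Substituting the two binomial likelihoods into the belief and then into the inference rule \eqref{a}, the platform's action reduces to $a^*(k)=\Pr(\phi_H\mid k)\,\mu_H+(1-\Pr(\phi_H\mid k))\,\mu_L$, a convex combination of $\mu_H$ and $\mu_L$.

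Next I would establish the monotonicity that drives the whole argument. The binomial likelihood ratio $(r_H/r_L)^{k}\big((1-r_H)/(1-r_L)\big)^{N-k}$ enjoys the monotone-likelihood-ratio property in $k$, so $\Pr(\phi_H\mid k)$, and hence $a^*(k)$, is strictly increasing in $k$ when $r_H>r_L$, strictly decreasing when $r_H<r_L$, and constant (the babbling action of Lemma~\ref{lemma-bm-1}) when $r_H=r_L$. Since a single user who switches his report from $L$ to $H$ raises $k$ by exactly one, this pins down incentives through \eqref{u-s}: a positively-biased user, who gains from a larger $a$, weakly prefers to report $H$ in every observed state when $r_H>r_L$ and to report $L$ in every state when $r_H<r_L$, with the negatively-biased user's incentives reversed. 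I would record these as per-type best-response directions indexed by $\mathrm{sgn}(r_H-r_L)$.

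With these two ingredients the proof becomes a finite verification. I would enumerate the $4\times 4=16$ symmetric combinations; for each I compute $(r_L,r_H)$ from the display above, read off $\mathrm{sgn}(r_H-r_L)$, and test both bias types' incentive constraints against the three alternative candidate rules, discarding any combination for which no belief system consistent with the PBE definition prevents a profitable unilateral deviation. The bookkeeping is routine but must be carried out case by case, and the claim is that exactly the six combinations SC.1--SC.6 survive while each of the other ten admits a profitable deviation by at least one bias type; the surviving profiles are precisely those in which one type's reports already discriminate $\phi_H$ from $\phi_L$ given the other type's behavior, so the remaining type is held to honest or reversed reporting.

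The main obstacle is that $\mathrm{sgn}(r_H-r_L)$ is itself a function of the profile being tested, so the consistency check is genuinely a fixed-point condition rather than an isolated computation. Two combinations (SC.5 and SC.6, where one type is honest and the other reversed) have $r_H-r_L \propto 1-2q_+$, so their inference direction flips with $q_+$ and they require separate sub-cases for $q_+<\tfrac{1}{2}$ and $q_+>\tfrac{1}{2}$. A second delicate point is that several combinations drive some $r_j$ to $0$ or $1$ when a type blindly reports a single symbol, placing some values of $k$ off the equilibrium path; there I would have to specify off-path beliefs compatible with the PBE definition before the deviation test is even meaningful, and choose them so as not to spuriously reward a deviation. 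Resolving the direction-dependence and the off-path beliefs uniformly across the enumeration is exactly what collapses the $16$-element grid to the stated six combinations.
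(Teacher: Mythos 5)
There is a genuine gap, and it sits exactly at the step you call the engine of the argument: the claim that monotonicity of $a^*(k)$ in $k$ pins down each bias type's best-response direction. That claim proves too much. Take SC.1 with $q_+=\tfrac{1}{2}$: there $r_L=0$ and $r_H=q_+>0$, so by your own MLR step $a^*(k)$ is increasing in $k$; concretely, $a^*(k\ge 1)=\mu_H$ while $a^*(0)<\mu_H$ (this is \eqref{a2-1}). A positively-biased user who observes $S=L$ then strictly gains by reporting $H$ instead of $L$: the resulting count $k=1$ is \emph{on path} (it occurs with positive probability under $\phi_H$), so Bayes' rule leaves the platform no freedom and it must play $\mu_H>a^*(0)$. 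The gain is strict both interim and after averaging over states as in \eqref{u-s}, and your proposed off-path belief selection cannot repair it because nothing here is off path. Your ``best-response direction indexed by $\mathrm{sgn}(r_H-r_L)$'' therefore forces every bias type to blindly report a single symbol whenever $r_H\neq r_L$, which eliminates honest and reversed reporting from every informative profile --- i.e., it kills SC.1--SC.4, precisely the combinations the lemma (and Proposition~\ref{prop-11}) asserts do survive. Your elimination procedure cannot terminate at the stated list of six.

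The paper's proof uses a different, coarser deviation test. It enumerates the ten remaining combinations, computes the platform's best-response action to each \emph{entire} combination in closed form (the analogues of \eqref{a2-1}--\eqref{a2-6}), and then compares a bias type's ex-ante utility $\bar{u}_j(b_i)$ \emph{across whole combinations}: a combination $j$ is discarded when some bias type is better off under a different combination $j'$, with the platform's inference recomputed for $j'$ --- e.g., the ``both types honest'' combination is killed by showing $\bar{u}_7(b_i=-)\le\bar{u}_1(b_i=-)$. In other words, a ``deviation'' there is a switch of the full four-candidate strategy by a bias class together with the induced change in the platform's inference rule, not a single-message unilateral deviation evaluated against the fixed equilibrium inference. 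If you want to salvage your route you must adopt that same deviation notion; your sufficient-statistic reduction and the binomial likelihoods remain useful for computing the $\bar{u}_j(b_i)$ in closed form, but the MLR monotonicity step cannot serve as the source of best responses.
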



The proof of Lemma~\ref{L11} is given in Appendix B. Lemma~\ref{L11} helps reduce the number of users' strategy combination candidates from 16 to 6. It shows that at the PBE, the negatively-biased users' blindly messaging will not happen simultaneously with the positively-biased users' blindly messaging.  The platform can still learn useful information even from these extremely-biased users.

For ease of exposition, in the remaining of this section we first focus on the case of users' symmetrically-distributed biases, i.e.,  $q_+$=$\frac{1}{2}$,
which will be extended to cover asymmetrically-distributed biases later in Section VI.
Denote set $I_j$=$\{i|$$m_i$=$j$,1$\leq$$i$$\leq$$N\}$ to include all users for messaging $j$ with $j$$\in$$\{L, H\}$. After receiving particular $|I_H|=k$  users' $H$ messages and $N-k$ users' $L$ messages, the platform updates its posterior state belief according to Bayes' theorem as follows:
\begin{align}\label{ul'-2}
    & Pr(\phi(\theta) \!=\! \phi_H(\theta)| |I_H|\!=\!k)  \\
    =& \frac{Pr(|I_H|=k|\phi(\theta) = \phi_H(\theta))Pr(\phi(\theta)=\phi_H(\theta))}{\sum\limits_{j\in\{L, H\}}\!\!\!\!\!Pr(|I_H|=k|\phi(\theta) = \phi_j(\theta))Pr(\phi(\theta)=\phi_j(\theta))}, \nonumber 
\end{align}
where we have
\begin{align*}
    &Pr(|I_H|\!\!=\!\!k|\phi(\theta) \!\!=\!\! \phi_H(\theta))\!\!  \\
    =& C_N^k (Pr(m_i \!\!=\!\! H|\phi(\theta) \!\!= \!\!\phi_H(\theta))^k 
     (Pr(m_{j} \!\!=\!\! L|\phi(\theta) \!\!=\! \!\phi_H(\theta)))^{N\!\!-\!\!k}
\end{align*}
and
the truthful reporting probability is
\begin{align}\label{ul-2}
   &Pr(m_i = H|\phi(\theta) = \phi_H(\theta))  \\
   =& \sum_{b \in \{-, +\}}\!\!\!\! Pr(m_i = H, b_i = b|\phi(\theta) = \phi_H(\theta)) \nonumber \\
  =& \sum_{b \in \{-,+\}} \!\!\!\! Pr(m_i = H|\phi(\theta) = \phi_H(\theta), b_i = b)Pr(b_i = b).\nonumber
\end{align}

\subsection{The Platform's Best-Response Inference Strategies to Users' Strategy Combinations SC.1-6}\label{w3:S3.2} 
Next, we examine the platform's best-response inference actions to the user's six possible strategy combinations, respectively, and will use it to proactively determine the users' strategy choice in Section IV-C. Assuming each user $i$ will adopt strategy combination SC.1 in Lemma~\ref{L11}, the platform expects: 
\begin{align}\label{ul1}
    &Pr(m_i = L|\phi(\theta) = \phi_j(\theta), b_i = - ) \\
   =&Pr(m_i = j|\phi(\theta) = \phi_j(\theta), b_i = + ) = 1, \ j \in \{L, H\}. \nonumber
\end{align}
Substituting \eqref{ul1} to \eqref{ul-2} and \eqref{ul'-2}, the platform's best-response inference action in \eqref{a} to strategy combination SC.1 is to infer
\begin{align}\label{a2-1}
    a_1^*(|I_H|=k) = 
    \begin{cases}
    \frac{p_H\frac{1}{2^N}\mu_H+(1-p_H)\mu_L}{p_H\frac{1}{2^N}+(1-p_H)}, &\!\!\text{if} \ k = 0, \\
        \mu_H, &\!\!\text{if} \ k \geq 1.     
    \end{cases}
\end{align}
Then the user's expected utility in \eqref{u-s} by using strategy combination SC.1 above is 
\begin{align}\label{u-1}
    &\bar{u}_1(b_i) = \nonumber \\
    &\begin{cases}
    -(p_H (\frac{1}{2})^{N - 1} + (1 - p_H))\frac{p_H(\frac{1}{2})^N\mu_H+(1-p_H)\mu_L}{p_H(\frac{1}{2})^N+(1-p_H)} & \\
    -p_H (1 - (\frac{1}{2})^{N - 1}) \mu_H, &\!\!\!\!\!\!\!\!\!\!\!\!\!\!\!\!\!\!\!\!\!\!\!\!\text{if} \ b_i \!= \!\!-,  \\
    p_H \mu_H \!+\! (1\!-\!p_H)\frac{p_H(\frac{1}{2})^N\mu_H+(1-p_H)\mu_L}{p_H(\frac{1}{2})^N+(1-p_H)}, &\!\!\!\!\!\!\!\!\!\!\!\!\!\!\!\!\!\!\!\!\!\!\!\!\text{if} \ b_i \!=\! +. 
    \end{cases}
\end{align}

Similar to the best response analysis to users' strategy combination SC.1, we obtain the platform's best-response actions to strategy combinations 2-6 in the following, respectively, 
\begin{align}
    &a_2^*(|I_H|=k) = \label{a2-3}
    \begin{cases}
     \frac{p_H\mu_H + (1-p_H)\frac{1}{2^N}\mu_L}{p_H + (1-p_H)\frac{1}{2^N}}, &\text{if} \ k = N, \\
    \mu_L, &\text{if} \ k < N.   
    \end{cases} \\
&a_3^*(|I_H|=k) = \label{a2-2}
    \begin{cases}
    \frac{p_H\frac{1}{2^N}\mu_H+(1-p_H)\mu_L}{p_H\frac{1}{2^N}+(1-p_H)}, &\!\!\text{if} \ k \!=\! N, \\
    \mu_H, &\!\!\text{if} \ k \!<\! N.    
    \end{cases} \\
    &a_4^*(|I_H|=k) = \label{a2-4} 
    \begin{cases}
    \mu_L, &\text{if} \ k \geq 1, \\
     \frac{p_H\mu_H + (1-p_H)\frac{1}{2^N}\mu_L}{p_H + (1-p_H)\frac{1}{2^N}}, &\text{if} \ k = 0.       
    \end{cases} \\
    &a_5^*(|I_H|=k) = p_H \mu_H+(1-p_H)\mu_L,  \label{a2-5} \\
    &a_6^*(|I_H|=k) = p_H \mu_H+(1-p_H)\mu_L.  \label{a2-6} 
\end{align}

\subsection{Final Messaging Strategy and Inference Choice at PBE}\label{w3:S3.3} 
After comparing each user $i$'s expected utilities 
under the platform's best-response inference actions in \eqref{a2-1} and \eqref{a2-3}-\eqref{a2-6} for different biases $b_i$=$-$ and $b_i$=$+$,
respectively, we are able to determine his strategy for any bias type and finalize PBE of the whole game.

\begin{table}[t]
\caption{PBE for our Bayesian game theoretic approach for arbitrary $N$ users.}
\label{t2-p6}
\begin{center}
\begin{small}
\begin{sc}
\begin{tabular}{|l|l|}
\hline
4 PBEs & Description  \\
\hline
\hline
 PBE.1 & $b_i$=- users: blindly low-type messaging \\
 (with  & $m_i^*(S|b_i \!= \!-) \!= \!L$, $\forall S \in \{L, H\}$.\\
 \cline{2-2}
SC.1 in  & $b_i$=+ users: honest messaging \hfill  \\
Lemma~\ref{L11})  & $m_i^*(S|b_i \!=\! +) \!=\! S$, $\forall S \in \{L, H\}$. \\
  \cline{2-2}
 & The Platform's inference 
 $a^*(m)$ in \eqref{a2-1}. \\

\hline
 PBE.2 &  $b_i$=- users:  honest messaging  \hfill \\
 (with &  $m_i^*(S|b_i \!= \!-) \!= \!S$, $\forall S \in \{L, H\}$.\\
 \cline{2-2}
 SC.2 in  &$b_i$=+ users: blindly high-type messaging   \\
Lemma~\ref{L11})  & $m_i^*(S|b_i \!=\! +) \!=\! H$, $\forall S \in \{L, H\}$. \\
  \cline{2-2}
 & The Platform's  inference 
 $a^*(m)$ in \eqref{a2-3}. \\

 \hline
 PBE.3 & $b_i$=- users: blindly high-type messaging  \\
(with   &  $m_i^*(S|b_i \!= \!-) \!= \!H$, $\forall S \in \{L, H\}$.\\
 \cline{2-2}
 SC.3 in & $b_i$=+ users: reversed messaging   \\
Lemma~\ref{L11})  & $m_i^*(S\!=\!L|b_i \!= \!+) \!= \!H$, $m_i^*(S\!=\!H|b_i \!=\! +) \!=\! L$. \\
  \cline{2-2}
 & The Platform's  inference 
 $a^*(m)$ in \eqref{a2-2}. \\
 
\hline

PBE.4 & $b_i$=- users: reversed messaging  \\
(with & $m_i^*(S\!=\!H|b_i \!= \!-) \!= \!L$, $m_i^*(S\!=\!L|b_i \!= \!-) \!= \!H$. \\
\cline{2-2}
SC.4 in &   $b_i$=+ users:  blindly low-type messaging   \\
Lemma~\ref{L11})  & $m_i^*(S|b_i \!=\! +) \!=\! L$, $\forall S \in \{L, H\}$. \\
  \cline{2-2}
 & The Platform's  inference 
 $a^*(m)$ in \eqref{a2-4}. \\
 
 \hline

\end{tabular}
\end{sc}
\end{small}
\end{center}
\end{table}

\begin{proposition}\label{prop-11}
 Given an arbitrary user number $N$$\geq$1, there are in total 4 PBEs in closed-form as summarized in Table~\ref{t2-p6}. There our Bayesian game theoretic approach manages to persuade positively-biased users' honest messaging in PBE.1 and negatively-biased users' honest messaging in PBE.2.
\end{proposition}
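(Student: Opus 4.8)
The plan is to check the two requirements in the definition of PBE directly for each of the six strategy combinations that survive Lemma~\ref{L11}, and then to argue that exactly SC.1--SC.4 satisfy both requirements while SC.5 and SC.6 degenerate. The platform-optimality requirement is already discharged by construction: the inference rules $a_1^*$ in \eqref{a2-1} and $a_2^*$--$a_6^*$ in \eqref{a2-3}--\eqref{a2-6} are derived from \eqref{a} together with the Bayes update \eqref{ul'-2}, so for each candidate profile the platform's action is by definition its cost-minimizing response to that profile. The whole burden therefore falls on the user-optimality (incentive-compatibility) requirement, and on ruling out any equilibrium beyond the four listed in Table~\ref{t2-p6}.

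My first move would be to exploit the linear utility $u_i(a,b_i)=b_i\cdot a$ in \eqref{u_s_}. Because the action is constant in $\theta$, the expected utility \eqref{u-s} collapses to $\bar u(b_i)=b_i\cdot E[a]$, where the expectation is over the state and the anonymous count $|I_H|=k$. Thus a positively-biased user simply maximizes the expected inference $E[a]$ and a negatively-biased user minimizes it, and every incentive check becomes a comparison of expected platform actions. For each SC I would fix the platform's corresponding inference rule and, for a representative user of each bias, compute $E[a]$ under the assigned strategy and under each of the other three candidates (honest, blindly-high, blindly-low, reversed), holding the remaining users on their conjectured symmetric strategies. A single user's switch only reshuffles the distribution of $k$, which under $q_+=\tfrac12$ is Binomial with parameter $\tfrac12$, so each $E[a]$ is a finite weighted sum of the piecewise-constant values of $a_j^*(k)$; the closed form \eqref{u-1} for SC.1 and its analogues for SC.2--SC.4 are precisely these sums.

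The consistency step then reduces to a family of algebraic inequalities in $N$, $p_H$, $\mu_H$ and $\mu_L$: for SC.1 I must show blindly-low is optimal for $b_i=-$ and honest is optimal for $b_i=+$, and symmetrically for SC.2--SC.4. I expect the \emph{main obstacle} to be the non-obvious honest (and reversed) best responses flagged in the statement --- for instance, that a positively-biased user in SC.1 cannot profit by switching to blindly-high messaging even though that would push more $H$-messages. The resolution hinges on the platform's inference being the response that is consistent with the profile actually in force (the fixed point the section sets out to compute): because the platform anticipates and discounts aggressive blind messaging under the alternative profile, the extra $H$-message fails to raise the inference enough to compensate, so honesty earns a credible high inference that blind reporting cannot. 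Pinning down this consistency and verifying the resulting inequalities uniformly over all $N\ge 1$ is where the analysis concentrates.

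Finally, to confirm there are no further equilibria I would treat SC.5 and SC.6 directly. Computing $Pr(m_i=H\mid\phi_H)$ and $Pr(m_i=H\mid\phi_L)$ under either profile yields the common value $\tfrac12$, so by \eqref{ul'-2} the posterior equals the prior for every $k$ and the action is the constant $p_H\mu_H+(1-p_H)\mu_L$ of \eqref{a2-5}--\eqref{a2-6}. This is exactly the babbling action of Benchmark~2 in Lemma~\ref{lemma-bm-1}, which carries no information; hence SC.5 and SC.6 do not yield distinct informative equilibria and only the four profiles of Table~\ref{t2-p6} survive. The honest-messaging assertions then read off immediately: SC.1 assigns honest messaging to positively-biased users (PBE.1) and SC.2 assigns it to negatively-biased users (PBE.2).
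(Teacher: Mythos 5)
Your plan for establishing that SC.1--SC.4 are equilibria is essentially the paper's: invoke Lemma~\ref{L11} to cut the candidate space to six profiles, note that the platform's rules \eqref{a2-1}, \eqref{a2-3}--\eqref{a2-6} are cost-minimizing by construction, use the linearity of \eqref{u_s_} to reduce every incentive check to a comparison of expected inference actions (the closed forms \eqref{u-1} and its analogues), and verify the resulting algebraic inequalities in $N$ and $p_H$. The paper carries this out by tabulating $\bar u_j(b_i)$ for all sixteen combinations and checking, for each of SC.1--SC.4, that the only potentially profitable switch for either bias type fails; e.g.\ for SC.2 the decisive inequality is $\bar u_2(b_i{=}+)-\bar u_6(b_i{=}+)=\frac{(1-p_H)p_H(\mu_H-\mu_L)}{1+(2^N-1)p_H}>0$ at $q_+=\tfrac12$. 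Your description of why honesty survives in SC.1 (the platform discounts blind messaging under the alternative profile, so the extra $H$-message buys nothing) is the right intuition for exactly these comparisons, though you leave the inequalities themselves unverified.

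The genuine gap is your elimination of SC.5 and SC.6. You correctly compute that under either profile $Pr(m_i{=}H\mid\phi_H)=Pr(m_i{=}H\mid\phi_L)=\tfrac12$, so the posterior equals the prior and the platform plays the constant action $p_H\mu_H+(1-p_H)\mu_L$ of \eqref{a2-5}--\eqref{a2-6}. But concluding that these profiles therefore ``do not yield distinct informative equilibria'' does not establish the proposition's claim that there are \emph{in total four} PBEs: uninformativeness is not a disqualification from equilibrium. On the contrary, when the inference is constant in the messages, no switch of messaging strategy changes any user's payoff, so the weak best-response condition is satisfied vacuously and your argument would leave SC.5 and SC.6 standing (as babbling equilibria in the sense of Lemma~\ref{lemma-bm-1}). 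The paper excludes them differently: it shows that positively-biased users strictly gain by abandoning the reversed-messaging strategy of SC.6 in favor of the blindly-high strategy of SC.2, with the strict gain $\frac{(1-p_H)p_H(\mu_H-\mu_L)}{1+(2^N-1)p_H}>0$ quantified above (and symmetrically for SC.5 via negatively-biased users and SC.1). You need to replace the ``babbling, hence not a further equilibrium'' step with this explicit strict-deviation computation, or the count of four is not proved.
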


The proof of Proposition~\ref{prop-11} is given in Appendix C. Comparing with Lemma~\ref{L11}, we find that only strategy combinations SC.1-4 appear at the PBE in Table~\ref{t2-p6}. A positively-biased user does not exhibit SC.5. The reason is that he cannot convince the platform of adopting his preferred high-PDF type even by honestly messaging, since a negatively-biased user disturbs to mislead the platform by reversely messaging. Similarly, SC.6 is symmetric to SC.5 and does not appear at the PBE.

Now let us explain the insights behind PBE.1-4 in Table~\ref{t2-p6}. At PBE.1 with users' SC.1, it is natural for each $b_i$=- user to blindly message his preferred low type. This, however, motivates the $b_i$=+ users to honestly message both high- and low-type realizations. Whenever the platform receives a message of high type, it is the truth and the platform trusts the messaged type. Thus, the $b_i$=+ users convince the platform to listen to them. Their honest messaging of unfavorable low-type does not harm themselves much,  since the platform may still infer high-type realization from the $b_i$=- users' blindly low-type messaging. Note that PBE.2 is just symmetric to PBE.1 to occur and has similar insights. 

We surprisingly note that at PBE.3 with users' SC.3, the $b_i$=- users blindly message the unfavorable high type and the $b_i$=+ users reversely message the observed type, which is quite counter-intuitive. This is because the $b_i$=- users compete with the $b_i$=+ users to message and persuade the platform. The $b_i$=- users pretend to be the $b_i$=+ ones by blindly messaging high type to disturb the real $b_i$=+ users' messaging to the platform.
Now honest messaging strategy cannot help the $b_i$=+ users convince the platform of high-type realization, and the $b_i$=+ users balance the voices at the platform side by reversely messaging his observed type. This at least convinces the platform of high-type realization when receiving low-type messages only from the reversely-messaging $b_i$=+ users. Their reversed messaging of unfavorable low-type does not harm themselves much, since the platform may still infer high-type realization from the $b_i$=- users' blindly high-type messaging. Note that PBE.4 is just symmetric to PBE.3 to occur and has similar insights.


Given 4 PBEs in Proposition~\ref{prop-11}, we pick the worst one to examine the robust performance of our Bayesian game theoretic learning approach and analyze the platform's system loss $\bar{L}$ from the optimum in \eqref{POA}.

\begin{theorem}\label{prop-12}
Given arbitrary $N$ users, our Bayesian game theoretic approach in Table~\ref{t2-p6} has the system loss
\begin{align}\label{N}
    &\bar{L}_{Bayesian} =  \\
    &\max\bigg\{  \frac{(1-p_H)p_H(\mu_H-\mu_L)^2}{p_H+2^N(1-p_H)}, \frac{(1-p_H)p_H(\mu_H-\mu_L)^2}{1-(1-2^N)p_H} \bigg\},   \nonumber
\end{align}
which decreases with the user number $N$ and has $\lim_{N\to\infty}\bar{L}_{Bayesian} = 0$.
 The incurred system loss is obviously less than $\bar{L}_1$ in \eqref{u-r-b-2} of the benchmark 1 and $\bar{L}_2$ in \eqref{u-r-b-1} of the benchmark 2. Particularly, its maximum loss reduction from $\bar{L}_2$ is $\frac{(\mu_H-\mu_L)^2}{8} \big(\frac{2^N \sqrt{2^N
   \left(2^N+8\right)}+3}{2^N-1}-2^N-5\big)$, which increases with the user number $N$.\footnote{According to Corollary~\ref{C-1}, the system loss of the benchmark 1 is always greater than that of the benchmark 2. We thus analyze our approach's loss reduction from the better benchmark 2.}
\end{theorem}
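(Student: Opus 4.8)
The plan is to turn the expected-cost integral \eqref{u-r} at each equilibrium into a finite weighted sum of squared deviations, exploiting the fact that every best-response action in Table~\ref{t2-p6} depends on the messages only through the count $k=|I_H|$. For a fixed PBE $e$ I would use $\int_{\Theta}(a-\theta)^2\phi_j(\theta)\,d\theta=(a-\mu_j)^2+\sigma_j^2$ to write
\begin{align*}
\bar{c}^{e,*}=\sum_{j\in\{L,H\}}Pr(\phi_j)\sum_{k=0}^{N}Pr(|I_H|=k\mid\phi_j)\big[(a_e^*(k)-\mu_j)^2+\sigma_j^2\big].
\end{align*}
Since the conditional counts sum to one, the variance terms collect to exactly $\bar{c}^{**}$ in \eqref{eq:scN}, so the per-equilibrium loss is the clean quantity $\bar{c}^{e,*}-\bar{c}^{**}=\sum_{j}Pr(\phi_j)\sum_{k}Pr(|I_H|=k\mid\phi_j)(a_e^*(k)-\mu_j)^2$, with all integration removed. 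This is the central simplification.

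Next I would read the conditional law of $k$ off each messaging profile. For PBE.1 the blindly-low/honest profile gives $Pr(m_i=H\mid\phi_H)=q_+=\tfrac12$ and $Pr(m_i=H\mid\phi_L)=0$, so $k\sim\mathrm{Bin}(N,\tfrac12)$ under $\phi_H$ and $k=0$ almost surely under $\phi_L$. Because $a_1^*(k)=\mu_H$ for all $k\ge1$ by \eqref{a2-1}, only $k=0$ contributes; substituting $a_1^*(0)$ into the two squared deviations and collecting over the common denominator $p_H+2^N(1-p_H)$ yields the first term $T_1$ of \eqref{N}, and the symmetric PBE.2 gives the second term $T_2$. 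A useful shortcut is that PBE.3 and PBE.4 reproduce these same two values: their reversed/blind profiles merely move the single informative count from $k=0$ to $k=N$, but the action there, $a_3^*(N)$ (resp.\ $a_4^*(0)$), coincides with $a_1^*(0)$ (resp.\ $a_2^*(N)$), so the losses are identical. Hence $\max_{e}\bar{c}^{e,*}-\bar{c}^{**}=\max\{T_1,T_2\}$, which is \eqref{N}.

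Monotonicity and the limit then follow at once: both denominators $p_H+2^N(1-p_H)$ and $(1-p_H)+2^Np_H$ strictly increase in $N$ and exceed $1$ for every $N\ge1$ (they equal $1$ at $2^N=1$ while $1-p_H,p_H>0$), so each term strictly decreases in $N$, is dominated by $\bar{L}_2$ in \eqref{u-r-b-1}, and vanishes as $N\to\infty$; combining with Corollary~\ref{C-1} gives $\bar{L}_{Bayesian}<\bar{L}_2<\bar{L}_1$.

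For the maximal reduction from $\bar{L}_2$ I would view the reduction as a function of $p_H$ with $N$ and $\mu_H-\mu_L$ fixed, write $\bar{L}_2-T_\ell=p_H(1-p_H)^2(2^N-1)(\mu_H-\mu_L)^2/\big(p_H+2^N(1-p_H)\big)$ on the relevant branch, and set its $p_H$-derivative to zero. The first-order condition reduces to the quadratic $2(2^N-1)p_H^2-3\cdot2^Np_H+2^N=0$, whose discriminant is exactly $2^N(2^N+8)$; back-substituting the admissible root and rationalizing the resulting surd produces the stated closed form, whose increase in $N$ is checked by treating $2^N$ as a continuous variable. The step I expect to be hardest is precisely this last one: one must first settle, as $p_H$ ranges over $(0,1)$, which of $T_1,T_2$ is the binding worst-case term on each sub-interval, and then push the root substitution and surd simplification through to the claimed expression without algebraic slips.
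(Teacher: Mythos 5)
Your proposal is correct and follows essentially the same route as the paper's Appendix D: evaluate the four equilibrium costs via the decomposition $\int_\Theta (a-\theta)^2\phi_j(\theta)\,d\theta=(a-\mu_j)^2+\sigma_j^2$ (with PBE.3/PBE.4 duplicating PBE.1/PBE.2), take the maximum to get \eqref{N}, compare to the benchmarks through $\bar{L}_2$, and obtain the maximal reduction by the first-order condition in $p_H$ — your quadratic $2(2^N-1)p_H^2-3\cdot 2^N p_H+2^N=0$ is exactly the equation whose admissible root the paper reports as $p_H^*$. If anything, you are more careful than the paper on the one delicate point you flag yourself, namely determining which of the two terms in the max binds on each $p_H$-subinterval before differentiating.
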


The proof of Theorem~\ref{prop-12} is given in Appendix D.  Note that our Bayesian game theoretic learning approach is an obvious improvement from the majority-voting scheme as the benchmark 1 in \eqref{u-r-b-2}, which cannot reach zero even crowdsourcing from infinitely many users. The reason is that our approach exploits 
 positively and negatively biased users to offset each other for the platform to learn useful information. As the user number $N$ increases to approach infinity, the platform learns the truth from their messages and achieves the optimum.

\section{New Time-evolving Commitment Mechanism Design for Truthful Messaging}\label{IV} 

According to Proposition~\ref{prop-11}, we find that either-biased users' blindly high- or low-type messaging strategy may still occur at the PBE. Besides, the system loss $\bar{L}_{Bayesian}$ in \eqref{N} can still be arbitrarily large if the mean difference of PDFs $\mu_H-\mu_L$ is huge and the user number N is small. This motivates us to further design a new mechanism to incentivize biased user's truthfully messaging and reduce the system loss (e.g., \cite{ye2020federated,li2021talk}). In this section, we consider the challenging scenario of one user with $N$=1.  We will first design truthful mechanisms for the platform to enable the user's honest messaging all the time, and then analyze the performance of the time-evolving mechanism to show the improvement against approaches mentioned in Sections III and IV.

\subsection{The Platform's One-Shot Commitment Mechanism Design} 
According to the Revelation Principle (e.g., \cite{boleslavsky2016evolving} and  \cite{gibbons1992game}), we focus on the platform's commitment mechanism for the user's truthfully messaging. We first look at the simplest one-shot messaging from the user and wonder if possible to design such a truthful mechanism. 
 
\begin{definition}[The Platform's One-Shot Commitment Mechanism]
The platform commits to inference action $a_j$ after observing $m = j$, $j \in \{L, H\}$. The case-dependent committed actions $a_L$ and $a_H$ should ensure that neither-biased user type obtains more expected utility by deviating from truthful messaging.
\end{definition}
How to determine the commitments $a_L$ and $a_H$ needs careful design. Unfortunately, we find that the above one-shot truthful (though exists) mechanism fails to improve from the benchmark 2 in Lemma~\ref{lemma-bm-1}.
\begin{lemma}\label{3-L}
    The platform's one-shot commitment actions for ensuring the user's truthful messaging are
    \begin{align*}
        a_L^* = a_H^* = p_H \mu_H + (1-p_H) \mu_L,
    \end{align*}
    which degenerates to the benchmark 2 and leads to the same poor system loss as $\bar{L}_2$ in \eqref{u-r-b-1}. 
\end{lemma}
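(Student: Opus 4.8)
The plan is to exploit the special structure of the user's payoff. The key observation is that the per-stage utility $u_i(a,b)=b\cdot a$ in \eqref{u_s_} depends on the platform's committed action only, and not on the realized state $\theta$ or on the observed type $S$. Hence, once the platform commits to taking action $a_m$ upon message $m\in\{L,H\}$, the expected utility of a user with bias $b$ who sends message $m$ collapses (after integrating the state out of \eqref{u-s}, using $\int_\Theta \phi_S(\theta)\,d\theta=1$) to simply $b\cdot a_m$, irrespective of his observation $S$. This reduces the entire incentive problem to comparing the two committed numbers $a_L$ and $a_H$.

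First I would write down the truthful-messaging (incentive-compatibility) constraints for each of the four $(b,S)$ combinations. A positively-biased user ($b=+$) prefers the larger committed action, so truthfulness requires $a_L\ge a_H$ when he observes $S=L$ and $a_H\ge a_L$ when he observes $S=H$; together these force $a_L=a_H$. Symmetrically, a negatively-biased user ($b=-$) prefers the smaller committed action, and his two observation cases again force $a_L=a_H$. Thus any one-shot commitment that stays truthful across both observation types of even a single bias must satisfy $a_L=a_H$; call this common value $a$.

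Next I would solve the platform's problem under this constraint. Since $a_L=a_H$, the message carries no information about the state: the posterior equals the prior, $Pr(\phi(\theta)=\phi_H(\theta)\mid m)=p_H$. The platform therefore chooses the constant $a$ minimizing $\int_\Theta c(a,\theta)\,\bigl(p_H\phi_H(\theta)+(1-p_H)\phi_L(\theta)\bigr)\,d\theta$; by convexity of $c(a,\theta)=(a-\theta)^2$ in $a$ this is the prior mean $a^*=E[\theta]=p_H\mu_H+(1-p_H)\mu_L$, giving $a_L^*=a_H^*=p_H\mu_H+(1-p_H)\mu_L$. This is exactly the action $\bar{a}_2$ of the benchmark 2, so by Lemma~\ref{lemma-bm-1} the resulting system loss equals $\bar{L}_2=p_H(1-p_H)(\mu_H-\mu_L)^2$, establishing the claim.

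The main obstacle is conceptual rather than computational: one must recognize that the linearity of the payoff in the action, combined with its independence of the true state, leaves the platform no lever to separate the two messages while keeping both observation types truthful. In particular, the equality $a_L=a_H$ is forced not by pitting the two bias types against each other but by demanding truthfulness across both observed types of a single bias, and it is worth noting that no asymmetric or boundary choice of $(a_L,a_H)$ escapes it, since the incentive inequalities above rule this out directly.
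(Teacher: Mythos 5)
Your proof is correct and follows essentially the same route as the paper's (Appendix E): write the incentive-compatibility constraints for truthful messaging conditional on each observed type, note that they reduce to $a_L\ge a_H$ and $a_H\ge a_L$ simultaneously (the paper states them as integrals against $\phi_L$ and $\phi_H$, which collapse to the same inequalities since $a_L,a_H$ are constants), conclude $a_L=a_H$, and then minimize the platform's expected cost over a single constant action to recover the prior mean and the benchmark-2 loss. Your added remark that a single bias type's two observation cases already force the equality is a slight sharpening but does not change the argument.
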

The proof of Lemma~\ref{3-L} is given in Appendix E.  As it is hard to ensure high efficiency in one shot, we next design the platform's time-evolving mechanism in the long run.

\subsection{The Platform's Time-Evolving Commitment Mechanism Design and Performance Analysis}

To incentivize the user's truthful messaging and distinguish with the benchmark 2, we consider the platform's commitment actions over a finite number of time periods with a time horizon $T\geq2$ such that the user's observed PDF type does not change. According to the Revelation Principle, in the initial period 0, the user with either bias truthfully messages his observed PDF type to the platform with no deviation for greater utility. Service state $\theta_k$ following the same PDF type in each period $k \in \{1, \cdots, T\}$ may vary and is revealed after the platform takes commitment action $a_k$. In the beginning of each period $k \in \{1, \cdots, T\}$, the platform, having state observations $h_{k-1} = (\theta_1, \cdots, \theta_{k-1}) \in \Theta_{k-1}$ in the past $k-1$ periods, commits to action $a_j(h_{k-1})$ if $m = j$ in period 0, $j \in \{L, H\}$.  We formally define this mechanism in the following.
\begin{definition}[The Platform's Time-Evolving Commitment Mechanism]\label{D-2}
The platform designs its time-evolving mechanism in its interaction with the user over multiple $T$$\geq$2 periods as follows:
\begin{itemize}
    \item Initial Period $0$: the user with private bias $b \in \{-, +\}$ observes the realized PDF type $S \in \{L, H\}$ of service state $\theta$ and truthfully messages $m(S|b)$=$S$ to the platform, $S \in \{L, H\}$.\footnote{The platform's commitment mechanism can also be extended to the case where the user faces noise and imperfectly observes the realized PDF. We can still design its commitment actions similarly as in the clear PDF-observation case to enable the biased user's truthful messaging all the time.} 
    \item Period $k \in \{1, 2, \cdots, T\}$: the platform commits to an inference action $a_j(h_{k-1})$ in the beginning of period $k$ according to its past observations on service states $h_{k-1}$=$(\theta_1, \cdots, \theta_{k-1})\in\Theta_{k-1}$ and its received message $m=j$, $j\in\{L,H\}$. After that the actual service state $\theta_k$ in period $k$ is revealed to the platform. Its objective is to minimize its expected total inference error/cost over $T$ periods. 
    
\end{itemize}
\end{definition}

According to Definition~\ref{D-2}, it is crucial to design the platform's time-evolving commitment actions $\{(a_L(h_{k-1}), a_H(h_{k-1}))\}_{k=1}^T$ to ensure that either-biased user type does not deviate from truthfully messaging of his observed PDF type. Similar to \eqref{u-s}, now the user with bias $b$ has expected utility per period according to the platform's state observation history $\{h_{k-1}\}_{k=1}^T$ as follows:
\begin{align}
    &\Bar{u}^T(b) = b \cdot \frac{1}{T} \sum_{k=1}^T  \int_{\Theta_{k-1}} \bigg(p_H \phi_H(h_{k-1}) a_H(h_{k-1}) \nonumber \\ 
    &+ (1-p_H)\phi_L(h_{k-1}) a_L(h_{k-1})\bigg) d h_{k-1}, b \in \{-, +\}, \label{u-s-m}
\end{align}
and the platform's expected cost per period is:
\begin{align}
    &\bar{c}^T = \frac{1}{T} \sum_{k=1}^T  \bigg(p_H \int_{\Theta_{k-1}}  \bigg( \int_{\theta \in \Theta}  (a_H(h_{k-1})-\theta)^2 \phi_H(\theta) d\theta \bigg) \nonumber \\
    &\phi_H(h_{k-1}) dh_{k-1} + (1-p_H) \int_{\Theta_{k-1}}  \bigg( \int_{\theta \in \Theta}  (a_L(h_{k-1})-\theta)^2 \nonumber  
\end{align}
\begin{align}
    &\ \ \ \ \ \ \ \ \ \ \ \ \ \ \ \ \ \ \ \ \ \ \ \ \ \ \ \ \ \ \phi_L(\theta) d\theta \bigg)\phi_L(h_{k-1}) dh_{k-1} \bigg),\label{u-r-md} 
\end{align}
where $h_{k-1} = (\theta_1, \cdots, \theta_{k-1}) \in \Theta_{k-1}$ denotes the platform's observed $k-1$ independent state realizations according to the PDF type distribution at the beginning of period $k$. $h_0$ denotes the null set with $\phi_H(h_0) = \phi_L(h_0) = 1$ by convention. 

Next, we design the commitment actions over time for the platform to ensure truthfulness and high efficiency, without knowing the user's bias type.  To ensure the positively-biased user's no deviation from messaging of true PDF to the other type, the platform needs to satisfy the following incentive-compatibility (IC) constraints according to \eqref{u-s-m} regarding low and high PDF realizations, respectively: 
\begin{align}
    &\sum_{k=1}^T \int_{\Theta_{k\!-\!1}} (a_L(h_{k\!-\!1})-a_H(h_{k\!-\!1})) \phi_L(h_{k\!-\!1}) dh_{k\!-\!1} \geq 0, \label{IC-1} \\
    &\sum_{k=1}^T \int_{\Theta_{k\!-\!1}} (a_H(h_{k\!-\!1})-a_L(h_{k\!-\!1})) \phi_H(h_{k\!-\!1}) dh_{k\!-\!1} \geq 0. \label{IC-4}
\end{align}
Similarly, to ensure the negatively-biased user's no deviation from messaging of true PDF to the other, the platform needs to satisfy the following IC constraints according to \eqref{u-s-m} regarding low and high PDF realizations, respectively:
\begin{align}
    &\sum_{k=1}^T \int_{\Theta_{k\!-\!1}} (a_H(h_{k\!-\!1})-a_L(h_{k\!-\!1})) \phi_L(h_{k\!-\!1}) dh_{k\!-\!1} \geq 0, \label{IC-2} \\
    &\sum_{k=1}^T \int_{\Theta_{k\!-\!1}} (a_L(h_{k\!-\!1})-a_H(h_{k\!-\!1})) \phi_H(h_{k\!-\!1}) dh_{k\!-\!1} \geq 0. \label{IC-3}
\end{align}

The platform's objective is to minimize its expected total cost/error in \eqref{u-r-md} over $T$ periods. Therefore, the platform aims to find best $a_L(h_{k-1})$ and $a_H(h_{k-1})$ for each $k \in \{1, \cdots, T\}$ by solving the following minimization problem:
\begin{align}\label{Problem}
    \min_{\{a_L(h_{k-1})\}_{k=1}^N, \{a_H(h_{k-1})\}_{k=1}^N} &\eqref{u-r-md} \nonumber \\
s.t. \ &\eqref{IC-1}, \eqref{IC-4}, \eqref{IC-2}, \eqref{IC-3}.
\end{align}
Note that the objective \eqref{u-r-md} is convex in each $a_j(h_{k-1})$, and constraints \eqref{IC-1}-\eqref{IC-3} are linear in each $a_j(h_{k-1})$, $j$$\in$$\{L,H\}$. To help present our mechanism, we introduce three notions $\Lambda(\theta)$, $\alpha$ and $\beta$ as follows:
\begin{align*}
    \Lambda(\theta)\! =\! \frac{\phi_H(\theta)}{\phi_L(\theta)}, \ \alpha = \int_{\theta \in \Theta} \frac{\phi_L^2(\theta)}{\phi_H(\theta)}d\theta, 
     \beta = \int_{\theta \in \Theta} \frac{\phi_H^2(\theta)}{\phi_L(\theta)}d\theta.
\end{align*}
Though \eqref{Problem} is a convex problem, it is still not easy to solve given the high dimensionality in time and involved integrals in all the constraints. In spite of this, we manage to solve everything in closed-form.  

\begin{theorem}\label{prop-14}
Our time-evolving mechanism in Definition~\ref{D-2} for the platform can ensure either-biased user's truthful messaging, by choosing the following commitment actions $a_L^*(h_{k-1})$ and $a_H^*(h_{k-1})$, $k \in \{1, \cdots, T\}$:
\begin{align*}
    &a_L^*(h_{k-1}) = \mu_L + w(h_{k-1})\left(\mu _H-\mu
   _L\right)  > \mu_L, \\
   &a_H^*(h_{k-1}) = \mu_H - w(h_{k-1}) \frac{1-p_H}{\Lambda(h_{k-1}) p_H}\left(\mu _H-\mu
   _L\right) < \mu_H, 
\end{align*}
where 
\begin{align*} 
    &w(h_{k-1}) \!\!=\!\! \frac{\Lambda(h_{k-1}) \lambda_H + \lambda_L}{2(\mu _H-\mu
   _L)(1-p_H)}, s_{\alpha} \!\!= \!\!\sum_{k=1}^T\alpha^{k\!-\!1}, s_{\beta} \!\!= \!\!\sum_{k=1}^T\beta^{k\!-\!1}, \\
    &\lambda_L \!\!=\!\!  \frac{2 (1\!-\!p_H) p_H^2 (-\mu_H s_{\beta}
   T\!+\!\mu_H T^2\!+\!\mu_L s_{\beta}
   T\!-\!\mu_L T^2)}{p_H^2 (s_{\alpha}\!-\!T) (s_{\beta}\!-\!T)\!-\!p_H (s_{\alpha}
   (s_{\beta}\!-\!2 T)\!+\!T^2)\!+\!T(T\!-\!s_{\alpha})}, \\
   &\lambda_H \!\!=\!\! \frac{2 \left(1\!-\!p_H\right)^2 p_H (-\mu_H s_{\alpha} T\!+\!\mu _H T^2\!+\!\mu_L s_{\alpha} T\!-\!\mu_L T^2)}{p_H^2 (s_{\alpha}\!-\!T) (s_{\beta}\!-\!T)\!-\!p_H (s_{\alpha}
   (s_{\beta}\!-\!2 T)\!+\!T^2)\!+\!T(T\!-\!s_{\alpha})}. 
\end{align*}
The resultant system loss per period from the optimum is
\begin{align}\label{u-r-m}
    &\bar{L}_{evolving} =  \\ 
    &\frac{(1\!-\!p_H) p_H T (\mu _H\!-\!\mu
   _L)^2 ((p_H-1) s_{\alpha }\!-\!p_H
   s_{\beta }\!+\!T)}{p_H^2 (s_{\alpha
   }\!-\!T) (s_{\beta}\!-\!T)\!-\!p_H (s_{\alpha } (s_{\beta }\!-\!2 T)\!+\!T^2)\!+\!T
   (T\!-\!s_{\alpha})}\!, \nonumber
\end{align}
which is greatly reduced from $\Bar{L}_1$ in \eqref{u-r-b-2} of the benchmark 1 and $\Bar{L}_2$ in \eqref{u-r-b-1} of the benchmark 2.
\end{theorem}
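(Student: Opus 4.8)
The plan is to solve the convex program \eqref{Problem} by first collapsing its four incentive-compatibility constraints into two linear equalities, then applying a Lagrangian/calculus-of-variations argument whose only non-trivial ingredient is a set of history-integral identities that reduce to powers of the scalars $\alpha$ and $\beta$. I would begin by noting that \eqref{IC-1} and \eqref{IC-2} have integrands that are exact negatives of one another (both weighted by $\phi_L(h_{k-1})$ and summed over the same $k$), so requiring both to be nonnegative forces $\sum_{k=1}^T\int_{\Theta_{k-1}}(a_L(h_{k-1})-a_H(h_{k-1}))\phi_L(h_{k-1})\,dh_{k-1}=0$; likewise \eqref{IC-3} and \eqref{IC-4} force $\sum_{k=1}^T\int_{\Theta_{k-1}}(a_H(h_{k-1})-a_L(h_{k-1}))\phi_H(h_{k-1})\,dh_{k-1}=0$. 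Thus the feasible set is exactly two scalar equalities. Since the objective \eqref{u-r-md} is convex (separable and strictly convex in each $a_j(h_{k-1})$) and these constraints are linear, any KKT point is the global optimum. I would also simplify the objective by evaluating the inner integral $\int_\Theta(a_j(h_{k-1})-\theta)^2\phi_j(\theta)\,d\theta=(a_j(h_{k-1})-\mu_j)^2+\sigma_j^2$, so the constant variance terms reproduce $\bar c^{**}$ in \eqref{eq:scN} and minimizing \eqref{u-r-md} becomes equivalent to minimizing the loss \eqref{u-r-m}.

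Next I would attach two scalar multipliers $\lambda_L,\lambda_H$ (one per equality, shared across all periods $k$ and all histories $h_{k-1}$, since there are only two scalar constraints) and minimize pointwise in each $a_L(h_{k-1})$ and $a_H(h_{k-1})$. The first-order conditions are linear and deliver $a_L^*(h_{k-1})$ and $a_H^*(h_{k-1})$ as affine functions of the likelihood ratio $\Lambda(h_{k-1})=\phi_H(h_{k-1})/\phi_L(h_{k-1})$, matching the stated forms through $w(h_{k-1})$. The crucial computational step is then to evaluate the two equalities on these candidate actions, for which I would exploit the product structure $\phi_j(h_{k-1})=\prod_{t=1}^{k-1}\phi_j(\theta_t)$ of the i.i.d. history. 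This yields $\int_{\Theta_{k-1}}\Lambda(h_{k-1})\phi_L(h_{k-1})\,dh_{k-1}=\int_{\Theta_{k-1}}\phi_H(h_{k-1})\,dh_{k-1}=1$, $\int_{\Theta_{k-1}}\frac{1}{\Lambda(h_{k-1})}\phi_L(h_{k-1})\,dh_{k-1}=\alpha^{k-1}$, $\int_{\Theta_{k-1}}\Lambda(h_{k-1})\phi_H(h_{k-1})\,dh_{k-1}=\beta^{k-1}$, and $\int_{\Theta_{k-1}}\frac{1}{\Lambda(h_{k-1})}\phi_H(h_{k-1})\,dh_{k-1}=1$. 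Summing over $k$ turns these into the geometric sums $s_\alpha=\sum_{k=1}^T\alpha^{k-1}$ and $s_\beta=\sum_{k=1}^T\beta^{k-1}$ together with $T$, which is exactly why these three quantities govern the closed form.

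Substituting the moments into the two equalities produces a $2\times2$ linear system in $\lambda_L,\lambda_H$; solving it yields the stated multipliers, whose common denominator is precisely the $2\times2$ determinant appearing in the theorem. I would then verify the regime $a_L^*(h_{k-1})>\mu_L$ and $a_H^*(h_{k-1})<\mu_H$ (equivalently $w(h_{k-1})>0$), which is what makes truthful messaging incentive-compatible for either bias, and finally substitute $a_L^*,a_H^*$ back into \eqref{u-r-m}. Squaring the affine actions produces $\Lambda^2$, $\Lambda$ and constant terms whose history integrals are again $\beta^{k-1}$ and $1$ for the $a_H$ branch (and $\alpha^{k-1}$ and $1$ for the $a_L$ branch), so summing over $k$ collapses everything to a closed form in $s_\alpha,s_\beta,T$. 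The improvement over the benchmarks \eqref{u-r-b-1} and \eqref{u-r-b-2} then follows by bounding this fraction, using that $s_\alpha,s_\beta$ grow with $T$ and that at $T=1$ the formula degenerates ($0/0$) to the one-shot case of Lemma~\ref{3-L}.

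I expect the main obstacle to be the infinite-dimensional, function-valued optimization combined with the constraint evaluation: the decision variables are entire functions $a_j(\cdot)$ over the growing history spaces $\Theta_{k-1}$, and the decisive insight is that histories enter only through $\Lambda(h_{k-1})$ and that the multidimensional integrals factorize into powers of the scalar constants $\alpha,\beta$. Once this reduction is secured, solving the $2\times2$ system for $\lambda_L,\lambda_H$ and simplifying the loss to \eqref{u-r-m} are heavy but entirely routine algebra.
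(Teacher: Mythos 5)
Your proposal is correct and follows essentially the same route as the paper's Appendix F: a Lagrangian with two scalar multipliers $\lambda_L,\lambda_H$, pointwise first-order conditions giving actions affine in $\Lambda(h_{k-1})$, factorization of the history integrals into $\alpha^{k-1}$, $\beta^{k-1}$, and $1$, and a $2\times2$ linear system for the multipliers. The only (cosmetic) difference is that you collapse the four IC constraints directly into two equalities by noting that \eqref{IC-1}/\eqref{IC-2} and \eqref{IC-3}/\eqref{IC-4} are mutual negations, whereas the paper relaxes to \eqref{IC-1} and \eqref{IC-3} and then argues they bind at the optimum — your observation is arguably the cleaner justification of the same fact.
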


The proof of Theorem~\ref{prop-14} is given in Appendix F. Intuitively, the platform takes the time-evolving action $a_L^*(h_{k-1})$ greater than $\mu_L$ when receiving the low PDF message and $a_H^*(h_{k-1})$ less than $\mu_H$ when receiving the high PDF message in any period $k$ to compensate the biased user in the expected sense in the long run. Thus, either-biased user type always truthfully messages his PDF observation. 

Without much loss of generality, we consider service state under each PDF type following normal distributions as $N(\mu_j, \sigma^2)$, where $j \in \{L, H\}$, to obtain clear engineering insights with a more explicit expression of $\bar{L}_{evolving}$ in \eqref{u-r-m}.



\begin{corollary}\label{coro-2}
    Under normally-distributed service state $\theta \sim N(\mu_j, \sigma^2)$, $j \in \{L, H\}$, the platform’s time-evolving mechanism has the following system loss from the optimum:
    \begin{align}\label{L-TE}
    &\bar{L}_{evolving}  \\
     =&\frac{1 }{ \frac{e^{T(\mu_H-\mu_L)^2/\sigma^2}-1}{T(\mu_H-\mu_L)^2(e^{(\mu_H-\mu_L)^2/\sigma^2}-1)} - \frac{1}{(\mu_H-\mu_L)^2}  + \frac{1/(p_H(1-p_H))}{(\mu_H-\mu_L)^2}},\nonumber
    \end{align}
    which decreases with the period number $T$ to learn from more historical data. Note $\lim_{T\to\infty}\bar{L}_{evolving} = 0$ and $\lim_{(\mu_H-\mu_L)\to\infty}\bar{L}_{evolving} = 0$, greatly improving from $\lim_{(\mu_H-\mu_L)\to\infty}\bar{L}_{Bayesian} = \infty$ for finite $N$ at the Bayesian game theoretic learning approach with $\bar{L}_{Bayesian}$ in \eqref{N},   and $\lim_{(\mu_H-\mu_L)\to\infty}\bar{L}_2 = \infty$ at the benchmark 2 with $\Bar{L}_2$ in \eqref{u-r-b-1}.
\end{corollary}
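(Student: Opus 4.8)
The plan is to specialize the general loss \eqref{u-r-m} from Theorem~\ref{prop-14} to the Gaussian case by evaluating $\alpha$ and $\beta$ in closed form, and then to exploit the resulting symmetry $\alpha=\beta$ to collapse the formula. First I would compute the two integrals. Writing $\phi_j(\theta)=\frac{1}{\sqrt{2\pi}\sigma}e^{-(\theta-\mu_j)^2/(2\sigma^2)}$ and completing the square in the exponent of $\phi_L^2(\theta)/\phi_H(\theta)$, the $\theta$-dependent part is again an unnormalized Gaussian in $\theta$, leaving a constant factor after integration; the same computation with $L$ and $H$ swapped gives the identical value because it depends only on $(\mu_H-\mu_L)^2$. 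Hence, writing $\delta:=(\mu_H-\mu_L)^2/\sigma^2$,
\begin{align*}
\alpha=\beta=e^{\delta},
\end{align*}
and the geometric sums become $s_\alpha=s_\beta=\sum_{k=1}^T e^{(k-1)\delta}=\frac{e^{T\delta}-1}{e^{\delta}-1}=:s$, which is finite and strictly larger than $T$ whenever $\mu_H\neq\mu_L$, since every summand past the first exceeds $1$.

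Next I would substitute $s_\alpha=s_\beta=s$ into \eqref{u-r-m}. The key simplification is that the numerator becomes $(1-p_H)p_H T(\mu_H-\mu_L)^2(T-s)$, while the denominator, after recognizing $s^2-2sT+T^2=(s-T)^2$, factors as $(T-s)\big(T-p_H(1-p_H)(T-s)\big)$. The common factor $(T-s)$ (nonzero since $s\neq T$) cancels, giving
\begin{align*}
\bar{L}_{evolving}=\frac{(1-p_H)p_H T(\mu_H-\mu_L)^2}{T-p_H(1-p_H)(T-s)}.
\end{align*}
Taking reciprocals and splitting the fraction term by term converts the denominator into the three-term sum $\frac{1/(p_H(1-p_H))}{(\mu_H-\mu_L)^2}-\frac{1}{(\mu_H-\mu_L)^2}+\frac{s}{T(\mu_H-\mu_L)^2}$; inserting $s=\frac{e^{T\delta}-1}{e^{\delta}-1}$ into the last term reproduces exactly \eqref{L-TE}.

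The final step is monotonicity and the limits. The only $T$-dependent term in $1/\bar{L}_{evolving}$ is $\frac{e^{T\delta}-1}{T(\mu_H-\mu_L)^2(e^{\delta}-1)}$, so I would show $g(T)=(e^{T\delta}-1)/T$ is increasing: treating $T$ as continuous, the numerator of $g'(T)$ is $e^{T\delta}(T\delta-1)+1$, which vanishes at $T\delta=0$ and has derivative $T\delta\,e^{T\delta}>0$ in the variable $T\delta$, hence is positive for $T\delta>0$. Thus $1/\bar{L}_{evolving}$ increases and $\bar{L}_{evolving}$ decreases in $T$. For the limits, as $T\to\infty$ the exponential $e^{T\delta}$ dominates the linear $T$, so $g(T)\to\infty$ and $\bar{L}_{evolving}\to0$; as $\mu_H-\mu_L\to\infty$ we have $\delta\to\infty$ and the same term behaves like the exponential $e^{(T-1)\delta}$ over the polynomial $T(\mu_H-\mu_L)^2$, which diverges for $T\geq2$, forcing $\bar{L}_{evolving}\to0$ while the remaining two reciprocal terms vanish. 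The contrasting blow-ups $\lim_{(\mu_H-\mu_L)\to\infty}\bar{L}_{Bayesian}=\infty$ (finite $N$) and $\lim_{(\mu_H-\mu_L)\to\infty}\bar{L}_2=\infty$ then follow immediately by inspection of \eqref{N} and \eqref{u-r-b-1}, whose numerators grow like $(\mu_H-\mu_L)^2$ against fixed denominators.

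I expect the Gaussian integral evaluation (recognizing the chi-square-type identity $\int \phi_L^2/\phi_H=e^{(\mu_H-\mu_L)^2/\sigma^2}$) and the $(T-s)$ cancellation enabled by $\alpha=\beta$ to be the substantive steps; everything afterward is bookkeeping and elementary growth-rate comparisons.
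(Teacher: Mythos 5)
Your proposal is correct and follows exactly the route the paper intends: compute $\alpha=\beta=e^{(\mu_H-\mu_L)^2/\sigma^2}$ for the Gaussian case, substitute $s_\alpha=s_\beta=s$ into \eqref{u-r-m}, cancel the common factor $(T-s)$ after recognizing $(s-T)^2$ in the denominator, and read off the monotonicity and limits from the reciprocal form. All the intermediate algebra (the value of the Gaussian integrals, the factorization $(T-s)\bigl(T-p_H(1-p_H)(T-s)\bigr)$, and the positivity of the numerator of $g'(T)$) checks out.
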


In our carefully designed time-evolving mechanism,  the platform takes the mean difference of PDFs $\mu_H-\mu_L$ into consideration when determining its evolving commitment actions $a_L^*(h_{k-1})$ and $a_H^*(h_{k-1})$ in Theorem~\ref{prop-14}. The system loss under the time-evolving mechanism thus vanishes instead of growing to infinity as in the Bayesian game theoretic learning approach for finite $N$ and the benchmark 2. Besides, as the period number $T$ becomes large, the platform gradually learns the real PDF with no loss.

\begin{figure}
    \centering
    \includegraphics[scale=0.335]{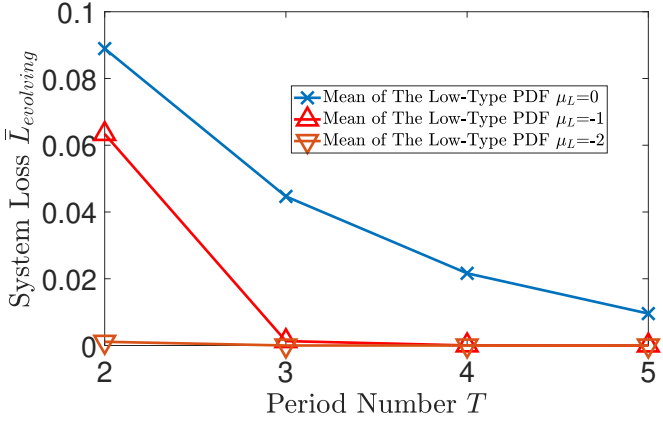}
    \caption{The system loss $\bar{L}_{evolving}$ in \eqref{u-r-m} of the time-evolving mechanism versus the period number $T$ and the mean of low-type PDF $\mu_L$. Here we set $p_H = 0.3$, choose $\phi_H(\theta)$ and $\phi_L(\theta)$ as normal distribution PDFs with $\mu_H$=1 and $\sigma^2_H$=$\sigma^2_L$=1.}
    \label{Fig.22-7}
\end{figure}

\begin{figure}
    \centering
    \includegraphics[scale=0.335]{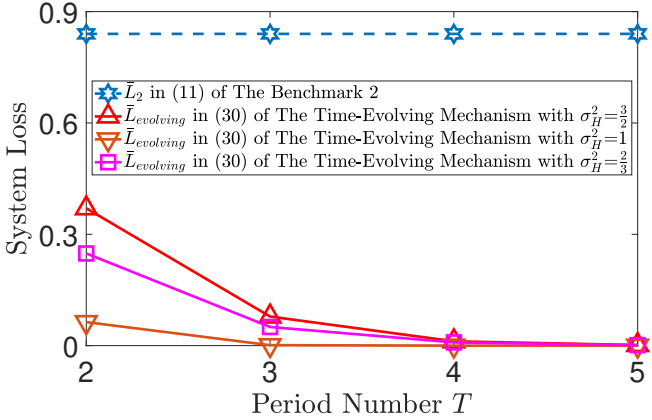}
    \caption{The system loss $\bar{L}_2$ in \eqref{u-r-b-1} of the benchmark 2 and $\bar{L}_{evolving}$ in \eqref{u-r-m} of the time-evolving mechanism, versus the high-PDF variance $\sigma_H^2$ and the period number $T$, respectively. Here we set $p_H = 0.3$, choose $\phi_H(\theta)$ and $\phi_L(\theta)$ as normal distribution PDFs with $\mu_H$=1, $\mu_L$=$-$1 and $\sigma^2_L$=1.}
    \label{Fig.22}
\end{figure}

 Figure~\ref{Fig.22-7} numerically shows that with the time-evolving mechanism, the system loss even decreases as the difference between means of PDFs $\mu_H-\mu_L$ increases, and it tends to the minimum under the time-evolving mechanism as the period number $T$ or $\mu_H-\mu_L$ increases, which is consistent with Corollary~\ref{coro-2}.

 \begin{figure}
    \centering
    \includegraphics[scale=0.335]{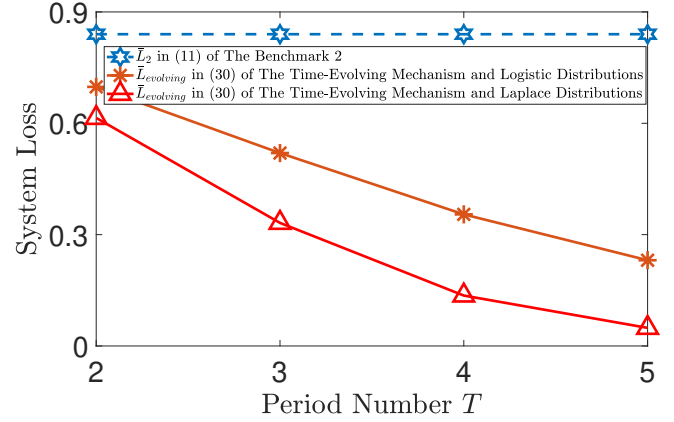}
    \caption{The system loss $\bar{L}_2$ in \eqref{u-r-b-1} of the benchmark 2 and $\bar{L}_{evolving}$ in \eqref{u-r-m} of the time-evolving mechanism, versus the period number $T$. Here we set $p_H = 0.3$, choose $\phi_H(\theta)$ and $\phi_L(\theta)$ as Logistic distribution PDFs Logistic($\mu$,$s$) with $\mu_H$=1, $\mu_L$=$-$1, $s_H=\frac{3}{2}$ and $s_L$=1,  and Laplace distribution PDFs Laplace($\mu$,$b$) with $\mu_H$=1, $\mu_L$=$-$1, $b_H=\frac{3}{2}$, $b_L$=1.}
    \label{Fig.22-6}
\end{figure}

Other than Corollary~\ref{coro-2}, one may wonder about the time-evolving mechanism's performance under asymmetric variance $\sigma_H^2 \ne \sigma_L^2$ for high- and low-type PDFs. Figure~\ref{Fig.22} numerically shows that given asymmetric variances $\sigma^2_H \ne \sigma^2_L$ of state distributions, the system loss $\bar{L}_{evolving}$ is still obviously smaller under the time-evolving mechanism than the benchmark 2, and it tends to zero as $T$ increases over 5,  which echoes with Corollary~\ref{coro-2}. 
As the variance $\sigma_H^2$ decreases from $\frac{3}{2}$ to $\sigma_H^2$=1, the two high- and low-type PDFs tend to vary similarly, making it easy for the platform to design commitment actions for satisfying the user's IC constraints \eqref{IC-1}-\eqref{IC-3}. Its loss is thus reduced. However, as the variance $\sigma_H^2$ keeps decreasing from 1 to $\frac{2}{3}$, the two PDFs tend to vary differently, making it hard for the platform to design commitment actions satisfying the user's IC constraints. Its loss is thus increased.

Note that our time-evolving mechanism in Theorem~\ref{prop-14} is designed for any state distribution and the system loss $\bar{L}_{evolving}$ in \eqref{u-r-m} is generally suitable for any distribution. Other than the normal distribution, Figure~\ref{Fig.22-6} numerically shows that our time-evolving mechanism's system loss under Laplace and Logistic distributions of service state, under which the system loss also tends to the optimum as the period number $T$ increases, which is consistent with Corollary~\ref{coro-2}.  Note that under Laplace and Logistic distributions, respectively,  $\bar{L}_{evolving}$ varies similarly versus the mean of low-type PDF $\mu_L$ as shown in Figure~\ref{Fig.22-7}, and also varies similarly versus variance-related parameters $b_H$ and $s_H$ of high-type PDF as shown in Figure~\ref{Fig.22}.

\subsection{Comparison with The Bayesian Game Theoretic Approach}

One may want to compare how the time-evolving mechanism performs with the Bayesian game theoretic learning approach for multiple users. After comparing the system loss in \eqref{L-TE} and \eqref{N}, we have the following.
\begin{proposition}\label{coro-3}
   Under normally-distributed service state $\theta \sim N(\mu_j, \sigma^2)$, $j \in \{L, H\}$, the platform incurs smaller system loss with the time-evolving mechanism than the Bayesian game theoretic learning approach from multiple users if and only if
   \begin{align*}
       N \leq \max \bigg\{& \log_2 \bigg( p_H \frac{e^{T(\mu_H-\mu_L)^2/\sigma^2}-1}{T(e^{(\mu_H-\mu_L)^2/\sigma^2}-1)} + (1-p_H)\bigg), \\
       &\log_2 \bigg( (1-p_H) \frac{e^{T(\mu_H-\mu_L)^2/\sigma^2}-1}{T(e^{(\mu_H-\mu_L)^2/\sigma^2}-1)} + p_H\bigg) \bigg\},
   \end{align*}
   where the right-handed side of the above inequalities increases with $T$.
\end{proposition}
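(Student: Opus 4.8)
The plan is to substitute the normal-distribution loss \eqref{L-TE} and the Bayesian loss \eqref{N} into the inequality $\bar{L}_{evolving}\le\bar{L}_{Bayesian}$ and reduce it to a single threshold on $2^N$. I would first abbreviate
\begin{equation*}
R:=\frac{e^{T(\mu_H-\mu_L)^2/\sigma^2}-1}{T\big(e^{(\mu_H-\mu_L)^2/\sigma^2}-1\big)},
\end{equation*}
so that factoring $\tfrac{1}{(\mu_H-\mu_L)^2}$ out of the denominator of \eqref{L-TE} yields the compact form $\bar{L}_{evolving}=\frac{(\mu_H-\mu_L)^2}{R-1+\frac{1}{p_H(1-p_H)}}$. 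On the Bayesian side, writing $D_1:=p_H+2^N(1-p_H)$ and $D_2:=(1-p_H)+2^Np_H$, the two candidates inside the $\max$ of \eqref{N} share the numerator $p_H(1-p_H)(\mu_H-\mu_L)^2$, so $\bar{L}_{Bayesian}=\frac{p_H(1-p_H)(\mu_H-\mu_L)^2}{\min\{D_1,D_2\}}$. Both losses carry the same positive factor $(\mu_H-\mu_L)^2$ (assuming the nontrivial case $\mu_H>\mu_L$); cancelling it and cross-multiplying, which is legitimate since every quantity is positive, turns the comparison into the purely algebraic inequality $\min\{D_1,D_2\}\le p_H(1-p_H)(R-1)+1$.

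I would then resolve the $\min$ and $\max$ operators through a single case split on $p_H$ versus $\tfrac12$. The enabling observation is $R\ge1$: the geometric-series identity rewrites $R=\tfrac1T\sum_{k=0}^{T-1}e^{k(\mu_H-\mu_L)^2/\sigma^2}$ as the average of $T$ terms each at least $1$ (strictly above $1$ once $T\ge2$). Setting $X_1:=p_HR+(1-p_H)$ and $X_2:=(1-p_H)R+p_H$, a direct computation gives $D_1-D_2=(2p_H-1)(1-2^N)$ and $X_1-X_2=(2p_H-1)(R-1)$; since $2^N>1$ for $N\ge1$ and $R\ge1$, the sign of $2p_H-1$ selects $\min\{D_1,D_2\}$ and $\max\{X_1,X_2\}$ consistently. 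When $p_H\ge\tfrac12$ the minimum is $D_1$, and substituting it then dividing by $1-p_H>0$ collapses the algebraic inequality to $2^N\le p_HR+(1-p_H)=X_1=\max\{X_1,X_2\}$; when $p_H\le\tfrac12$ the minimum is $D_2$, and dividing by $p_H>0$ gives $2^N\le(1-p_H)R+p_H=X_2=\max\{X_1,X_2\}$. In either case the comparison is exactly $2^N\le\max\{X_1,X_2\}$, and applying the increasing map $\log_2$ recovers the stated bound $N\le\max\{\log_2X_1,\log_2X_2\}$.

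For the monotonicity assertion I would reuse the running-average representation $R(T)=\tfrac1T\sum_{k=0}^{T-1}e^{k(\mu_H-\mu_L)^2/\sigma^2}$. Because the summands strictly increase in $k$, the newly appended term $e^{T(\mu_H-\mu_L)^2/\sigma^2}$ exceeds the current average $R(T)$, so $R(T+1)=\frac{TR(T)+e^{T(\mu_H-\mu_L)^2/\sigma^2}}{T+1}>R(T)$; thus $R$ is increasing in $T$. As $X_1$ and $X_2$ are increasing affine functions of $R$ and $\log_2$ is increasing, both $\log_2X_1$ and $\log_2X_2$, and hence their maximum, increase with $T$. The only genuinely delicate step in the whole argument is the alignment of the two extremal operators: one must confirm that the single threshold $p_H=\tfrac12$ governs $\min\{D_1,D_2\}$ and $\max\{X_1,X_2\}$ simultaneously, which is exactly what $R\ge1$ together with $2^N>1$ guarantees. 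Once that alignment is verified, each case reduces to one division and the remainder is routine.
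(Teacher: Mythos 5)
Your proposal is correct and follows the only natural route, which is also the paper's: the paper offers no separate appendix proof for this proposition and simply asserts it "after comparing the system loss in \eqref{L-TE} and \eqref{N}," and your argument is exactly that comparison carried out in full — cancelling $(\mu_H-\mu_L)^2$, reducing to $\min\{D_1,D_2\}\le p_H(1-p_H)(R-1)+1$, and using $R\ge 1$ together with $2^N>1$ to show the single threshold $p_H=\tfrac12$ aligns the $\min$ over denominators with the $\max$ in the stated bound. The geometric-series representation $R=\tfrac1T\sum_{k=0}^{T-1}e^{k(\mu_H-\mu_L)^2/\sigma^2}$ also cleanly delivers the monotonicity in $T$, so the write-up is complete and sound.
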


\begin{figure}
    \centering
    \includegraphics[scale=0.335]{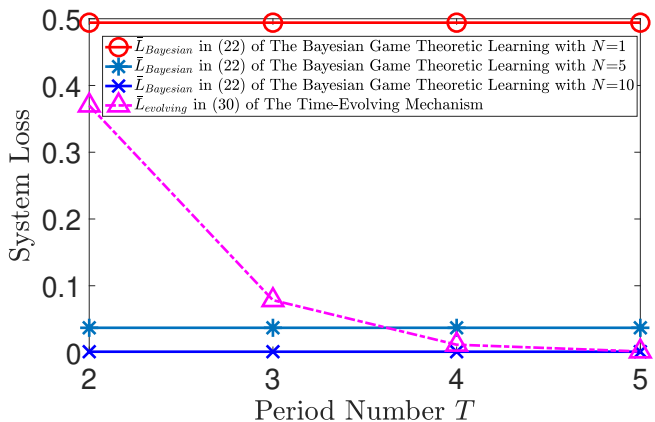}
    \caption{The system loss $\bar{L}_{Bayesian}$ in \eqref{N} of the Bayesian game theoretic learning approach and $\bar{L}_{evolving}$ in \eqref{u-r-m} of the time-evolving mechanism, versus the period number $T$ and the user number $N$, respectively. Here we set $p_H = 0.3$, choose $\phi_H(\theta)$ and $\phi_L(\theta)$ as normal distribution PDFs with $\mu_H$=1, $\mu_L$=-1, $\sigma^2_H$=$\frac{3}{2}$, and $\sigma^2_L$=1.}
    \label{Fig.23}
\end{figure}

Figure~\ref{Fig.23} numerically shows that the Bayesian game theoretic learning approach for multiple users tends to outperform the time-evolving mechanism only if $N$ becomes large, where the latter's performance improves with the period number $T$. This is consistent with Corollary~\ref{coro-3}.

\section{Extension to Users' Asymmetrically-Distributed Biases}




Recall that we have assumed users' symmetrically-distributed biases with $q_+=\frac{1}{2}$ in Section IV. Now we extend to asymmetrically-distributed biases for further analysis. Note that users with different biases may exhibit different/asymmetric strategies due to unequal probability of appearance. The PBE analysis is more involved, depending on the distribution of the two biases. For ease of exposition, in the following we first look at a single user's case for deriving PBE in the closed-form under our Bayesian game theoretic learning approach, and then extend to arbitrary $N$ via numerical analysis. 

\begin{table}[t]
\caption{PBE versus high-type PDF probability $p_H$ in small positively-biased probability $q_+$$<$$\frac{3-\sqrt{5}}{2}$ in the one-user case.}
\label{t1:prop-1}
\begin{center}
\begin{small}
\begin{sc}
\begin{tabular}{|l|l|}
\hline
\multicolumn{1}{|c|}{$p_H$ regime} & \multicolumn{1}{c|}{PBE results} \\
\hline
\hline
Small &  \multicolumn{1}{c|}{ PBE.2 (with SC.2 in Lemma~\ref{L11})} \\
$p_H \!\!\in\![0, p_{1, H}] $ & $b$=- user: honest messaging \\ 
& \multicolumn{1}{c|}{$m^*(S|b \!= \!-) \!= \!S$, $\forall S \in \{L, H\}$.}\\
& $b$=+ user: blindly high-type messaging \\
& \multicolumn{1}{c|}{$m^*(S|b \!=\! +) \!=\! H$, $\forall S \in \{L, H\}$.} \\
& The Platform's inference  $a^*(m)$ in \eqref{a-3}.  \\
\cline{2-2} 
& \multicolumn{1}{c|}{ PBE.4 (with SC.4 in Lemma~\ref{L11})}  \\
 & $b$=- user: reversed messaging  \\
 & \multicolumn{1}{c|}{$m^*(S=H|b \!= \!-) \!= \!L$,}\\
& \multicolumn{1}{c|}{$m^*(S=L|b \!= \!-) \!= \!H$.} \\
 &$b$=+ user: blindly low-type messaging \\
& \multicolumn{1}{c|}{$m^*(S|b \!=\! +) \!=\! L$, $\forall S \in \{L, H\}$.} \\
&The Platform's inference $a^*(m)$ in \eqref{a-4}.  \\
\hline
Large 
& \multicolumn{1}{c|}{ PBE.1 (with SC.1 in Lemma~\ref{L11})}  \\
$p_H \!\!\in\![p_{1, H}, 1]$ & $b$=- user: blindly low-type messaging  \\
& \multicolumn{1}{c|}{$m^*(S|b \!= \!-) \!= \!L$, $\forall S \in \{L, H\}$.}\\ 
 &$b$=+ user: honest messaging\\
& \multicolumn{1}{c|}{$m^*(S|b \!=\! +) \!=\! S$, $\forall S \in \{L, H\}$.} \\
&The Platform's inference $a^*(m)$ in \eqref{a-1}.  \\
\cline{2-2} 
& \multicolumn{1}{c|}{ PBE.2 (with SC.2 in Lemma~\ref{L11})}  \\
\cline{2-2} 
& \multicolumn{1}{c|}{ PBE.3 (with SC.3 in Lemma~\ref{L11})}  \\
 &$b$=- user: blindly high-type messaging  \\
 & \multicolumn{1}{c|}{$m^*(S|b \!= \!-) \!= \!H$, $\forall S \in \{L, H\}$.}\\
 &$b$=+ user: reversed messaging \\
& \multicolumn{1}{c|}{$m^*(S=L|b \!= \!+) \!= \!H$,} \\
& \multicolumn{1}{c|}{$m^*(S=H|b \!=\! +) \!=\! L$.} \\
&The Platform's inference $a^*(m)$ in \eqref{a-2}.  \\
\cline{2-2}
& \multicolumn{1}{c|}{ PBE.4 (with SC.4 in Lemma~\ref{L11})} \\
\hline
\end{tabular}
\end{sc}
\end{small}
\end{center}
\end{table}

\subsection{Bayesian Game Theoretic Learning Approach for $N$=1 in Closed-Form PBE}

In this subsection, we investigate the one-user case under any asymmetrically-distributed biases, i.e., $q_+$$\ne$$\frac{1}{2}$, and we skip the index in $b_i$ and $m_i$. By adding six possible bias-messaging combinations of the user to analysis in Section~\ref{w3:S3.2}, we can obtain the platform's best-response Bayesian game theoretic learning actions in \eqref{a} under strategy combinations SC.1-6 as follows: 
\begin{align}
&a_1^*(m)= \label{a-1}
\begin{cases}
        \mu_H, &\text{if} \ m = H, \\
        \frac{p_H(1-q_+)\mu_H+(1-p_H)\mu_L}{p_H(1-q_+)+(1-p_H)}, &\text{if} \ m = L.
    \end{cases} \\
&a_2^*(m) = \label{a-3}
    \begin{cases}
     \frac{p_H\mu_H + (1-p_H)q_+\mu_L}{p_H + (1-p_H)q_+}, &\text{if} \ m = H, \\
    \mu_L, &\text{if} \ m = L.   
    \end{cases} \\
&a_3^*(m) = \label{a-2}
    \begin{cases}
    \frac{p_H(1-q_+)\mu_H+(1-p_H)\mu_L}{p_H(1-q_+)+(1-p_H)}, &\text{if} \ m = H, \\
        \mu_H, &\text{if} \ m = L.
    \end{cases} 
 \end{align}
\begin{align}
    &a_4^*(m) = \label{a-4} 
    \begin{cases}
    \mu_L, &\text{if} \ m = H, \\
     \frac{p_H\mu_H + (1-p_H)q_+\mu_L}{p_H + (1-p_H)q_+}, &\text{if} \ m = L. 
     \end{cases} \\
    &a_5^*(m) = \label{a-5}
    \begin{cases}
    \frac{p_H q_+ \mu_H+(1-p_H)(1-q_+)\mu_L}{p_H q_++(1-p_H)(1-q_+)}, &\text{if} \ m = H, \\
    \frac{p_H (1-q_+) \mu_H+(1-p_H)q_+\mu_L}{p_H (1-q_+)+(1-p_H)q_+}, &\text{if} \ m = L.
    \end{cases} \\
    & a_6^*(m) = \label{a-6}
    \begin{cases}
     \frac{p_H (1-q_+) \mu_H+(1-p_H)q_+\mu_L}{p_H (1-q_+)+(1-p_H)q_+}, &\text{if} \ m = H, \\
    \frac{p_H q_+ \mu_H+(1-p_H)(1-q_+)\mu_L}{p_H q_++(1-p_H)(1-q_+)}, &\text{if} \ m = L.   
    \end{cases} 
\end{align}
 To help finalize the user's strategy at the PBE, we define two thresholds $p_{1, H}$$\leq$$1$ for $q_+$$<$$\frac{1}{2}$ and $p_{1, L}$$\geq$$0$ for $q_+$$>$$\frac{1}{2}$:
\begin{align}
    &p_{1,H} \! = \! \frac{3-q_+}{2} \!-\! \frac{1}{2} \sqrt{\frac{2 q_+^3-9
   q_+^2+12 q_+-5}{2 q_+-1}}, \ q_+ \!<\! \frac{1}{2}, \label{p1H} \\
    &p_{1,L}=\frac{1}{2} \sqrt{\frac{2 q_+^3+3 q_+^2}{2 q_+-1}}-\frac{q_+}{2}, \ q_+ > \frac{1}{2}. \label{p1L}
\end{align}
We can prove that the user with negative bias $b$=$-$ will exhibit honest strategy to convince the platform of no cheap-talk if $p_H$$<$$\min\{1,p_{1,L}\}$, and the user with negative bias $b$=$+$ will exhibit honest strategy to convince the platform of no cheap-talk if $p_H$$>$$\max\{0,p_{1,H}\}$. 

\begin{table}[t]
\caption{PBE in medium positively-biased probability $q_+$$\in$$[\frac{3-\sqrt{5}}{2}$,$\frac{\sqrt{5}-1}{2}$] in the one-user case.}
\label{t1:prop-2}
\begin{center}
\begin{small}
\begin{sc}
\begin{tabular}{|l|l|}
\hline
4 PBEs & Description  \\
\hline
\hline
 PBE.1 & $b$=- user: blindly low-type messaging \\
 (with & $m^*(S|b \!= \!-) \!= \!L$, $\forall S \in \{L, H\}$.\\
 \cline{2-2}
SC.1 in & $b$=+ user: honest messaging \hfill  \\
Lemma~\ref{L11})  & $m^*(S|b \!=\! +) \!=\! S$, $\forall S \in \{L, H\}$. \\
  \cline{2-2}
 & The Platform's inference 
 $a^*(m)$ in \eqref{a-1}. \\

\hline
 PBE.2 &  $b$=- user:  honest messaging  \hfill \\
 (with &  $m^*(S|b \!= \!-) \!= \!S$, $\forall S \in \{L, H\}$.\\
 \cline{2-2}
SC.2 in  &$b$=+ user: blindly high-type messaging   \\
Lemma~\ref{L11})  & $m^*(S|b \!=\! +) \!=\! H$, $\forall S \in \{L, H\}$. \\
  \cline{2-2}
 & The Platform's inference 
 $a^*(m)$ in \eqref{a-3}. \\

 \hline
 PBE.3 & $b$=- user: blindly high-type messaging  \\
(with  &  $m_i^*(S|b \!= \!-) \!= \!H$, $\forall S \in \{L, H\}$.\\
 \cline{2-2}
SC.3 in  & $b$=+ user: reversed messaging   \\
 Lemma~\ref{L11}) & $m^*(S=L|b \!= \!+) \!= \!H$, \\
  & $m^*(S=H|b \!=\! +) \!=\! L$. \\
  \cline{2-2}
 & The Platform's  inference 
 $a^*(m)$ in \eqref{a-2}. \\
 
\hline

PBE.4 & $b$=- user: reversed messaging  \\
(with & $m^*(S=H|b \!= \!-) \!= \!L$,\\
SC.4 in &  $m^*(S=L|b \!= \!-) \!= \!H$. \\
 \cline{2-2}
Lemma~\ref{L11}) & $b$=+ user:  blindly low-type messaging   \\
 & $m^*(S|b \!=\! +) \!=\! L$, $\forall S \in \{L, H\}$. \\
  \cline{2-2}
 & The Platform's  inference 
 $a^*(m)$ in \eqref{a-4}. \\
 
 \hline

\end{tabular}
\end{sc}
\end{small}
\end{center}
\end{table}

\begin{proposition}\label{prop-1} 
If the user has small positively-biased probability (i.e., $q_+ < \frac{3-\sqrt{5}}{2}$), we have $p_{1,H}$$\in$$[0,1)$. Non-unique PBEs are given in closed-form in Table~\ref{t1:prop-1}, depending on high-type PDF probability $p_H$. 
\end{proposition}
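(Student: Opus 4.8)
The plan is to treat each of the six strategy combinations SC.1--SC.6 that survive Lemma~\ref{L11} as a candidate equilibrium, substitute the platform's best-response inference already derived in \eqref{a-1}--\eqref{a-6}, and then verify sequential rationality for each bias type by comparing the ex-ante expected utility \eqref{u-s} of the prescribed message rule against its three alternatives among the four candidate rules (honest, blindly-high, blindly-low, reversed). Because every committed action in \eqref{a-1}--\eqref{a-6} is a convex combination of $\mu_H$ and $\mu_L$ while the per-realization utility is linear in the action, each incentive-compatibility (IC) inequality factors as $(\mu_H-\mu_L)$ times a rational expression in $p_H$ and $q_+$; the mean gap cancels and every threshold depends only on $(p_H,q_+)$. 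First I would dispose of SC.5 and SC.6: their best responses \eqref{a-5}--\eqref{a-6} induce actions whose ordering the two messages share in the direction each type cares about, so one type strictly prefers a pooling deviation and these two combinations are ruled out exactly as in the proof of Proposition~\ref{prop-11}.

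Next I would run the IC check on the four remaining combinations under the correct piecewise action from \eqref{a-1}--\eqref{a-4}. For SC.1 and SC.3 the relevant inequalities (the positive type's honesty in SC.1, and the mutual best-response of the reversed/blindly-high pair in SC.3) both clear, after multiplying through the positive denominators in \eqref{a-1} and \eqref{a-2}, to the same quadratic in $p_H$ whose admissible root is precisely $p_{1,H}$ in \eqref{p1H}; the symmetric computation for SC.2 and SC.4 produces the companion threshold \eqref{p1L} governing the negative type, which is non-binding throughout the range $q_+<\tfrac12$ studied here. The upshot is that PBE.2 and PBE.4 satisfy IC for every $p_H$, whereas PBE.1 and PBE.3 satisfy it exactly when $p_H\ge p_{1,H}$. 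This is the two-regime split recorded in Table~\ref{t1:prop-1}, and it simultaneously establishes non-uniqueness, since in the large-$p_H$ regime all four IC systems hold at once.

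It then remains to prove the arithmetic claim $p_{1,H}\in[0,1)$ for $q_+<\tfrac{3-\sqrt{5}}{2}$. I would first factor the radicand's numerator in \eqref{p1H} as $(q_+-1)^2(2q_+-5)$, so that with $2q_+-1<0$ on $q_+<\tfrac12$ the quantity under the square root is strictly positive and the root is real. For the upper bound, $p_{1,H}<1$ reduces after squaring to $D-(1-q_+)^2=-4(q_+-1)^2/(2q_+-1)>0$, which holds for all $q_+<\tfrac12$. For the lower bound, $p_{1,H}\ge0$ reduces to $(3-q_+)^2\ge D$, and clearing the negative denominator collapses this to the single scalar condition $q_+^2-3q_++1\ge0$, i.e. $q_+\le\tfrac{3-\sqrt{5}}{2}$, which also identifies the boundary value as the smaller root of $q_+^2-3q_++1$. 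Combining the two bounds yields $p_{1,H}\in[0,1)$ on the stated range.

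The main obstacle I anticipate is the bookkeeping of the IC system rather than any single inequality: each of the four surviving combinations requires checking that the prescribed rule beats all three deviations for both bias types, and one must confirm that the many linear conditions are slack exactly where claimed so that only the single quadratic constraint producing $p_{1,H}$ binds. Keeping this case analysis organized---and verifying that the non-binding deviations never overturn the threshold---is where the care is needed; once everything is reduced to the sign of $q_+^2-3q_++1$, the regime classification in Table~\ref{t1:prop-1} and the bound $p_{1,H}\in[0,1)$ follow at once.
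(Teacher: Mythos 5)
Your overall skeleton matches the paper's proof of Proposition~\ref{prop-1} (restrict to SC.1--6 via Lemma~\ref{L11}, rule out SC.5--6, reduce the surviving incentive conditions to a quadratic in $p_H$, then do the arithmetic on $p_{1,H}$), and your final step is correct: the factorization of the radicand in \eqref{p1H} as $(q_+-1)^2(2q_+-5)/(2q_+-1)$ and the reduction of $p_{1,H}\geq 0$ to $q_+^2-3q_++1\geq 0$ are exactly right, and are more explicit than what the paper records. The problem is the equilibrium-verification step. You propose to hold the platform's inference fixed at the candidate combination's best response \eqref{a-1}--\eqref{a-6} and compare each type's prescribed rule against its three alternatives. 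Under that test SC.1 is rejected outright rather than admitted for $p_H\geq p_{1,H}$: facing $a_1^*$ in \eqref{a-1}, a positively-biased user who observes $S=L$ receives $a_1^*(L)<\mu_H$ by honesty but $\mu_H$ by pooling on $H$, so blind high-type messaging strictly dominates honesty for every $p_H$ and every $\mu_H>\mu_L$. The paper's proof (Appendix~\ref{A-B}) does something different: each unilateral deviation is evaluated with the platform \emph{re-best-responding} to the post-deviation strategy combination, using the additional inference actions \eqref{a-7}--\eqref{a-16} derived in the proof of Lemma~\ref{L11}; for instance, the positive type's deviation from honesty to blind high messaging lands in a combination whose best response is the uninformative prior mean, which is why it is unprofitable there. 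Without adopting that comparison, your procedure cannot reproduce Table~\ref{t1:prop-1}.

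Relatedly, you mis-identify where the two thresholds come from. In the paper, $p_{1,H}$ is produced by the \emph{negatively}-biased type's deviation in SC.1 from blindly-low to reversed messaging (SC.1 $\to$ SC.5, comparing $a_1^*(L)$ against the mixture $p_H a_5^*(L)+(1-p_H)a_5^*(H)$ from \eqref{a-5}), not by ``the positive type's honesty''; and $p_{1,L}$ in \eqref{p1L} is produced by the \emph{positively}-biased type's deviation SC.2 $\to$ SC.6, i.e.\ the quadratic $(2q_+-1)p_H^2+q_+(2q_+-1)p_H-q_+^2\leq 0$ in \eqref{A3}, not by a negative-type constraint. Your conclusion that the companion threshold is slack on $q_+<\tfrac{3-\sqrt{5}}{2}$ happens to agree with the paper (the quadratic holds for all $p_H$ once $q_+\leq\tfrac{\sqrt{5}-1}{2}$), but the route you describe for obtaining it, and for obtaining the binding constraint $p_H\geq p_{1,H}$ on SC.1/SC.3, is not one that closes.
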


The proof of Proposition~\ref{prop-1} is given in Appendix G. Note that PBE.1 and 3 only occur if high-PDF probability $p_H \geq p_{1,H}$ in Table~\ref{t1:prop-1} for small positively-biased probability $q_+ < \frac{3-\sqrt{5}}{2}$, which is different from Table~\ref{t2-p6} for symmetrically-distributed biases (with $q_+ = \frac{1}{2}$) to always occur. With small positively-biased probability $q_+$$<$$\frac{3-\sqrt{5}}{2}$, the platform expects the user with positive bias to appear less often. The positively-biased user type then aims to convince the platform of high-PDF type realization by honestly messaging only if the probability of high-PDF type is large enough (i.e., $p_H > p_{1, H}$). Otherwise, he blindly messages to best mislead the platform. Note that PBEs in the large positively-biased probability regime of $q_+ > \frac{\sqrt{5}-1}{2}$ are symmetric to those in this small regime and have similar insights. Details are given in Appendix I.

\begin{proposition}\label{prop-2} 
If the user has medium positively-biased probability (i.e., $q_+ \in [\frac{3-\sqrt{5}}{2}, \frac{\sqrt{5}-1}{2}]$), there are in total  4 PBEs in closed-form as summarized in Table~\ref{t1:prop-2}. 
\end{proposition}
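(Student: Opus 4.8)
The plan is to verify that each of the four strategy combinations SC.1--SC.4 constitutes a PBE for every $p_H\in(0,1)$ when $q_+$ lies in the medium band, and that SC.5 and SC.6 never do. By Lemma~\ref{L11} these six combinations exhaust the candidates, so this suffices. For each SC the platform's inference is already pinned down by \eqref{a-1}--\eqref{a-6}, so the only thing left to check is incentive compatibility: that neither bias type can raise its ex-ante utility \eqref{u-s} by switching to one of the other three deterministic strategy candidates while the platform re-optimizes its inference to the deviated profile (this ``platform re-optimizes'' feature, forced by the long-run/commitment framing, is exactly why informative messaging can survive even though a one-shot sequentially rational sender would always chase the extreme action). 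I would therefore compute, for each bias $b\in\{-,+\}$ and each of its four candidate strategies, the induced value $b\cdot E[a]$ under the corresponding platform best response, and collect the resulting pairwise comparisons into a small set of inequalities in $(p_H,q_+)$.

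The key structural observation driving the argument is that sustaining the honest-leaning equilibria PBE.1 and PBE.3 reduces to a single condition on the positively-biased type, namely that messaging informatively (honestly in PBE.1, reversely in PBE.3) beats blindly messaging $H$, which would leave both messages uninformative and force the platform to babble at the prior mean. A direct computation from $a_1^*$ in \eqref{a-1} (and from $a_3^*$ in \eqref{a-2}, which yields the \emph{same} value $p_H\mu_H+(1-p_H)a_1^*(L)$ for the $b=+$ type) shows this condition is exactly $p_H\ge p_{1,H}$ with $p_{1,H}$ as in \eqref{p1H}. The game is invariant under simultaneously relabeling the two bias types, relabeling the two PDF types, and reflecting the action axis, which sends $(q_+,p_H)\mapsto(1-q_+,1-p_H)$ and interchanges PBE.1/PBE.3 with PBE.2/PBE.4; applying it maps the $b=+$ condition to the $b=-$ condition $p_H\le p_{1,L}$ with $p_{1,L}$ as in \eqref{p1L}. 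Hence all four equilibria coexist at a given $(p_H,q_+)$ precisely when $p_{1,H}\le p_H\le p_{1,L}$.

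The main obstacle, and the crux of the proposition, is to show that on the medium band $q_+\in[\tfrac{3-\sqrt5}{2},\tfrac{\sqrt5-1}{2}]$ both constraints become vacuous, so that all four PBEs exist for \emph{every} $p_H\in(0,1)$. Concretely I would prove $p_{1,H}\le 0$ for $q_+\in[\tfrac{3-\sqrt5}{2},\tfrac12)$ and $p_{1,L}\ge 1$ for $q_+\in(\tfrac12,\tfrac{\sqrt5-1}{2}]$, after which $0<p_H<1$ forces $p_H\ge 0\ge p_{1,H}$ and $p_H\le 1\le p_{1,L}$ automatically. Setting $p_{1,H}=0$ in \eqref{p1H} and clearing the square root collapses the cubic/quadratic data to $q_+^2-3q_++1=0$, whose root below $\tfrac12$ is exactly $\tfrac{3-\sqrt5}{2}$; likewise $p_{1,L}=1$ in \eqref{p1L} collapses to $q_+^2+q_+-1=0$, whose root above $\tfrac12$ is exactly $\tfrac{\sqrt5-1}{2}$ (note the band is itself symmetric about $\tfrac12$, consistent with the relabeling map). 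A sign/monotonicity check on each half-interval, via the sign of the radicand together with evaluation at an interior point and continuity, then fixes the inequalities throughout. I expect this to be the only genuinely delicate step, because it requires careful handling of the factor $\tfrac{1}{2q_+-1}$, which changes sign across $q_+=\tfrac12$, and of the chosen branch of the square root; the remaining utility comparisons are routine given \eqref{a-1}--\eqref{a-6}.

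Finally I would rule out SC.5 and SC.6: under $a_5^*$ in \eqref{a-5} the positively-biased type's honest messaging cannot move the platform toward $\mu_H$, since the negatively-biased type's reversed messaging cancels its informativeness, so $b=+$ strictly gains by deviating (symmetrically for SC.6 and the $b=-$ type via the relabeling map). Assembling these pieces leaves exactly SC.1--SC.4, giving the four PBEs of Table~\ref{t1:prop-2} for all $p_H\in(0,1)$ and completing the proof.
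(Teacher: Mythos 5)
Your final conclusion, the endpoint algebra (setting $p_{1,H}=0$ in \eqref{p1H} collapses to $q_+^2-3q_++1=0$, and $p_{1,L}=1$ in \eqref{p1L} to $q_+^2+q_+-1=0$), the relabeling symmetry, and even the deviation concept (the platform re-optimizes its inference to the deviated profile, which is indeed how the paper computes each $\bar{u}_j$) are all consistent with the paper. But your ``key structural observation'' misidentifies the binding deviation, and the computation you describe cannot produce $p_{1,H}$. Under SC.1, if the positively-biased type deviates from honest to blindly messaging $H$, both types message blindly and the platform babbles at $p_H\mu_H+(1-p_H)\mu_L$; his equilibrium payoff is $p_H\mu_H+(1-p_H)a_1^*(L)$, which weakly exceeds the babbling payoff whenever $a_1^*(L)\geq\mu_L$ --- and this always holds, since $a_1^*(L)$ in \eqref{a-1} is a convex combination of $\mu_L$ and $\mu_H$. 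That comparison is therefore vacuous for every $(p_H,q_+)$ and is never ``exactly $p_H\geq p_{1,H}$.'' The threshold $p_{1,H}$ actually arises from the \emph{negatively}-biased type: under SC.1 he may deviate from blindly-low to reversed messaging (moving the profile to SC.5), and comparing $-a_1^*(L)$ against $-\bigl(p_H a_5^*(L)+(1-p_H)a_5^*(H)\bigr)$ with \eqref{a-5} yields the nontrivial quadratic in $p_H$ whose root is $p_{1,H}$. Symmetrically, $p_{1,L}$ comes from the positively-biased type's deviation under SC.2 from blindly-high to reversed messaging (SC.2 versus SC.6), which is the paper's inequality $(2q_+-1)p_H^2+q_+(2q_+-1)p_H-q_+^2\leq 0$ in \eqref{A2}--\eqref{A3}.

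The consequence is that, carried out as written, your argument would conclude that PBE.1 and PBE.3 exist for \emph{all} $(p_H,q_+)$, contradicting Table~\ref{t1:prop-1}, where they fail for $q_+<\frac{3-\sqrt{5}}{2}$ and $p_H<p_{1,H}$. On the medium band the proposition's conclusion happens to be correct because the true binding condition is vacuous there, but your proof does not establish this: the genuinely delicate step is not the sign analysis of $p_{1,H}$ and $p_{1,L}$ at the band's endpoints (which you handle correctly) but identifying and evaluating the blind-versus-reversed deviations that generate those thresholds in the first place --- precisely the comparisons your ``single condition on the positively-biased type'' excludes. The fix is to replace the honest-versus-blind-$H$ comparison by the SC.1-versus-SC.5 and SC.2-versus-SC.6 comparisons (the paper's route); your endpoint algebra, the symmetry map, and the elimination of SC.5/SC.6 then carry the rest of the argument.
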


The proof of Proposition~\ref{prop-2} is given in Appendix H. 
With medium positively-biased probability $q_+$$\in$$[\frac{3-\sqrt{5}}{2}, \frac{\sqrt{5}-1}{2}]$, both biased user types appear with similar frequencies to exhibit PBE.1-PBE.4 as in the symmetrically-distributed biases case in Table~\ref{t2-p6}. Note that Proposition~\ref{prop-2} becomes the same as Proposition~\ref{prop-11} by letting $q_+$=$\frac{1}{2}$ and $N$=1.

Given PBE results in Propositions~\ref{prop-1}-\ref{prop-2}, we are ready to analyze the system loss in \eqref{POA} by comparing our Bayesian game theoretic learning performance to the optimum.

\begin{proposition}\label{prop-4}
In the one-user case, our Bayesian game theoretic learning approach incurs the system loss from the optimum as follows:
\begin{align}\label{1}
    \bar{L}_{Bayesian} = \max\bigg\{  \frac{q_+ (1-p_H) p_H (\mu _H-\mu
   _L)^2}{p_H + (1-p_H)q_+}&, \nonumber \\
   \frac{(1-q_+) (1-p_H) p_H (\mu _H-\mu
   _L)^2}{1-q_+ p_H} &\bigg\},
\end{align}
 which is less than $\bar{L}_1$ in \eqref{u-r-b-2} of the benchmark 1 and $\bar{L}_2$ in \eqref{u-r-b-1} of the benchmark 2. Particularly, its maximum loss reduction from $\bar{L}_2$ is $(\sqrt{5}-2)(\mu_H-\mu_L)^2$, which occurs if $q_+ \to 0$ and $p_H = p_{1,H}$, or $q_+ \to 1$ and $p_H = p_{1,L}$.
\end{proposition}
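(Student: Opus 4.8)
The plan is to reduce the system loss to a weighted squared deviation of committed actions from the true means, evaluate it at each PBE, take the worst over the PBEs that survive in each regime of Propositions~\ref{prop-1}--\ref{prop-2}, and then optimize the gap to the benchmark loss $\bar L_2$. The crucial first step is that every best-response action $a_e^*(m)$ in \eqref{a-1}--\eqref{a-4} is constant in $\theta$, so substituting into \eqref{u-r} and using $\int_\Theta(a-\theta)^2\phi_j(\theta)\,d\theta=(a-\mu_j)^2+\sigma_j^2$ together with $\sum_m Pr(m|\phi_j)=1$ yields, for each PBE $e$,
\begin{align*}
\bar c^{e,*}-\bar c^{**}=\sum_{j\in\{L,H\}}Pr(\phi_j)\sum_m Pr(m|\phi_j)\,(a_e^*(m)-\mu_j)^2,
\end{align*}
the variance terms $\sigma_j^2$ cancelling against $\bar c^{**}$ in \eqref{eq:scN}. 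This turns the whole problem into substituting the message probabilities induced by SC.1--SC.4 and the matching actions.

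Next I would evaluate this expression at the four PBEs. For SC.2 (PBE.2) the induced law is $Pr(H|\phi_H)=1$ and $Pr(H|\phi_L)=q_+$; combined with $a_2^*(\cdot)$ in \eqref{a-3} the deviations telescope to the first term of \eqref{1}, and PBE.4 with \eqref{a-4} is its exact mirror and gives the same value. Symmetrically, SC.1 (PBE.1) with \eqref{a-1} and SC.3 (PBE.3) with \eqref{a-2} both reduce to the second term of \eqref{1}, where one simplifies via $p_H(1-q_+)+(1-p_H)=1-q_+p_H$. Thus only two distinct loss values arise, and $\bar L_{Bayesian}=\max_{e\in E}(\bar c^{e,*}-\bar c^{**})$ equals the stated maximum in \eqref{1} whenever both PBE families coexist, e.g. the medium-$q_+$ regime of Table~\ref{t1:prop-2}; in the small- and large-$q_+$ regimes I would use the thresholds $p_{1,H},p_{1,L}$ from \eqref{p1H}--\eqref{p1L} to record which family survives in each $p_H$ sub-interval.

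For the benchmark comparison I would factor $\bar L_2=p_H(1-p_H)(\mu_H-\mu_L)^2$ from \eqref{u-r-b-1} out of both terms of \eqref{1}: the first equals $\bar L_2\cdot\frac{q_+}{p_H+(1-p_H)q_+}$ and the second equals $\bar L_2\cdot\frac{1-q_+}{1-q_+p_H}$, each multiplier being strictly below $1$ because $p_H(1-q_+)>0$ and $p_H<1$, respectively. Hence $\bar L_{Bayesian}<\bar L_2$, and $\bar L_2<\bar L_1$ follows from Corollary~\ref{C-1}.

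The hardest part is the maximum-reduction claim, which I would obtain by maximizing $\bar L_2-\bar L_{Bayesian}$ over $(p_H,q_+)$ while respecting the regime boundaries. The decisive observation is that the reduction peaks exactly at a regime transition: as $q_+\to0$ the first term vanishes, and in the sub-threshold regime $p_H\le p_{1,H}$ only the first-term PBEs (PBE.2 and PBE.4) survive, so the worst existing loss itself tends to $0$ and the reduction approaches the full $\bar L_2$; on the interval $[0,p_{1,H}]$ the function $p_H(1-p_H)$ is increasing, so it is maximized at $p_H=p_{1,H}$, and $p_{1,H}\to\frac{3-\sqrt5}{2}$ from \eqref{p1H} gives $\bar L_2\to\frac{3-\sqrt5}{2}\cdot\frac{\sqrt5-1}{2}(\mu_H-\mu_L)^2=(\sqrt5-2)(\mu_H-\mu_L)^2$, with the mirror corner $q_+\to1,\,p_H=p_{1,L}$ giving the same value. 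I expect the real work to be showing this corner is the global maximizer --- that no interior point of the coexistence regime, where the reduction is $\bar L_2$ minus the larger of the two branch values, and no other regime beats it --- which calls for monotonicity arguments in $q_+$ against the square-root thresholds rather than a plain first-order condition.
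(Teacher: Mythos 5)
Your derivation of \eqref{1} is correct and follows essentially the same route as the paper: evaluate $\bar c^{e,*}-\bar c^{**}$ at each surviving PBE by exploiting that the equilibrium actions are constant in $\theta$ (so the $\sigma_j^2$ terms cancel against \eqref{eq:scN}), observe that the four PBEs collapse to two distinct loss values, and take the worst over whichever PBEs exist in each $(q_+,p_H)$ regime. Your assignment of PBE.2/PBE.4 to the first term of \eqref{1} and PBE.1/PBE.3 to the second is in fact the consistent one: the superscripts in \eqref{pbe-12-u-r}--\eqref{pbe-34-u-r} of the paper's appendix are swapped relative to the existence conditions attached to them, but the final expression \eqref{A-L1} agrees with your computation. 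Your factorization of $\bar L_{Bayesian}$ as $\bar L_2$ times the larger of $\frac{q_+}{p_H+(1-p_H)q_+}$ and $\frac{1-q_+}{1-q_+p_H}$, each strictly below one, together with Corollary~\ref{C-1} for $\bar L_1$, cleanly gives both benchmark comparisons.

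The genuine gap is in the maximum-reduction claim. You correctly locate the corner ($q_+\to0$ with $p_H$ at $p_{1,H}\to\frac{3-\sqrt5}{2}$, where $p_H(1-p_H)=\sqrt5-2$) and verify the value there, but you explicitly defer the proof that this corner globally maximizes $\Delta L=\bar L_2-\bar L_{Bayesian}$, and that deferred step is the bulk of the paper's Appendix J. There $\Delta L$ is piecewise, with the active branch governed jointly by the comparison threshold $p_H\lessgtr\frac{q_+^2}{2q_+^2-2q_++1}$ (where the two terms of \eqref{1} cross) and the existence thresholds $p_{1,H},p_{1,L}$; the paper first maximizes over $p_H$ for fixed $q_+$, obtaining a four-piece $p_H^*(q_+)$ whose breakpoints $\bar q_1,\bar q_2,\bar q_3$ are roots of explicit cubic and quartic polynomials, and only then optimizes over $q_+$ to land at the endpoints $q_+\in\{0,1\}$. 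A bare appeal to ``monotonicity in $q_+$'' does not substitute for this case analysis, since $\Delta L$ restricted to the coexistence region is the minimum of two branch reductions and is not monotone in either variable. Separately, note one boundary subtlety your endpoint argument glosses over: at $p_H=p_{1,H}$ exactly, Table~\ref{t1:prop-1} says all four PBEs coexist, and as $q_+\to0$ the second term of \eqref{1} tends to $\bar L_2$, so the worst-case loss does not vanish at that exact point; the value $(\sqrt5-2)(\mu_H-\mu_L)^2$ is attained only as a supremum with $p_H\to p_{1,H}^-$, which is the sense in which the proposition's ``occurs if'' must be read.
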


The proof of Proposition~\ref{prop-4} is given in Appendix J.  
At the PBE.2 (or PBE.1), the negatively-biased (or positively-biased) user type aims to convince the platform of his preferred low-PDF (or high-PDF) type realization by truthfully messaging. As positively-biased probability $q_+$$\to$0 (or 1), his voice dominates to  occur almost all the time. Note that the system loss of the benchmark 2 is huge with medium high-PDF type probability $p_H = p_{1,H}$ (or $p_H = p_{1,L}$). System loss reduction is thus maximized at  $p_H = p_{1,H}$ (or $p_H = p_{1,L}$) and $q_+$$\to$0 (or 1) due to the negatively-biased (or positively-biased) user type's truthfully messaging. Note that Proposition~\ref{prop-4} becomes the same as Theorem~\ref{prop-12} by letting $q_+=\frac{1}{2}$ and $N$=1.

\begin{figure}
    \centering
    \includegraphics[scale=0.335]{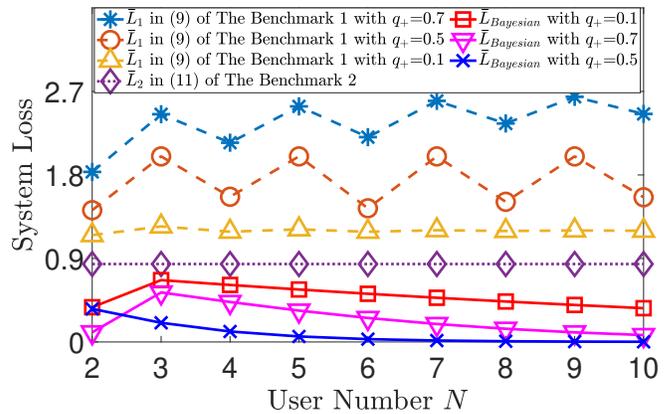}
    \caption{The system loss of the benchmarks 1-2 and the Bayesian game theoretic learning approach, versus the user number $N$ and the positively-biased probability $q_+$, respectively. Here we set $\mu_H$=1, $\mu_L$=$-$1, $\sigma_H^2$=$\frac{3}{2}$, $\sigma_L^2$=$1$, and $p_H = 0.3$.}
    \label{Fig.5}
\end{figure}

\subsection{Numerical Results on $N\geq2$ Users}

In this subsection, we extend to the asymmetrically-distributed biases with $q_+ \ne \frac{1}{2}$ for arbitrary $N$ users. Lemma~\ref{L11} still holds for arbitrary $N\geq2$ users with any $q_+\in(0,1)$, and we can follow \eqref{ul'-2}-\eqref{ul-2} to obtain the platform's best-response learning actions in \eqref{a} under the possible strategy combinations 1-6. However, the analysis is more involved and we can still obtain the PBE numerically for arbitrary $N\geq2$ users.  


Figure~\ref{Fig.5} numerically shows that with our Bayesian game theoretic learning approach, the system loss $\bar{L}_{Bayesian}$ is obviously smaller than those of benchmarks 1 and 2 with asymmetrically-distributed bias probability $q_+=0.1$ or 0.7. Our system loss may fluctuate as $N$ increases from 2 to 3, yet it tends to 0 as the user number $N$ keeps increasing, which is consistent with Theorem~\ref{prop-12}. Differently, the system loss of the benchmark 1 or 2 does not vanish with $N$, which keeps fluctuating or flat regardless of symmetrically- or asymmetrically-distributed biases $q_+$=0.1, 0.5, or 0.7.

\section{Conclusion}

In this paper, we study how to save mobile crowdsourcing from cheap-talk and strategically learn the actual service state from biased users' reviews. 
We formulate the problem as a dynamic Bayesian game, including the service type messaging from users with hidden biases and the platform's follow-up service rating/inference from the messages. Our PBE is in closed-form and shows that the platform's game theoretic learning can greatly reduce biased users' cheap-talk, where a user with extreme bias may still honestly message to convince the platform of listening to his review. Such Bayesian game theoretic learning obviously outperforms the latest common schemes in the literature (e.g., blind abandoning and majority-voting) especially when there are multiple users of diverse biases to compete. For the challenging single-user case, we further propose a time-evolving mechanism with the platform's commitment inferences to ensure the biased user's truthful messaging all the time, whose performance improves with more time periods to learn from more historical data.  

In the future, there are some directions to extend this work. For example, it may be interesting to consider an extension on time-varying service types, and to explore how biased users exhibit messaging strategies at the PBE and how the platform can incentivize their honest messaging. Further, it may also be interesting to consider the case of user collusion, which requires us to adopt new approximation methods to determine biased users' asymmetric messaging strategies within the exponentially increased space.

\section*{Acknowledgments}
A preliminary version of this paper was presented in \cite{hao23sav}.

 
\begin{IEEEbiography}[{\includegraphics[width=1in,height=1.25in,clip,keepaspectratio]{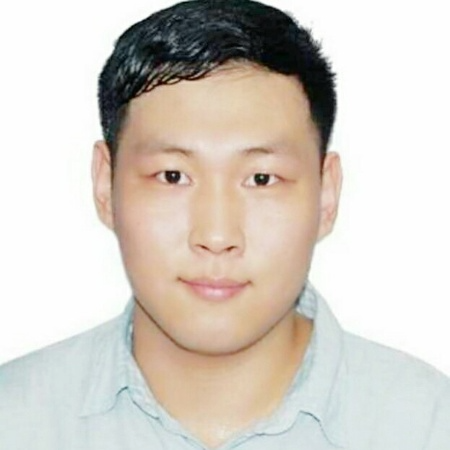}}]{Shugang Hao} (M'22)
 received the Ph.D. degree from Singapore University of Technology and Design (SUTD) in 2022. He is a postdoctoral research fellow at SUTD from Sep. 2022.  
His research interests are game theory and mechanism design. He received the student travel grant for attending ACM MobiHoc 2019 Symposium. He served as the local arrangement chair of IEEE WiOpt 2023. 
\end{IEEEbiography}


\begin{IEEEbiography}[{\includegraphics[width=1in,height=1.25in,clip,keepaspectratio]{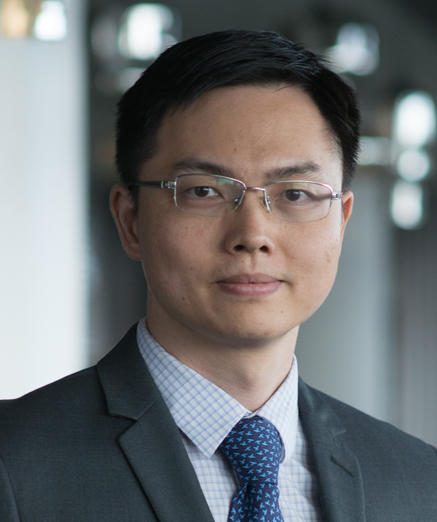}}]{Lingjie Duan}
(S'09-M'12-SM'17) received the
Ph.D. degree from The Chinese University of
Hong Kong in 2012. He is an Associate Professor and Associate Head of Pillar
of Engineering Systems and Design with the
Singapore University of Technology and Design
(SUTD). In 2011, he was a Visiting Scholar at University
of California at Berkeley, Berkeley, CA,
USA. His research interests include network economics
and game theory, cognitive communications,
and cooperative networking. He is an Associate
Editor of IEEE Transactions on Mobile Computing. He was an
Editor of IEEE Transactions on Wireless Communications and IEEE Communications Surveys and Tutorials. He also served
as a Guest Editor of the IEEE Journal on Selected Areas in Communications Special Issue on Human-in-the-Loop Mobile Networks, as well
as IEEE Wireless Communications Magazine.  He received the SUTD Excellence
in Research Award in 2016 and the 10th IEEE ComSoc Asia-Pacific
Outstanding Young Researcher Award in 2015. He served as the general chair of IEEE WiOpt 2023.
\end{IEEEbiography}


\newpage

 ~\newpage

\appendices

\section{Proof of Lemma~\ref{lemma-bm-2}}
$\Bar{a}_1$ in \eqref{a_1} can be easily obtained due to $c(a, \theta)$'s convexity in $a$. $\Bar{L}_1$ in \eqref{u-r-b-2} can be thus obtained by substituting $\Bar{a}_1$ in \eqref{a_1} to \eqref{u-r} and \eqref{POA}.

To prove that $\bar{L}_1$ in \eqref{u-r-b-2} does not monotonically decrease with $N$, we give a counter-example by taking $p_H$=0.5 and $N$=1, 2, 3. We can then calculate $\bar{L}_1$ in \eqref{u-r-b-2} under different $N$ as follows:
\begin{align*}
    &\bar{L}_1(N=1) = \frac{1}{2}(\mu_H-\mu_L)^2, \\
    &\bar{L}_1(N=2) = \frac{3}{8}(\mu_H-\mu_L)^2, \\
    &\bar{L}_1(N=3) = \frac{1}{2}(\mu_H-\mu_L)^2,
\end{align*}
We find that $\bar{L}_1(N=1) > \bar{L}_1(N=2)$ but $\bar{L}_1(N=2) < \bar{L}_1(N=3)$, which implies that $\bar{L}_1$ in \eqref{u-r-b-2} does not monotonically decrease with $N$.

To prove $\Bar{L}_1 > p_H(1-p_H)(\mu_H\!-\!\mu_L)^2$, we have 
\begin{align*}
    &\bar{L}_1 > (\mu_H\!-\!\mu_L)^2\bigg(p_H(1-p_H)\!\!\sum_{k=0}^{\lceil\frac{N}{2}\!-\!1\rceil}C_N^k\frac{1}{2^N} \\
    &+\!p_H(1\!-\!p_H) \!\!\!\sum_{l=\lfloor\frac{N}{2}\!+\!1\rfloor}^{N} \!\!\!C_N^l\frac{1}{2^N} +\text{$\mathbb{1}_{\lfloor\frac{N}{2}\rfloor=\frac{N}{2}}$}\cdot p_H(1\!-\!p_H)C_N^\frac{N}{2}\frac{1}{2^N}\bigg) \\
    &=p_H(1-p_H)(\mu_H\!-\!\mu_L)^2
\end{align*}
due to $p_H \in (0, 1)$. We then finsh the proof.

\section{Proof of Lemma~\ref{L11}}

Before we formally prove the lemma, let us first list the user's remaining 10 possible strategy combinations (except for strategy combinations 1-6 mentioned in the lemma) as follows:
\begin{itemize}
    \item users' strategy combination 7: honest messaging strategy for both $b_i=-$ and $b_i=+$,
    \item users' strategy combination 8: blindly high-PDF type messaging strategy for both $b_i=-$ and $b_i=+$,
    \item users' strategy combination 9: blindly high-PDF type messaging strategy for $b_i=-$ and blindly low-PDF type messaging strategy for $b_i=+$,
    \item users' strategy combination 10: blindly low-PDF type messaging strategy for both $b_i=-$ and $b_i=+$,
    \item users' strategy combination 11: blindly low-PDF type messaging strategy for $b_i=-$ and blindly high-PDF type messaging strategy for $b_i=+$,
    \item users' strategy combination 12: blindly high-PDF type messaging strategy for $b_i=-$ and honest messaging strategy for $b_i=+$,
    \item users' strategy combination 13: blindly low-PDF type messaging strategy for $b_i=-$ and reversed messaging strategy for $b_i=+$,
    \item users' strategy combination 14: honest messaging strategy for $b_i=-$ and blindly low-PDF type messaging strategy for $b_i=+$,
    \item users' strategy combination 15: reversed messaging strategy for $b_i=-$ and blindly high-PDF type messaging strategy for $b_i=+$,
    \item users' strategy combination 16: reversed messaging strategy for both $b_i=-$ and $b_i=+$.
\end{itemize}
Denote set $I_j$=$\{i|$$m_i$=$\phi_j(\theta)$,1$\leq$$i$$\leq$$N\}$ to include all users for messaging $\phi_j(\theta)$ with $j$$\in$$\{L, H\}$.
 Similar to analysis on users' strategy combination 1 in \eqref{ul'-2}-\eqref{u-1}, we obtain that the platform's best-response actions in \eqref{a} to users' strategy combination 7-16 as follows:
\begin{align}
    &a_7^*(m) = \begin{cases}
        \mu_H, &\text{if} \ |I_H| = N, \\
        \mu_L, &\text{if} \ |I_L| = N,
    \end{cases} \label{a-7} \\
    &a_8^*(m) = a_9^*(m) = a_{10}^*(m) = a_{11}^*(m) \nonumber \\ 
    &= p_H\mu_H+(1-p_H)\mu_L,\label{a-891011} \\
    &a_{12}^*(m) = \begin{cases}
        \frac{p_H\mu_H+(1-p_H)(1-q_+)^N\mu_L}{p_H+(1-p_H)(1-q_+)^N}, &\text{if} \ |I_H| = N, \\
        \mu_L, &\text{if} \ |I_L| \geq 1,
    \end{cases} \label{a-12}\\
    &a_{13}^*(m) = \begin{cases}
       \mu_L, &\text{if} \ |I_H| \geq 1, \\
       \frac{p_H\mu_H+(1-p_H)(1-q_+)^N\mu_L}{p_H+(1-p_H)(1-q_+)^N}, &\text{if} \ |I_L| = N,
    \end{cases} \label{a-13} \\
    &a_{14}^*(m) = \begin{cases}
        \mu_H, &\text{if} \ |I_H| \geq 1, \\
        \frac{p_H q_+^N\mu_H+(1-p_H)\mu_L}{p_H q_+^N+(1-p_H)}, &\text{if} \ |I_L| = N,
    \end{cases} \label{a-14} \\
    &a_{15}^*(m) = \begin{cases}
        \frac{p_H q_+^N\mu_H+(1-p_H)\mu_L}{p_H q_+^N+(1-p_H)}, &\text{if} \ |I_H| = N, \\
        \mu_H, &\text{if} \ |I_L| \geq 1.
    \end{cases} \label{a-15} \\
    &a_{16}^*(m) = \begin{cases}
        \mu_L, &\text{if} \ |I_H| = N, \\
        \mu_H, &\text{if} \ |I_L| = N.
    \end{cases} \label{a-16}
\end{align}
According to the platform's actions in \eqref{a-2}-\eqref{a-6} and \eqref{a-7}-\eqref{a-16}, we are able to give the user's expected utility in \eqref{u-s} under strategy combinations 2-16.

Next, we will show that users' strategy combination 7 is  not stable to occur at the PBE, respectively. To show each user $i$ with bias $b_i=-$ always deviates from strategy combination 7 to 1, 
we have 
\begin{align*}
    &\bar{u}_7(b_i=-) - \bar{u}_1(b_i=-) \\
    =& -\frac{(1-p_H)p_Hq_+(1-q_+)^{N-1}(\mu_H-\mu_L)}{1 - p_H (1-(1-q_+)^N) } \leq 0
\end{align*}
due to $p_H \in [0, 1]$, $q_+ \in [0, 1]$ and $\mu_H \geq \mu_L$. 
By following similar steps, we can also show user's strategy combinations 8-16 is not stable to occur at the PBE, respectively.  We then finish the proof.

\section{Proof of Proposition~\ref{prop-11}}

In the arbitrary $N$-user case, according to the platform's actions in \eqref{a2-1} and \eqref{a2-3}-\eqref{a2-6}, we are able to give each user $i$'s expected utility in \eqref{u-s} under strategy combinations 1-16
as \eqref{u2-12}-\eqref{u2-1415}.
\begin{table*}
\begin{align}
    &\bar{u}_1(b_i) = \bar{u}_3(b_i) = \label{u2-12}
    \begin{cases}
    \frac{\left(-p_H (1-q_+)^{N-1}+p_H-1\right) \left(p_H \mu _H
   (1-q_+)^N+(1-p_H) \mu _L\right)}{p_H (1-q_+)^N-p_H+1}-p_H \mu _H
   \left(1-(1-q_+)^{N-1}\right), &\text{if} \ b_i = -,  \\
    \frac{\left(1-p_H\right) \left(\left(1-p_H\right) \mu _L+\mu _H
   p_H \left(1-q_+\right){}^N\right)}{p_H
   \left(1-q_+\right){}^N-p_H+1}+\mu _H p_H, &\text{if} \ b_i \!=\! +. 
    \end{cases} \\
    &\bar{u}_2(b_i) = \bar{u}_4(b_i) = \label{u2-34}
    \begin{cases}
    -p_H \frac{p_H\mu_H + (1-p_H)q_+^N\mu_L}{p_H + (1-p_H)q_+^N} - (1-p_H)\mu_L, &\text{if} \ b_i = -,  \\
    \frac{\left(1-p_H\right) q_+^{N-1} \left(\left(1-p_H\right) \mu
   _L q_+^N+\mu _H p_H\right)}{\left(1-p_H\right)
   q_+^N+p_H}+\left(1-p_H\right) \mu _L
   \left(1-q_+^{N-1}\right)+\frac{p_H \left(\left(1-p_H\right)
   \mu _L q_+^N+\mu _H p_H\right)}{\left(1-p_H\right) q_+^N+p_H}, &\text{if} \ b_i \!=\! +. 
    \end{cases} \\
    &\bar{u}_5(b_i) = \begin{cases}
        -p_H \sum_{k=0}^{N-1} C_{N-1}^k q_+^k (1-q_+)^{N-1-k} \frac{p_H q_+^k (1-q_+)^{N-k} \mu_H+(1-p_H)(1-q_+)^k q_+^{N-k}\mu_L}{p_H q_+^k (1-q_+)^{N-k}+(1-p_H)(1-q_+)^kq_+^{N-k}} &\\
        -(1-p_H) \sum_{k=1}^{N} C_{N-1}^{k-1}(1-q_+)^{k-1}q_+^{N-k} \frac{p_H q_+^k (1-q_+)^{N-k} \mu_H+(1-p_H)(1-q_+)^k q_+^{N-k}\mu_L}{p_H q_+^k (1-q_+)^{N-k}+(1-p_H)(1-q_+)^k q_+^{N-k}}, &\text{if} \ b_i=-, \\
        p_H \sum_{k=1}^{N} C_{N-1}^{k-1} q_+^{k-1} (1-q_+)^{N-k} \frac{p_H q_+^k (1-q_+)^{N-k} \mu_H+(1-p_H)(1-q_+)^k q_+^{N-k}\mu_L}{p_H q_+^k (1-q_+)^{N-k}+(1-p_H)(1-q_+)^kq_+^{N-k}} &\\
        +(1-p_H) \sum_{k=0}^{N-1} C_{N-1}^{k}(1-q_+)^{k}q_+^{N-k-1} \frac{p_H q_+^k (1-q_+)^{N-k} \mu_H+(1-p_H)(1-q_+)^k q_+^{N-k}\mu_L}{p_H q_+^k (1-q_+)^{N-k}+(1-p_H)(1-q_+)^k q_+^{N-k}}, &\text{if} \ b_i=+,
    \end{cases} \label{u2-5} \\
    &\bar{u}_6(b_i) = \begin{cases}
         -p_H \sum_{k=1}^{N} C_{N-1}^{k-1} (1-q_+)^{k-1} q_+^{N-k} \frac{p_H (1-q_+)^k q_+^{N-k} \mu_H+(1-p_H)q_+^k (1-q_+)^{N-k}\mu_L}{p_H (1-q_+)^k q_+^{N-k} +(1-p_H)q_+^k (1-q_+)^{N-k}} &\\
        -(1-p_H) \sum_{k=0}^{N-1} C_{N-1}^{k}q_+^{k}(1-q_+)^{N-k-1} \frac{p_H (1-q_+)^k q_+^{N-k} \mu_H+(1-p_H)q_+^k (1-q_+)^{N-k}\mu_L}{p_H (1-q_+)^k q_+^{N-k}+(1-p_H)q_+^k (1-q_+)^{N-k}}, &\text{if} \ b_i=-, \\
        p_H \sum_{k=0}^{N-1} C_{N-1}^k (1-q_+)^k q_+^{N-1-k} \frac{p_H (1-q_+)^k q_+^{N-k} \mu_H+(1-p_H)q_+^k (1-q_+)^{N-k}\mu_L}{p_H (1-q_+)^k q_+^{N-k} +(1-p_H)q_+^k (1-q_+)^{N-k}} &\\
        +(1-p_H) \sum_{k=1}^{N} C_{N-1}^{k-1}q_+^{k-1}(1-q_+)^{N-k} \frac{p_H (1-q_+)^k q_+^{N-k} \mu_H+(1-p_H)q_+^k (1-q_+)^{N-k}\mu_L}{p_H (1-q_+)^k q_+^{N-k} +(1-p_H)q_+^k (1-q_+)^{N-k}}, &\text{if} \ b_i=+,
    \end{cases} \label{u2-6} \\
    &\bar{u}_7(b_i) = \bar{u}_8(b_i) = \bar{u}_9(b_i) = \bar{u}_{10}(b_i) = \bar{u}_{11}(b_i) = \bar{u}_{16}(b_i) = \begin{cases}
        -p_H \mu_H - (1-p_H)\mu_L, &\text{if} \ b_i=-, \\
        p_H \mu_H + (1-p_H)\mu_L, &\text{if} \ b_i=+,
    \end{cases} \label{u2-789101116} \\
    &\bar{u}_{12}(b_i) = \bar{u}_{13}(b_i) =  \begin{cases}
     -(p_H+(1-p_H)(1-q_+)^{N-1})\frac{p_H\mu_H+(1-p_H)(1-q_+)^N\mu_L}{p_H+(1-p_H)(1-q_+)^N}-(1-p_H)(1-(1-q_+)^{N-1})\mu_L, &\text{if} \ b_i=-, \\
     p_H\frac{p_H\mu_H+(1-p_H)(1-q_+)^N\mu_L}{p_H+(1-p_H)(1-q_+)^N}+(1-p_H)\mu_L, &\text{if} \ b_i=+,
    \end{cases} \label{u2-1213} \\
    &\bar{u}_{14}(b_i) = \bar{u}_{15}(b_i) =  \begin{cases}
     -p_H \mu_H - (1-p_H)\frac{p_H q_+^N \mu_H + (1-p_H)\mu_L}{p_H q_+^N  + (1-p_H)}, &\text{if} \ b_i=-, \\
    (p_H q_+^{N-1}+1-p_H) \frac{p_H q_+^N \mu_H + (1-p_H)\mu_L}{p_H q_+^N  + (1-p_H)} + p_H(1-q_+^{N-1})\mu_H, &\text{if} \ b_i=+.
    \end{cases} \label{u2-1415} 
\end{align}
\end{table*}

Lemma~\ref{L11} holds for arbitrary $N$ users and we thus know that only users' strategy combinations 1-6 may occur at the PBE. First, we will prove that each user $i$'s strategy combination 2 always occurs at the PBE (strategy combination 4's occurring at the PBE can be proved by similar method and we thus omit here). Similar to what we have done for one-user case in Appendix~\ref{A-B}, the only possible deviation from strategy combination 2 is to strategy 6 with the user $i$'s bias $b_i=+$, which still never occurs due to
\begin{align*}
    \bar{u}_2(b_i\!=\!+) \!-\! \bar{u}_6(b_i\!=\!+) \!= \!\frac{(1-p_H)p_H(\mu_H-\mu_L)}{1+(2^N-1)p_H} \!>\! 0, \ q_+ \!=\! \frac{1}{2}.
\end{align*}
Thus, each user $i$ with bias $b_i=-$ or $b_i=+$ never deviates from strategy combination 2 and it is always a PBE.

By using similar method as above, we can also prove that each user's strategy combination 1 always occurs at the PBE (strategy combination 3's occurring at the PBE can be proved by similar method and we thus omit here).

Finally, we have shown users with bias $b=+$ always deviate from strategy combination 6 to 2. Strategy combination 6 is thus never a PBE (strategy combination 5's occurring at the PBE can be proved by similar method and we thus omit here). We then finish the proof.

\section{Proof of Theorem~\ref{prop-12}}

The system loss $\bar{L}_{Bayesian}$ in \eqref{N} can be easily obtained according to Theorem~\ref{prop-11} and \eqref{u-r}. Besides, one can easily find that $\bar{L}_{Bayesian} \to 0$ as $N \to \infty$. In the following we will show 
loss difference $\bar{L}_2-\bar{L}_{Bayesian}$ in the regime of $p_H \leq \frac{1}{2}$, which can be obtained similarly in the other regime of $p_H > \frac{1}{2}$.  To show $\bar{L}_{Bayesian}$ is less than $\bar{L}_1$ in \eqref{u-r-b-2}, we have
\begin{align*}
    \bar{L}_{Bayesian} < p_H(1-p_H)(\mu_H-\mu_L)^2
\end{align*}
and
\begin{align*}
    \bar{L}_1 &\geq \max\{p_H(\mu_H-\mu_L)^2, (1-p_H)(\mu_H-\mu_L)^2\} \\
    &> p_H(1-p_H)(\mu_H-\mu_L)^2.
\end{align*}
We thus have $\bar{L}_{Bayesian} < \bar{L}_1$ and 
\begin{align*}
    \Delta L_1 &= \bar{L}_1 - \bar{L}_{Bayesian} \\
    &> \bar{L}_2 - \bar{L}_{Bayesian} = \frac{p_H(1-p_H)^2(2^N\!-\!1)(\mu_H-\mu_L)^2}{p_H+(1-p_H)2^N},
\end{align*}
the lower-bound on $\frac{p_H(1-p_H)^2(2^N\!-\!1)(\mu_H-\mu_L)^2}{p_H+(1-p_H)2^N}$ is obvious to be 0 as $p_H$=0 or 1. The upper-bound can be obtained by checking derivative on $p_H$ and applying first-order condition, which returns 
\begin{align*}
    p_H^* = \frac{3\ 2^{N-2}}{2^N-1}-\frac{1}{4} \sqrt{\frac{2^{2N}+2^{N+3}}{\left(2^N-1\right)^2}}. 
\end{align*}
After substituting the above $p_H^*$ to $\frac{p_H(1-p_H)^2(2^N\!-\!1)(\mu_H-\mu_L)^2}{p_H+(1-p_H)2^N}$, we obtain the upper bound. After checking its derivatives on $N$, we obtain that the upper bound increases with $N$.  We then finish the proof.

\begin{table*}
\begin{align}
    &L = -p_H \sum_{k=1}^T  \int_{\Theta_{k-1}}  ( a_H^2(h_{k-1}) - 2\mu_H a_H(h_{k-1}) + \sigma^2 + \mu_H^2 ) \phi_H(h_{k-1}) dh_{k-1} + \lambda_L \sum_{k=1}^T  \int_{\Theta_{k-1}}  ( a_L(h_{k-1}) - a_H(h_{k-1}) ) \phi_L(h_{k-1}) dh_{k-1} \nonumber \\
    &- (1-p_H) \sum_{k=1}^T  \int_{\Theta_{k-1}}  ( a_L^2(h_{k-1}) - 2\mu_L a_L(h_{k-1}) + \sigma^2 + \mu_L^2) \phi_L(h_{k-1}) dh_{k-1} + \lambda_H  \sum_{k=1}^T  \int_{\Theta_{k-1}}  ( a_L(h_{k-1}) - a_H(h_{k-1}) ) \phi_H(h_{k-1}) dh_{k-1}. \label{A-L2}
\end{align}
\begin{align}
    &a_L(h_{k\!-\!1}) \!=\! \frac{\phi_H(h_{k\!-\!1}) \lambda _H+2 \phi_L(h_{k\!-\!1}) \left(1-p_H\right) \mu _L+\phi_L
   (h_{k-1}) \lambda _L}{2 \phi_L(h_{k\!-\!1}) \left(1-p_H\right)}, \label{aL-1} \\
   &a_H(h_{k\!-\!1}) = 
   \frac{\phi_H(h_{k\!-\!1}) \lambda_H-2 \phi_H(h_{k\!-\!1}) \mu_H\!+\!2 \phi_H(h_{k\!-\!1}) \mu_H
   \left(1-p_H\right)\!+\!\phi_L(h_{k\!-\!1}) \lambda_L}{2 \phi_H(h_{k\!-\!1}) p_H}. \label{aH-1}
\end{align} 
\begin{align}
     \lambda_L \!=\! \frac{2 (1\!-\!p_H) p_H^2 (-\mu_H s_{\beta  }
   s_{\delta }\!+\!\mu _H s_{\delta }^2\!+\!\mu _L s_{\beta }
   s_{\delta }\!-\!\mu _L s_{\delta
   }^2)}{p_H^2 (s_{\alpha  }\!-\!s_{\delta }) (s_{\beta}\!-\!s_{\delta })\!-\!p_H (s_{\alpha }
   (s_{\beta }\!-\!2 s_{\delta })\!+\!s_{\delta
   }^2)\!+\!s_{\delta } (s_{\delta }\!-\!s_{\alpha })} > 0, \ \lambda_H \!=\! \frac{2 \left(1\!-\!p_H\right)^2 p_H (-\mu _H s_{\alpha} s_{\delta }\!+\!\mu _H s_{\delta }^2\!+\!\mu _L s_{\alpha
   \delta } s_{\delta }\!-\!\mu _L s_{\delta
   }^2)}{p_H^2 (s_{\alpha }\!-\!s_{\delta }) (s_{\beta}-s_{\delta })\!-\!p_H (s_{\alpha }
   (s_{\beta }\!-\!2 s_{\delta })\!+\!s_{\delta
   }^2)\!+\!s_{\delta } (s_{\delta }\!-\!s_{\alpha })} > 0, \label{A-L3}
\end{align}
\end{table*}

\section{Proof of Lemma~\ref{3-L}}
     To ensure the positively-biased user type's no deviation from messaging of true PDF type to the other, the platform needs to satisfy the following incentive-compatibility (IC) constraints according to \eqref{u_s_} regarding low- and high-PDF type realizations, respectively: 
\begin{align}
    &\int_{\theta} (a_L\!-\!a_H) \phi_L(\theta) d\theta \geq 0, \int_{\theta} (a_H\!-\!a_L) \phi_H(\theta) d\theta \geq 0. \label{IC-1-1} 
\end{align}
Similarly, to ensure the negatively-biased user type's no deviation from messaging of true PDF type to the other, the platform needs to satisfy the following IC constraints according to \eqref{u_s_} regarding low- and high-PDF type realizations, respectively:
\begin{align}
    &\int_{\theta} (a_H\!-\!a_L) \phi_L(\theta) d\theta \geq 0, \int_{\theta} (a_L\!-\!a_H) \phi_H(\theta) d\theta \geq 0. \label{IC-3-1}
\end{align}

The platform's objective is to minimize its expected cost. Therefore, the platform aims to find best $a_L$ and $a_H$ by solving the following minimization problem:
\begin{align}\label{Problem'}
    \min_{a_L,a_H} &\int_\theta \sum_{j\in\{H, L\} }Pr(\phi(\theta) = \phi_j(\theta)) u_R(a_j, \theta) \phi_j(\theta) d\theta \nonumber \\
s.t. \ &\eqref{IC-1-1}, \eqref{IC-3-1}.
\end{align}
The only solution to \eqref{IC-1-1} and \eqref{IC-3-1} is $a_L = a_H$, which means the mechanism degenerates to the benchmark 2. We then finish the proof.

\section{Proof of Theorem~\ref{prop-14}}

We first reduce the original problem \eqref{Problem} to the following: 
\begin{align}\label{P'}
    \max_{\{a_L(h_{k-1})\}_{k=1}^N, \{a_H(h_{k-1})\}_{k=1}^N} &-\eqref{u-r-md} \nonumber \\
s.t. \ &\eqref{IC-1}, \eqref{IC-3}.
\end{align}
Later we will show that solutions to \eqref{P'} also satisfy the two relaxed constraints \eqref{IC-2} and \eqref{IC-4}. 

Note that in the relaxed problem \eqref{P'}, the objective $\bar{c}^T$ in \eqref{u-r-md} is concave with each $a_L(h_{k-1})$ and $a_H(h_{k-1})$, respectively, and the constraints \eqref{IC-1}-\eqref{IC-3} are linear in each $a_L(h_{k-1})$ and $a_H(h_{k-1})$, respectively. The Largrangian for problem \eqref{P'} is then given as in \eqref{A-L2}.
Stationary conditions are 
\begin{align*}
    \frac{\partial L}{\partial a_L(h_{k-1})} = 0, \ \frac{\partial L}{\partial a_H(h_{k-1})} = 0, \ k \in \{1, \cdots, N\},
\end{align*}
which have the solutions as \eqref{aL-1} and \eqref{aH-1}.

At the optimum, we have constraints \eqref{IC-1} and \eqref{IC-3} to be equalities. After substituting $a_L(h_{k-1})$ in \eqref{aL-1} and $a_H(h_{k-1})$ in \eqref{aH-1} in equal \eqref{IC-1} and \eqref{IC-3}, respectively, we have $\lambda_L$ and $\lambda_H$ as the unique solutions as in \eqref{A-L3},
where
\begin{align*}
    &s_{\delta } \!=\! T,\ s_{\alpha } \!= \!\sum_{k=1}^T(\alpha)^{k\!-\!1}, \ s_{\beta  } \!= \!\sum_{k=1}^T(\beta)^{k\!-\!1}, \\
    &\alpha = \int_\Theta \frac{\phi_L^2(\theta)}{\phi_H(\theta)}d\theta, \ \beta = \int_\Theta \frac{\phi_H^2(\theta)}{\phi_L(\theta)}d\theta.
\end{align*}
By substituting $\lambda_L$, $\lambda_H$ in \eqref{aL-1} and \eqref{aH-1}, respectively, we have
\begin{align*}
    &a_L^*(h_{k-1}) = \mu_L + w(h_{k-1})\left(\mu _H-\mu
   _L\right) > \mu_L, \\
   &a_H^*(h_{k-1}) = \mu_H - w(h_{k-1}) \frac{1-p_H}{\Lambda(h_{k-1}) p_H}\left(\mu _H-\mu
   _L\right) < \mu_H, 
\end{align*}
where
\begin{align*}
    w(h_{k-1}) = \frac{\Lambda(h_{k-1}) \lambda_H + \lambda_L}{2(\mu _H-\mu
   _L)(1-p_H)} > 0.
\end{align*}
Note that when \eqref{IC-1} and \eqref{IC-3} become equalities, we have \eqref{IC-2} and \eqref{IC-4} also become equalities and thus satisfied. Eventually, the above ($a_L^*(h_{k-1})$, $a_H^*(h_{k-1})$) are also solutions to the original problem \eqref{Problem}. 

The system loss $\hat{L}_{evolving}$ of the time-evolving commitment mechanism in \eqref{u-r-m} can be easily obtained according to $a_L^*(h_{k-1})$ and $a_H^*(h_{k-1})$ in Theorem~\ref{prop-14} and \eqref{u-r}. Besides, one can easily show that $\hat{L}_{evolving}$ decreases with $\alpha$ and $\beta$, respectively, by checking corresponding derivatives. To prove $\hat{L}_{evolving} < p_H(1-p_H)(\mu_H-\mu_L)^2$, note that $\hat{L}_{evolving}$ decreases with $\alpha$ and $\alpha > 1$ by definition, we then have $\hat{L}_{evolving} < \hat{L}_{evolving}(\alpha = 1) = p_H(1-p_H)(\mu_H-\mu_L)^2$. 

\section{Proof of Proposition~\ref{prop-1}}\label{A-B}

According to Lemma~\ref{L11}, we know that only user's strategy combinations 1-6 may occur at the PBE. Firstly, we will prove that user's strategy combination 2 always occurs at the PBE if $q_+ < \frac{3-\sqrt{5}}{2}$ (strategy combination 4's occurring at the PBE can be proved by similar method and we thus omit here). To show the user with bias $b=-$ never deviates from strategy combination 2 to 8 or 9, 
we have 
\begin{align*}
   &\bar{u}_{2}(b=-) - \bar{u}_{8}(b=-) = \bar{u}_{2}(b=-) - \bar{u}_{9}(b=-) \\
   =& \frac{(1-p_H)p_H(1-q_+)(\mu_H-\mu_L)}{p_H(1-q_+)+q_+} \geq 0
\end{align*}
due to $p_H \in [0, 1]$, $q_+ \in [0, 1]$ and $\mu_H \geq \mu_L$. To show the user with bias $b=-$ never deviates from strategy combination 2 to 15 
, we have 
\begin{align*}
   &\bar{u}_{2}(b=-) - \bar{u}_{15}(b=-) \\
   =& \frac{(1-p_H)p_H(1+q_+)q_+(\mu_H-\mu_L)}{(1-p_H(1-q_+))(p_H(1-q_+)+q_+)} \geq 0
\end{align*}
due to $p_H \in [0, 1]$, $q_+ \in [0, 1]$ and $\mu_H \geq \mu_L$. To show the user with bias $b=+$ never deviates from strategy combination 2 to 7 
, we have 
\begin{align*}
   &\bar{u}_{2}(b=+) - \bar{u}_{7}(b=+) \\
   =& \frac{(1-p_H)p_H(1-q_+)(\mu_H-\mu_L)}{p_H(1-q_+)+q_+} \geq 0
\end{align*}
due to $p_H \in [0, 1]$, $q_+ \in [0, 1]$ and $\mu_H \geq \mu_L$. To show the user with bias $b=+$ never deviates from strategy combination 2 to 14 
, we have 
\begin{align*}
   &\bar{u}_{2}(b=+) - \bar{u}_{14}(b=+) \\
   =& \frac{(1-p_H)p_H(1-q_+^2)(\mu_H-\mu_L)}{(p_H(1-q_+)+q_+)(1-p_H(1-q_+))} \geq 0
\end{align*}
due to $p_H \in [0, 1]$, $q_+ \in [0, 1]$ and $\mu_H \geq \mu_L$. To show the user with bias $b=+$ never deviates from strategy combination 2 to 6 
, we have 
\begin{align*}
   &\bar{u}_{2}(b=+) - \bar{u}_{6}(b=+) = -\frac{(q_+-1) (p_H-1) p_H (\mu
   _H-\mu _L)}{((q_+-1)
   p_H-q_+)}  \cdot \\
   & \frac{ ((2 q_+-1) p_H^2+q_+ (2
   q_+-1) p_H-q_+^2)}{ ((2 q_+-1) p_H-q_+)
   ((2 q_+-1) p_H-q_++1)}.
\end{align*}
Note that we have
\begin{align}\label{A1}
    &\bar{u}_{2}(b=+) \geq \bar{u}_{6}(b=+) 
     \\ 
    &\Longleftrightarrow \frac{(2 q_+-1) p_H^2+q_+ (2
   q_+-1) p_H-q_+^2}{((2 q_+-1) p_H-q_+)
   ((2 q_+-1) p_H-q_++1)} \geq 0 \nonumber
\end{align}
due to $ -\frac{(q_+-1) (p_H-1) p_H (\mu_H-\mu _L)}{((q_+-1)p_H-q_+)} \geq 0$
for all $p_H \in [0, 1]$, $q_+ \in [0, 1]$ and $\mu_H \geq \mu_L$. Besides, we have $((2 q_+-1) p_H-q_+)((2 q_+-1) p_H-q_++1) \leq 0$ is equivalent to $p_H \leq \frac{1-q_+}{1-2q_+}$ if $q_+ \leq \frac{1}{2}$ and $p_H \leq \frac{q_+}{2q_+-1}$ if $q_+ \geq \frac{1}{2}$, which are both true because $\frac{1-q_+}{1-2q_+} \geq 1$ for $q_+ \leq \frac{1}{2}$ and $\frac{q_+}{2q_+-1} \geq 1$ for $q_+ \geq \frac{1}{2}$. Thus, \eqref{A1} can be rewritten as
\begin{align}\label{A2}
    &\bar{u}_{2}(b=+) \geq \bar{u}_{6}(b=+) \nonumber \\
    \Longleftrightarrow& (2 q_+-1) p_H^2+q_+ (2
   q_+-1) p_H-q_+^2 \leq 0.
\end{align}
One can easily solve \eqref{A2} to be 
\begin{align}\label{A3}
  &\bar{u}_{2}(b=+) \geq \bar{u}_{6}(b=+) \nonumber \\
  \Longleftrightarrow &\bigg( q_+ \leq \frac{\sqrt{5}-1}{2} \bigg) \ \text{or} \ \nonumber \\
  &\!\!\!\!\!\! \bigg(  p_H \leq p_{1,L} = \frac{1}{2} \sqrt{\frac{2 q_+^3+3 q_+^2}{2 q_+-1}}-\frac{q_+}{2} \ \text{and} \  q_+ \geq \frac{\sqrt{5}-1}{2} \bigg).
\end{align}
Since $q_+ \leq \frac{3-\sqrt{5}}{2}$ is contained in $q_+ \leq \frac{\sqrt{5}-1}{2}$ in \eqref{A3}, we have $\bar{u}_{3}(b=+) \geq \bar{u}_{6}(b=+) $ and the user with bias $b=+$ never deviates from strategy combination 2 to 6. To summarize, the user with either bias $b=+$ or bias $b=-$ never deviates from strategy combination 2 in the regime of $q_+ \leq \frac{3-\sqrt{5}}{2}$, and strategy combination 2 is thus a PBE for all $p_H \in [0, 1]$ and $q_+ \leq \frac{3-\sqrt{5}}{2}$.

By following similar steps, we can prove that user's strategy combination 1 always occurs at the PBE if $q_+ < \frac{3-\sqrt{5}}{2}$ and $p_H \geq p_{1,H}$ (strategy combination 3's occurring at the PBE can be proved by similar method and we thus omit here), and user's strategy combination 6 never occurs at the PBE (strategy combination 5's never occuring at the PBE can be proved by similar method and we thus omit here). We then finish the proof.

\section{Proof of Proposition~\ref{prop-2}}

PBE is given in Table~\ref{t1:prop-2}.
According to Lemma~\ref{L11}, we know that only user's strategy combinations 1-6 may occur at the PBE. First, we will prove that user's strategy combination 2 always occurs at the PBE if $q_+ \in [\frac{3-\sqrt{5}}{2}, \frac{\sqrt{5}-1}{2}]$ (strategy combination 4's occurring at the PBE can be proved by similar method and we thus omit here). We have already shown in Appendix~\ref{A-B} that the only possible deviation from strategy combination 2 is to strategy 6 with the user's bias $b=+$, which still never occurs if $q_+ \in [\frac{3-\sqrt{5}}{2}, \frac{\sqrt{5}-1}{2}]$ according to \eqref{A3}. Thus, the user with bias $b=-$ or $b=+$ never deviates from strategy combination 2 and it is a PBE for $q_+ \in [\frac{3-\sqrt{5}}{2}, \frac{\sqrt{5}-1}{2}]$.

Next, we will prove that user's strategy combination 1 always occurs at the PBE if $q_+ \in [\frac{3-\sqrt{5}}{2}, \frac{\sqrt{5}-1}{2}]$ (strategy combination 3's occurring at the PBE can be proved by similar method and we thus omit here). We have already shown in Appendix~\ref{A-B} that the only possible deviation from strategy combination 1 is to strategy 5 with the user's bias $b=-$, which still never occurs if $q_+ \in [\frac{3-\sqrt{5}}{2}, \frac{\sqrt{5}-1}{2}]$ 
. Thus, the user with bias $b=-$ or $b=+$ never deviates from strategy combination 1 and it is a PBE for $q_+ \in [\frac{3-\sqrt{5}}{2}, \frac{\sqrt{5}-1}{2}]$.

Finally, we will prove that user's strategy combination 6 never occurs at the PBE if $q_+ \in [\frac{3-\sqrt{5}}{2}, \frac{\sqrt{5}-1}{2}]$ (strategy combination 5's never occurring at the PBE can be proved by similar method and we thus omit here).  According to \eqref{A3}, we obtain that the user with bias $b=+$ always deviates from strategy combination 6 to 3 in the regime of $q_+ \in [\frac{3-\sqrt{5}}{2}, \frac{\sqrt{5}-1}{2}]$. Thus, strategy combination 6 is not a PBE for $q_+ \in [\frac{3-\sqrt{5}}{2}, \frac{\sqrt{5}-1}{2}]$. We then finish the proof.

\begin{table*}
    \begin{align}\label{A-L1}
     &\Bar{L}_{Bayesian} = \nonumber \\
     &\begin{cases}
      \frac{q_+ (1-p_H) p_H (\mu _H-\mu
   _L)^2}{p_H + (1-p_H)q_+}, &\!\!\!\!\!\!\text{if} \ q_+ \leq \frac{3-\sqrt{5}}{2} \ \text{and} \ p_H \leq p_{1, H}, \\
   \frac{(1-q_+) (1-p_H) p_H (\mu _H-\mu
   _L)^2}{1-q_+ p_H}, &\!\!\!\!\!\!\text{if} \ q_+ \geq \frac{\sqrt{5}-1}{2} \ \text{and} \ p_H \geq p_{1, L}, \\
   \max\{\frac{q_+ (1-p_H) p_H (\mu _H-\mu
   _L)^2}{p_H + (1-p_H)q_+}, \frac{(1-q_+) (1-p_H) p_H (\mu _H-\mu
   _L)^2}{1-q_+ p_H}\}, &\!\!\!\!\!\!\text{otherwise}.
    \end{cases}
\end{align}
\end{table*}

\section{PBE Analysis for Large Regime of Positively-biased Probability in the One-User Case}

\begin{table}[t]
\caption{PBE versus high-type PDF probability $p_H$ in large positively-biased probability $q_+$$>$$\frac{\sqrt{5}-1}{2}$ in the one-user case.}
\label{t1:prop-3}
\begin{center}
\begin{small}
\begin{sc}
\begin{tabular}{|l|l|}
\hline
$p_H$ regime & \multicolumn{1}{c|}{ PBE Results} \\
\hline
\hline
Small 
& \multicolumn{1}{c|}{ PBE.1 (with SC.1 in Lemma~\ref{L11})}  \\
$p_H \!\!\in\![0, p_{1, L}]$ & $b$=- user: blindly low-type messaging  \\
& \multicolumn{1}{c|}{$m^*(S|b \!= \!-) \!= \!L$, $\forall S \in \{L, H\}$.}\\ 
 &$b$=+ user: honest messaging\\
& \multicolumn{1}{c|}{$m^*(S|b \!=\! +) \!=\! S$, $\forall S \in \{L, H\}$.} \\
&The Platform's inference $a^*(m)$ in \eqref{a-1}.  \\
\cline{2-2} 
&  \multicolumn{1}{c|}{ PBE.2 (with SC.2 in Lemma~\ref{L11})} \\
 & $b$=- user: honest messaging \\ 
& \multicolumn{1}{c|}{$m^*(S|b \!= \!-) \!= \! S$, $\forall S \in \{L, H\}$.}\\
& $b$=+ user: blindly high-type messaging \\
& \multicolumn{1}{c|}{$m^*(S|b \!=\! +) \!=\! H$, $\forall S \in \{L, H\}$.} \\
&The Platform's inference  $a^*(m)$ in \eqref{a-3}.  \\
\cline{2-2} 
& \multicolumn{1}{c|}{ PBE.3 (with SC.3 in Lemma~\ref{L11})}  \\
 &$b$=- user: blindly high-type messaging  \\
 & \multicolumn{1}{c|}{$m^*(S|b \!= \!-) \!= \! H$, $\forall S \in \{L, H\}$.}\\
 &$b$=+ user: reversed messaging \\
& \multicolumn{1}{c|}{$m^*(S=L|b \!= \!+) \!= \!H$,} \\
& \multicolumn{1}{c|}{$m^*(S=H|b \!=\! +) \!=\! L$.} \\
&The Platform's inference $a^*(m)$ in \eqref{a-2}.  \\
\cline{2-2}
& \multicolumn{1}{c|}{ PBE.4 (with SC.4 in Lemma~\ref{L11})}  \\
 & $b$=- user: reversed messaging  \\
 & \multicolumn{1}{c|}{$m^*(S=H|b \!= \!-) \!= \!L$,}\\
& \multicolumn{1}{c|}{$m^*(S=L|b \!= \!-) \!= \!H$.} \\
 &$b$=+ user: blindly low-type messaging \\
& \multicolumn{1}{c|}{$m^*(S|b \!=\! +) \!=\! L$, $\forall S \in \{L, H\}$.} \\
&The Platform's inference $a^*(m)$ in \eqref{a-4}.  \\
\hline
Large &  \multicolumn{1}{c|}{ PBE.1 (with SC.1 in Lemma~\ref{L11})}  \\
\cline{2-2}
$p_H \!\!\in\![p_{1, L}, 1] $ & \multicolumn{1}{c|}{ PBE.3 (with SC.3 in Lemma~\ref{L11})} \\
\hline
\end{tabular}
\end{sc}
\end{small}
\end{center}
\end{table}

\begin{proposition}\label{prop-3} 
 If the user has large positively-biased probability (i.e., $q_+ > \frac{\sqrt{5}-1}{2}$), we have $p_{1,L}$$\in$$(0,1]$. Non-unique PBEs are given in closed-form in Table~\ref{t1:prop-3}. 
\end{proposition}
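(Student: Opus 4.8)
The plan is to establish the two assertions of Proposition~\ref{prop-3} in turn: first the bound $p_{1,L}\in(0,1]$, and then the PBE classification summarized in Table~\ref{t1:prop-3}. For the threshold I would evaluate the closed form \eqref{p1L} at the left endpoint $q_+=\frac{\sqrt5-1}{2}$ and show it equals $1$. Concretely, at that point the radicand simplifies to $\frac{2q_+^3+3q_+^2}{2q_+-1}=\frac{7+3\sqrt5}{2}=\big(\frac{3+\sqrt5}{2}\big)^2$, so the surd equals $\frac{3+\sqrt5}{2}$ and $p_{1,L}=\frac{3+\sqrt5}{4}-\frac{\sqrt5-1}{4}=1$. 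Positivity of $p_{1,L}$ is immediate, and $p_{1,L}\le1$ on all of $(\frac{\sqrt5-1}{2},1)$ follows from monotonicity together with the endpoint value $p_{1,L}\to\frac{\sqrt5-1}{2}<1$ as $q_+\to1$. This portion is routine.

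For the PBE classification I would exploit the bias/type symmetry already flagged after Proposition~\ref{prop-1}. Consider the relabeling that swaps the type labels $H\leftrightarrow L$ and the bias labels $+\leftrightarrow-$, combined with the state-axis negation $\theta\to-\theta$ (so that the ordering $\mu_H\ge\mu_L$ is preserved after swapping). Under $a\to-a$ this leaves both the utility $u_i=b_i\,a$ and the quadratic cost $(a-\theta)^2$ invariant, while it transforms the public parameters as $(q_+,p_H)\mapsto(1-q_+,1-p_H)$. Tracing the six strategy combinations of Lemma~\ref{L11} through this map, they pair up as SC.1 $\leftrightarrow$ SC.2, SC.3 $\leftrightarrow$ SC.4, and SC.5 $\leftrightarrow$ SC.6, so that PBE.1 $\leftrightarrow$ PBE.2 and PBE.3 $\leftrightarrow$ PBE.4. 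Since $q_+<\frac{3-\sqrt5}{2}$ maps precisely onto $q_+>\frac{\sqrt5-1}{2}$, the large-$q_+$ regime is exactly the image of the small-$q_+$ regime of Proposition~\ref{prop-1}.

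Reading Table~\ref{t1:prop-1} through this correspondence then yields Table~\ref{t1:prop-3} directly: the ``large $p_H$'' block of the small-$q_+$ regime (all four PBEs) maps to the ``small $p_H$'' block of the large-$q_+$ regime, and the ``small $p_H$'' block there (only PBE.2 and PBE.4) maps to the ``large $p_H$'' block (only PBE.1 and PBE.3). To make the cutoff land on $p_{1,L}$ rather than on an image of $p_{1,H}$, the single identity I must verify is $p_{1,L}(q_+)=1-p_{1,H}(1-q_+)$ relating \eqref{p1H} and \eqref{p1L}; a spot-check at $q_+=\frac{\sqrt5-1}{2}$, where $p_{1,L}=1$ and $p_{1,H}(\tfrac{3-\sqrt5}{2})=0$, confirms consistency, and the general identity is an algebraic simplification of the two surds.

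The main obstacle is justifying the symmetry at the level of equilibria rather than merely utilities: I must argue the relabeling is a genuine game isomorphism, so that best responses are carried to best responses and PBEs to PBEs, and that the belief-updating step \eqref{a-1}--\eqref{a-6} is likewise equivariant. If a fully self-contained route is preferred, the alternative is to bypass the symmetry and verify deviations directly, exactly as in Appendix~\ref{A-B}: one checks that for $q_+>\frac{\sqrt5-1}{2}$ the positively-biased user never deviates from SC.1/SC.3 for any $p_H$, while the negatively-biased user's honest messaging (SC.2, SC.4) survives only for $p_H\le p_{1,L}$. The one nontrivial comparison in that route is the analogue of \eqref{A3} governing the deviation from SC.2 to SC.6, which is precisely what pins down the cutoff $p_{1,L}$ and closes the argument.
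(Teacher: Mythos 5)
Your proposal is correct, but your primary route differs from the paper's. The paper proves Proposition~\ref{prop-3} by direct deviation-checking, exactly as in your fallback: it reuses the computations of Appendix~\ref{A-B} to show that for $q_+\geq\frac{\sqrt5-1}{2}$ the only binding deviation from SC.2 is the positively-biased user's switch to SC.6, which by \eqref{A3} is unprofitable precisely when $p_H\leq p_{1,L}$; that SC.1 survives for all $p_H$ because the negatively-biased user's switch to SC.5 is never profitable; and that SC.5/SC.6 are never equilibria. Your main argument instead packages the whole regime as the image of Proposition~\ref{prop-1} under the relabeling $(H,+,\theta,a)\mapsto(L,-,-\theta,-a)$, which sends $(q_+,p_H)\mapsto(1-q_+,1-p_H)$ and pairs SC.1$\leftrightarrow$SC.2, SC.3$\leftrightarrow$SC.4, SC.5$\leftrightarrow$SC.6. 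This is a genuine game isomorphism here (both the linear utility and the quadratic cost are invariant, and the Bayesian updates in \eqref{a-1}--\eqref{a-6} are equivariant), and the key identity $p_{1,L}(1-q)=1-p_{1,H}(q)$ does hold: the numerator in \eqref{p1H} factors as $2q^3-9q^2+12q-5=(q-1)^2(2q-5)$, so its radicand equals $\frac{(1-q)^2(5-2q)}{1-2q}$, which is exactly the radicand of \eqref{p1L} evaluated at $1-q$; the residual affine parts match as well. What your route buys is economy and insight --- Table~\ref{t1:prop-3} becomes a mirror image of Table~\ref{t1:prop-1} with no new utility comparisons --- at the cost of having to justify the isomorphism at the level of equilibria, which you correctly flag. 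What the paper's route buys is self-containedness. Two small points: in your sketch of the fallback, the binding check for SC.1/SC.3 in the large-$q_+$ regime is the \emph{negatively}-biased user's deviation (to SC.5), not the positively-biased user's; and your endpoint computation $p_{1,L}\left(\tfrac{\sqrt5-1}{2}\right)=1$ together with $p_{1,L}(1)=\tfrac{\sqrt5-1}{2}$ and monotonicity correctly gives $p_{1,L}\in(0,1]$, a claim the paper asserts without proof.
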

\begin{proof}
According to Lemma~\ref{L11}, we know that only user's strategy combinations 1-6 may occur at the PBE. First, we will prove that user's strategy combination 2 occurs at the PBE if $p_H \leq p_{1,L}$ and $q_+ \geq \frac{\sqrt{5}-1}{2}$ (strategy combination 4's occurring at the PBE can be proved by similar method and we thus omit here). We have already shown in Appendix~\ref{A-B} that the only possible deviation from strategy combination 2 is to strategy 6 with the user's bias $b=+$, which occurs if $p_H \leq p_{1,L}$ and $q_+ \geq \frac{\sqrt{5}-1}{2}$ according to \eqref{A3}. Thus, strategy combination 2 is a PBE for $p_H \leq p_{1,L}$ and $q_+ \geq \frac{\sqrt{5}-1}{2}$.

Next, we will prove that user's strategy combination 1 always occurs at the PBE if $q_+ \geq \frac{\sqrt{5}-1}{2}$ (strategy combination 3's occurring at the PBE can be proved by similar method and we thus omit here). We have already shown in Appendix~\ref{A-B} that the only possible deviation from strategy combination 1 is to strategy 5 with the user's bias $b=-$, which still never occurs if $q_+ \geq \frac{\sqrt{5}-1}{2}$ 
. Thus, the user with bias $b=-$ or $b=+$ never deviates from strategy combination 1 and it is a PBE for $q_+ \geq \frac{\sqrt{5}-1}{2}$.

Finally, we will prove that user's strategy combination 5 never occurs at the PBE if $q_+ \geq \frac{\sqrt{5}-1}{2}$ (strategy combination 6's never occurring at the PBE can be proved by similar method and we thus omit here).  
We obtain that the user with bias $b-=1$ always deviates from strategy combination 6 to 3 in the regime of $q_+ \geq \frac{\sqrt{5}-1}{2}$. Thus,  strategy combination 5 is not a PBE for $q_+ \geq \frac{\sqrt{5}-1}{2}$. We then finish the proof.
\end{proof}

\section{Proof of Proposition~\ref{prop-4}}

According to Propositions~\ref{prop-1}-\ref{prop-3}, we derive the platform's expected cost in \eqref{u-r} under PBE.1-PBE.4 as follows: 
\begin{align}
    &\bar{c}^{1,*} = \bar{c}^{3,*} \label{pbe-12-u-r} \\
    &=   \frac{q_+ (1-p_H) p_H (\mu _H-\mu
   _L)^2}{p_H + (1-p_H)q_+}+(1-p_H)\sigma_L^2+p_H
   \sigma_H^2, \nonumber \\
   &\bigg( q_+ \leq \frac{\sqrt{5}-1}{2} \bigg) \ \text{or} \ \bigg(  p_H \leq p_{1,L} \ \text{and} \  q_+ > \frac{\sqrt{5}-1}{2} \bigg), 
   \nonumber 
\end{align}
\begin{align}
   & \bar{c}^{2,*} = \bar{c}^{4,*} \label{pbe-34-u-r} \\
   &=  \frac{(1-q_+) (1-p_H) p_H (\mu _H-\mu
   _L)^2}{1-q_+ p_H}+(1-p_H)\sigma_L^2+p_H
   \sigma_H^2, \nonumber \\
   &\bigg( q_+ \geq \frac{3-\sqrt{5}}{2} \bigg) \ \text{or} \ \bigg(  p_H \geq p_{1,H} \ \text{and} \  q_+ \leq \frac{3-\sqrt{5}}{2} \bigg). \nonumber
\end{align}
After comparing the above expected costs, we obtain $\bar{L}_{Bayesian}$ as shown in \eqref{A-L1}, which can be simplified as \eqref{1}. By comparing system loss $\Bar{L}_2$ in \eqref{u-r-b-1} and $\bar{L}_{Bayesian}$ in \eqref{1}, we have 
\begin{align*}
    &\Delta L(\mu_L, \mu_H, q_+, p_H) = \Bar{L}_2 - \bar{L}_{Bayesian} = \\ 
    & \begin{cases}
        \frac{(1-q_+) (1-p_H) p_H^2 (\mu _H-\mu
   _L)^2}{p_H + (1-p_H)q_+}, &\\   &\!\!\!\!\!\!\!\!\!\!\!\!\!\!\!\!\!\!\!\!\!\!\!\!\!\!\!\!\!\!\!\!\!\!\!\!\text{if} \ p_H \leq \min\{\max\{\frac{q_+^2}{2q_+^2-2q_++1}, p_{1, H}\}, p_{1,L}\},  \\  
\frac{q_+ (1-p_H)^2 p_H (\mu _H-\mu
   _L)^2}{1-q_+ p_H}, & \\  &\!\!\!\!\!\!\!\!\!\!\!\!\!\!\!\!\!\!\!\!\!\!\!\!\!\!\!\!\!\!\!\!\!\!\!\!\text{if} \   p_H \geq \min\{\max\{\frac{q_+^2}{2q_+^2-2q_++1}, p_{1, H}\}, p_{1,L}\}.
    \end{cases} \nonumber 
\end{align*}
Note that we have $\Delta L \geq 0$ due to $p_H, q_+ \in [0, 1]$. The upper bound can be obtained by checking derivatives on $p_H$ and $q_+$, sequentially. We take $\Delta L=\frac{(1-q_+) (1-p_H) p_H^2 (\mu _H-\mu
   _L)^2}{p_H + (1-p_H)q_+}$ in the regime of $p_H \leq \min\{\max\{\frac{q_+^2}{2q_+^2-2q_++1}, p_{1, H}\}, p_{1,L}\}$ for example. Denote $\Bar{q}_1 \in (0, 1)$ as the unique solution to 
   \begin{align*}
       4\Bar{q}_1^3-7\Bar{q}_1^2+5\Bar{q}_1-1=0,
   \end{align*}
   $\Bar{q}_2 \in (0, 1)$ as the unique solution to
   \begin{align*}
       2\Bar{q}_2^4-8\Bar{q}_2^3+10\Bar{q}_2^2-7\Bar{q}_2+2=0,
   \end{align*}
   and $\Bar{q}_3 \in (0, 1)$ as the unique solution to
   \begin{align*}
       4\Bar{q}_3^4+4\Bar{q}_3^3-6\Bar{q}_3^2-4\Bar{q}_3+3=0.
   \end{align*}
   After checking derivatives on $p_H$, we have $p_H^*(q_+)$ to maximize $\Delta L$ as follows:
   \begin{align*}
      p_H^*(q_+) = \begin{cases}
          p_{1,H}, &\text{if} \ q_+ < \Bar{q}_1, \\
          \frac{q_+^2}{2q_+^2-2q_++1}, &\text{if} \ q_+ \in (\Bar{q}_1, \Bar{q}_2], \\
          \frac{\sqrt{1+8q_+}+1-4q_+}{4(1-q_+)}, &\text{if} \ q_+ \in (\Bar{q}_2, \Bar{q}_3], \\
          p_{1,L}, &\text{if} \ q_+ > \Bar{q}_3.
      \end{cases}
   \end{align*}
   After substituting the above $p_H^*(q_+)$ in $\Delta L$ and checking derivatives on $q_+$, we obtain that $q_+^*=0$ with $p_H^* = p_{1,H}$ and $\Delta L = (\sqrt{5}-2)(\mu_H-\mu_L)^2$. Upper-bound on $\Delta L$ in the other regime $p_H \geq \min\{\max\{\frac{q_+^2}{2q_+^2-2q_++1}, p_{1, H}\}, p_{1,L}\}$ can be obtain similarly as $\Delta L = (\sqrt{5}-2)(\mu_H-\mu_L)^2$ with $q_+^*=1$ with $p_H^* = p_{1,L}$. We then finish the proof.

\section{Extension to Sequentially-Messaging Users with Asymmetric Messaging Strategies}


In this subsection, we extend to consider the scenario where users sequentially message to the platform with asymmetric messaging strategies. For ease of exposition, we focus on the case of two users, where user 1 messages first and user 2 observing user 1's message then messages to the platform. Note that user 2 has more messaging strategies given his observation on user 1's message. We manage to prove that our PBE results in Proposition~\ref{prop-11} are stable with such sequentially-messaging users of asymmetric strategies.

\begin{proposition}\label{prop-VC}
    In the case of two sequentially-messaging users with asymmetric messaging strategies, we have PBEs are same as Table~\ref{t2-p6} for $N=2$. 
\end{proposition}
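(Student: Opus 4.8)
The plan is to solve the two-stage sequential game by backward induction and to show that the enlarged strategy space of the second mover produces no equilibrium outcome beyond PBE.1--4 in Table~\ref{t2-p6}. First I would fix notation: user~1's strategy is still a map $m_1(S|b_1)\in\{L,H\}$ with the four candidates of Section~\ref{w3:S3.1}, whereas user~2 now chooses $m_2(S|b_2,m_1)$ and may react to the realized first message. Since Lemma~\ref{L11} holds for every $N\geq2$ and any $q_+\in(0,1)$, it continues to restrict user~1 (the first mover) to the six combinations SC.1--SC.6, so the only genuinely new ingredient is the conditioning of user~2 on $m_1$.

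The central observation I would establish is that $m_1$ conveys no information to user~2 about the service type. Both users observe the same realized PDF type $S$, so conditional on $S$ the message $m_1$ is a deterministic function of $b_1$ alone; since user~2 already knows $S$, observing $m_1$ can at most refine his belief about the independent draw $b_1$, which is payoff-irrelevant because $b_1\perp S$. Consequently, for each fixed $m_1$ the platform's posterior on $\phi(\theta)$ from the ordered pair $(m_1,m_2)$ coincides with the count-based posterior in \eqref{ul'-2}, and its best-response actions are exactly \eqref{a2-1} and \eqref{a2-3}--\eqref{a2-6} specialized to $N=2$. I would then run the backward induction: for each of user~1's four candidate strategies I compute user~2's best conditional reply $m_2^*(S|b_2,m_1)$ against these actions, feed the induced outcome into user~1's first-stage problem whose expected payoff is still \eqref{u-s} evaluated at the same action profile, and match the fixed points against Proposition~\ref{prop-11}. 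This recovers the four candidates PBE.1--4 with platform inferences \eqref{a2-1}, \eqref{a2-3}, \eqref{a2-2} and \eqref{a2-4}.

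The hard part will be ruling out new equilibria created by the conditioning, i.e.\ showing that no strategy in which user~2 responds nontrivially to $m_1$ can be sustained and that none of the discarded combinations SC.5--SC.6 reappears. Here I would argue that every profitable conditional reply of user~2 reduces, type by type, to one of the four unconditional replies already compared in \eqref{u2-12}--\eqref{u2-1415}: because the action depends on $(m_1,m_2)$ only through the count while $m_1$ is fixed when user~2 moves, the sign of the payoff gap between $m_2=H$ and $m_2=L$ after each observed $m_1$ is governed by the same quantities $\mu_H-\mu_L$, $p_H$ and posterior weights that drive the simultaneous analysis. I would then verify that the positively-biased type still strictly prefers honest (resp.\ blindly-high) messaging to the reversed deviation, reusing the inequality $\bar u_2(b_i{=}{+})-\bar u_6(b_i{=}{+})=\frac{(1-p_H)p_H(\mu_H-\mu_L)}{1+(2^N-1)p_H}>0$ from Appendix~C at $N=2$, and that the mirror-image inequalities exclude SC.5--SC.6 and pin down SC.1--SC.4 for the negatively-biased type. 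Since every fixed point coincides with an entry of Table~\ref{t2-p6} and every such entry survives the enlarged deviation set, the PBE set is unchanged, which is the claim.
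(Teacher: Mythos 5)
Your proposal is correct and follows essentially the same route as the paper: restrict to SC.1--4 via Proposition~\ref{prop-11}, note that the first mover's strategy space (and hence his deviation set) is unchanged, hold the platform's count-based best responses fixed, and verify that the second mover gains nothing from conditioning on $m_1$. Your information-set-by-information-set backward-induction check is just a repackaging of the paper's explicit enumeration of user~2's conditional deviation strategies $\hat{1}$--$\hat{4}$ and the utility comparisons in its Appendix, so the two arguments coincide in substance.
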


\begin{proof}

    Thanks to Proposition~\ref{prop-11}, we can restrict our attentions to strategy combinations 1-4 since the others never occur at the PBE even with simultaneously-messaging users. In the following we will prove that strategy combination 1 is stable to occur at the PBE (the remaining 5 combinations can be proved with similar methods and we thus omit here).

Regarding strategy combination 1, since user 1 is the first to message, he cannot observe user 2's message. Thus, he with either bias will not deviate from strategy combination 1 given user 2's strategy fixed. It is natural and enough to prove that user 2 with either bias will not deviate from strategy combination 1. Next, we assume that user 1 and the negatively-biased user 2 still follow strategy combination 1, and begin to analyze the positively-biased user 2's deviating strategy candidates, which are as follows ($j \in \{L, H\}$):
\begin{itemize}
    \item user 2's strategy $\hat{1}$: 
    $m_2(S = H, m_1 = H|b_2 = +) = H$,
    
    $m_2(S = H, m_1 = L|b_2 = +) = L$,
    
    and $m_2(S = L, m_1 = L|b_2 = +) = L$,
    \item user 2's strategy $\hat{2}$:
    $m_2(S = H, m_1 = H|b_2 = +) = H$,
    
    $m_2(S = H, m_1 = L|b_2 = +) = L$, 
    
    and $m_2(S = L, m_1 = L|b_2 = +) = H$, 
    \item user 2's strategy $\hat{3}$: 
    $m_2(S = H, m_1 = H|b_2 = +) = L$,
    
    $m_2(S = H, m_1 = L|b_2 = +) = H$,
    
    and $m_2(S = L, m_1 = L|b_2 = +) = H$, 
    \item user 2's strategy $\hat{4}$: 
    $m_2(S = H, m_1 = H|b_2 = +) = L$, 
    
    $m_2(S = H, m_1 = L|b_2 = +) = H$, 
    
    and $m_2(S = L, m_1 = L|b_2 = +) = L$.
\end{itemize}
Following similar steps as shown in \eqref{ul'-2} and \eqref{ul-2}, we obtain the positively-biased user 2's expected utility under strategies $\hat{1}$-$\hat{4}$ are, respectively,
\begin{align}
    &\bar{u}_{\hat{1}}(b_2=+) = p_H\mu_H+(1-p_H)\mu_L, \label{u2-1)} \\
    &\bar{u}_{\hat{2}}(b_2=+) = 
    \frac{p_H^2  (\mu_H - \mu_L)}{2} + \frac{p_H  (\mu_H - \mu_L)}{2}  + \mu_L, \label{u2-2)}  \\
    &\bar{u}_{\hat{3}}(b_2=+) = p_H\mu_H+(1-p_H)\mu_L, \label{u2-3)} \\
    &\bar{u}_{\hat{4}}(b_2=+) = 
    p_H \mu_H  + (1 - p_H) \frac{p_H (\frac{1}{2})^2 \mu_H + (1 - p_H) \mu_L}{p_H (\frac{1}{2})^2 + (1 - p_H)}. \label{u2-4)} 
\end{align}
To show the positively-biased user 2 will not deviate from strategy combination 1 to strategy $\hat{1}$ according to \eqref{u2-1)} and \eqref{u2-12}, we have
\begin{align*}
    \bar{u}_{\hat{1}}(b_2=+) - \bar{u}_{1}(b_2=+) 
    = \frac{(-1 + p_H) p_H \frac{1}{2^2} (\mu_H - \mu_L)}{1 - p_H (2 - \frac{1}{2} ) \frac{1}{2} } \leq 0
\end{align*}
for $p_H \in [0, 1]$. To show the positively-biased user 2 will not deviate from strategy combination 1 to strategy $\hat{2}$ according to \eqref{u2-2)} and \eqref{u2-12}, we have
\begin{align*}
    &\bar{u}_{\hat{2}}(b_2=+) - \bar{u}_{1}(b_2=+) =\\
    & -\frac{(-1 + p_H) p_H (-1+p_H \frac{1}{2})(2-\frac{3}{2} +\frac{1}{2^2}) (\mu_H - \mu_L)}{1 - p_H (2 - \frac{1}{2}) \frac{1}{2}} \leq 0
\end{align*}
for $p_H \in [0, 1]$.  To show the positively-biased user 2 will not deviate from strategy combination 1 to strategy $\hat{3}$ according to \eqref{u2-3)} and \eqref{u2-12}, we have
\begin{align*}
    \bar{u}_{\hat{3}}(b_2=+) - \bar{u}_{1}(b_2=+) 
    = \frac{(-1 + p_H) p_H \frac{1}{4} (\mu_H - \mu_L)}{1 - p_H (2 - \frac{1}{2}) \frac{1}{2}} \leq 0
\end{align*}
for $p_H \in [0, 1]$. To show the positively-biased user 2 will not deviate from strategy combination 1 to strategy $\hat{4}$ according to \eqref{u2-4)} and \eqref{u2-12}, we have
\begin{align*}
    \bar{u}_{\hat{4}}(b_2=+) - \bar{u}_{1}(b_2=+) = 0.
\end{align*}
In summary, we prove that strategy combination 1 is stable for the positively-biased user 2. By using similar methods as above, we can also prove that strategy combination 1 is stable for the negatively-biased user 2. We then finish the proof.
\end{proof}


According to Proposition~\ref{prop-VC}, we obtain that PBE of two sequentially-messaging users is same as simultaneously-messaging users. The system loss is thus same as \eqref{N} with $N$=2.

\section{Extension to Biased Messaging under Multiple Service Quality Levels}

Recall that only two possible PDF types $\{\phi_L(\theta), \phi_H(\theta)\}$ are considered in the system model in Section II before. Now we extend biased messaging for $N\geq2$ users under multiple PDF types to include another medium PDF type $M$ in the distribution space $\{\phi_L(\theta),  \phi_M(\theta), \phi_H(\theta)\}$, with the mean of the medium PDF $\mu_M$=$\frac{\mu_H+\mu_L}{2}$. Denote set $I_j$=$\{i|$$m_i$=$j$,1$\leq$$i$$\leq$$N\}$ to include all users for messaging $j$$\in$$\{L, M, H\}$. Note that each biased user now has $3^3$=27 strategy candidates, making the analysis more involved. Nonetheless, by following a similar analysis as shown in Section IV, we manage to solve closed-form PBEs to extend from Proposition 1 in Section IV-C.

\renewcommand{\theproposition}{8}
\begin{proposition}\label{prop-A-8}
 Given multiple users $N$$\geq$2 and three distribution types in $\{\phi_L(\theta),  \phi_M(\theta), \phi_H(\theta)\}$ with a uniform probability distribution of $\{\frac{1}{3}, \frac{1}{3}, \frac{1}{3}\}$, we have the following PBEs:
 \begin{itemize}
     \item PBE.1: Each negatively-biased user $i$ blindly sends low-type message  (i.e., $m_i(S|b_i$=$-)$=$L$) and each positively-biased user $i$ honestly messages his observed type 
    (i.e., $m_i(S|b_i$=$+)$=$S$, $\forall S \in \{L, M, H\}$). The platform's equilibrium rating action is 
    \begin{align*}
        a^*(\{m_i\}_{i=1}^N) = \begin{cases}
            \mu_H, &\text{if} \ |I_H| \geq 1 , \\
            \frac{\mu_H+\mu_L}{2}, &\text{if} \ |I_M| \geq 1, \\
            \frac{\frac{\mu_H}{2^N}+\frac{\mu_H+\mu_L}{2^{N+1}}+\mu_L}{\frac{1}{2^N}+\frac{1}{2^N}+1}, &\text{otherwise}.
        \end{cases}
    \end{align*}
    \item PBE.2: Each negatively-biased user $i$ honestly messages his observed type  (i.e., $m_i(S|b_i$=$-)$=$S$) and  each positively-biased user $i$ blindly sends high-type message (i.e., $m_i(S|b_i$=$+)$=$H$, $\forall S \in \{L, M, H\}$).
    The platform's equilibrium rating action is 
    \begin{align*}
        a^*(\{m_i\}_{i=1}^N) = \begin{cases}
            \mu_L, &\text{if} \ |I_L| \geq 1 , \\
            \frac{\mu_H+\mu_L}{2}, &\text{if} \ |I_M| \geq 1, \\
            \frac{\frac{\mu_L}{2^N}+\frac{\mu_H+\mu_L}{2^{N+1}}+\mu_H}{\frac{1}{2^N}+\frac{1}{2^N}+1}, &\text{otherwise}.
        \end{cases}
    \end{align*}
 \end{itemize}
\end{proposition}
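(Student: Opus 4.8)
The plan is to verify that each of the two claimed profiles is a PBE by (i) confirming that the stated inference action is exactly the Bayesian posterior mean $E[\theta\mid\{m_i\}_{i=1}^N]$ against the prescribed strategies, and (ii) checking that neither biased user type can strictly gain by switching strategy, following the same two-step methodology used for Lemma~\ref{L11} and Proposition~\ref{prop-11}, now carried out over the enlarged type space $\{\phi_L,\phi_M,\phi_H\}$ with $\mu_M=\frac{\mu_H+\mu_L}{2}$.

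First I would pin down the platform's best response. In PBE.1 the prescribed strategies have the key revelation structure that a message $m=H$ can only come from a positively-biased user who honestly reports an observed $S=H$, and $m=M$ only from one honestly reporting $S=M$, while negatively-biased users emit neither symbol. Hence $|I_H|\geq1$ discloses $S=H$ and $|I_M|\geq1$ discloses $S=M$, forcing $a^*=\mu_H$ and $a^*=\mu_M=\frac{\mu_H+\mu_L}{2}$ respectively; the all-$L$ event is consistent with $S=L$ (with probability one) or with $S\in\{M,H\}$ when every user happens to be negatively-biased (probability $(\tfrac12)^N$ in each case). Substituting these likelihoods and the uniform prior $\{\tfrac13,\tfrac13,\tfrac13\}$ into the Bayes update \eqref{ul'-2} returns the stated weighted average. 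PBE.2 is obtained by the mirror-image argument with the roles of $H$ and $L$ exchanged.

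Next I would verify incentive compatibility. For the negatively-biased user in PBE.1 the argument is monotone: switching any report from $L$ up to $M$ or $H$ can only push the platform's inference upward (it triggers $a^*=\mu_M$ or the maximal $a^*=\mu_H$, both strictly above the all-$L$ value, which is a weighted average exceeding $\mu_L$ since $\mu_H\geq\mu_L$), and a direct expected-utility comparison then shows blindly-$L$ is optimal. For the positively-biased user the decisive comparison is against the blindly-high deviation profile: there each user sends $H$ with probability $q_+$ regardless of the realized type, so the message count is completely uninformative, the platform babbles to $a=\mu_M$, whereas honest reporting secures $a=\mu_H$ when $S=H$, $a=\mu_M$ when $S=M$, and an all-$L$ inference strictly above $\mu_L$ when $S=L$, which is strictly better in expectation. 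The remaining candidate deviations (blindly-$M$, blindly-$L$, and the type-permuting strategies) I would eliminate by the same symmetric-profile computations, with an analogue of Lemma~\ref{L11} used to trim the large per-user strategy space down to this short list.

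The main obstacle is the combinatorial blow-up: each user now has $3^3=27$ pure strategies, so both the honest-versus-deviation bookkeeping and the Bayesian posteriors over three types with binomially-distributed message counts are markedly heavier than in the two-type case. The saving grace is precisely the revelation property on the equilibrium path — because negatively-biased users never emit $H$ or $M$, any such symbol fully discloses the type and keeps the on-path actions clean — so the genuine work is confined to the off-path profiles where this property fails, and there it is the uninformativeness of blindly-high (or blindly-low) messaging, collapsing the platform to its prior mean $\mu_M$, that defeats each deviation.
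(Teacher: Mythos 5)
Your proposal is correct and follows essentially the same route as the paper: the paper gives no standalone proof of this proposition, deferring to ``a similar analysis as shown in Section IV,'' and your two-step verification (Bayes posterior mean for the platform against the prescribed profile, then ruling out deviations bias-by-bias over the trimmed strategy list) is exactly that analysis, with your all-$L$ posterior computation reproducing the stated $\tfrac{1}{2^N}$ weights and $\mu_M=\tfrac{\mu_H+\mu_L}{2}$ yielding the $2^{N+1}$ term. One caution: your decisive check for the positively-biased user --- that the blindly-high deviation renders the messages uninformative so the platform babbles to $\mu_M$ --- implicitly adopts the paper's convention (as in its Appendix C) of comparing symmetric strategy combinations with the platform re-best-responding to each, rather than holding the equilibrium inference $a^*$ fixed as the paper's Definition 1 literally prescribes; under the latter reading a lone off-path $H$ message would still trigger $a^*=\mu_H$ and the deviation would look profitable, so you should state explicitly which deviation notion you are using (the paper's proofs use the former throughout).
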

 Note that our Bayesian game theoretic approach still manages to persuade positively-biased users for honest messaging in PBE.1 and negatively-biased users for honest messaging in PBE.2. Given PBE.1-2 in Proposition~\ref{prop-A-8}, we examine the system performance of our Bayesian game theoretic learning approach and analyze the platform's system loss to extend from Theorem 1 in Section IV-C.
 \renewcommand{\thetheorem}{3}
\begin{theorem}\label{prop-A-9}
Given multiple $N\geq2$ users and three distribution types in $\{\phi_L(\theta),  \phi_M(\theta), \phi_H(\theta)\}$ with a uniform probability distribution of $\{\frac{1}{3}, \frac{1}{3}, \frac{1}{3}\}$, our Bayesian game theoretic approach in Proposition~\ref{prop-A-8} has the following system loss:
\begin{align*}
    \bar{L}_{Bayesian} =  
    \frac{(1+5\cdot 2^N)}{3(2+2^N)2^{2+N}}(\mu_H-\mu_L)^2,
\end{align*}
which decreases with the user number $N$ with $\lim_{N\to\infty}\bar{L}_{Bayesian} = 0$. The incurred system loss is still obviously less than $\bar{L}_1$ of the benchmark 1 and $\bar{L}_2$ of the benchmark 2. 
\end{theorem}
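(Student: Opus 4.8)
The plan is to lean on Proposition~\ref{prop-A-8}, which already supplies the two equilibria PBE.1 and PBE.2 together with the platform's closed-form inference actions, and reduce everything to an expected-cost computation via \eqref{u-r}. First I would recall that for any committed action $a$ and realized type $\phi_j$ the conditional error splits as $E[(a-\theta)^2\mid\phi_j]=(a-\mu_j)^2+\sigma_j^2$. Averaging against the uniform prior $\{\tfrac13,\tfrac13,\tfrac13\}$ and the messaging probabilities isolates the irreducible part $\bar c^{**}=\tfrac13(\sigma_H^2+\sigma_M^2+\sigma_L^2)$, so by the definition \eqref{POA} the system loss equals only the ``bias'' term $\sum_j \Pr(\phi_j)\sum_{\mathrm{msg}}\Pr(\mathrm{msg}\mid\phi_j)(a(\mathrm{msg})-\mu_j)^2$.

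The key structural observation that makes this sum tractable is that in PBE.1 every message profile containing at least one $H$ (resp. at least one $M$ but no $H$) reveals the true type to be $H$ (resp. $M$) exactly, since honest positively-biased users are the only source of $H$ and $M$ messages; the platform then infers $\mu_H$ or $\mu_M$ and contributes nothing to the bias term. Hence loss arises \emph{only} from the all-$L$ profile, whose conditional probability is $2^{-N}$ under type $H$, $2^{-N}$ under type $M$ (all $N$ users negatively biased, each with probability $\tfrac12$), and $1$ under type $L$. Writing $a_3$ for the platform's all-$L$ action from Proposition~\ref{prop-A-8}, the loss collapses to $\tfrac13\big[2^{-N}(a_3-\mu_H)^2+2^{-N}(a_3-\mu_M)^2+(a_3-\mu_L)^2\big]$.

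Because $a_3$ is exactly the posterior mean with the unnormalized weights $(2^{-N},2^{-N},1)$ on $(\mu_H,\mu_M,\mu_L)$, I would finish with the pairwise weighted-variance identity $\sum_j w_j(\bar\mu-\mu_j)^2=\frac{1}{\sum_j w_j}\sum_{j<k}w_jw_k(\mu_j-\mu_k)^2$. Substituting $\mu_M=\frac{\mu_H+\mu_L}{2}$ turns the three pairwise gaps into $\tfrac14(\mu_H-\mu_L)^2,\ (\mu_H-\mu_L)^2,\ \tfrac14(\mu_H-\mu_L)^2$, after which collecting terms and clearing powers of $2^N$ yields exactly $\frac{1+5\cdot 2^N}{3(2+2^N)2^{2+N}}(\mu_H-\mu_L)^2$. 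The symmetry $H\leftrightarrow L$, $+\leftrightarrow-$ (with $\mu_M$ pinned to the midpoint) shows PBE.2 produces an identical loss, so the worst-case $\max_{e}\bar c^{e,*}$ in \eqref{POA} is this common value. Monotonicity in $N$ and the limit $0$ then follow by checking that $f(\epsilon)=\frac{\epsilon(\epsilon+5)}{4(2\epsilon+1)}$ is increasing on $\epsilon>0$, since its derivative has numerator $8\epsilon^2+8\epsilon+20>0$, and evaluating at $\epsilon=2^{-N}$.

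The main obstacle is not the algebra but the posterior bookkeeping: one must argue carefully that, under the honest/blind strategies of PBE.1, the all-$L$ profile is the \emph{unique} ambiguous message combination and that its probability is precisely $2^{-N}$ under each of the two off-diagonal types. Once this ``only the all-$L$ event is lossy'' reduction is secured, the weighted-variance identity reduces the remainder to a short calculation, and the comparison with $\bar L_1$ and $\bar L_2$ is immediate, because those benchmark losses stay bounded away from $0$ as $N\to\infty$ whereas $\bar L_{Bayesian}\to0$.
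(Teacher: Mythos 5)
Your proposal is correct and follows essentially the same route as the paper's analysis for the two-type case (Appendix D): substitute the PBE inference actions from Proposition~\ref{prop-A-8} into the expected cost, observe that under PBE.1 only the all-$L$ profile is ambiguous (with conditional probabilities $2^{-N},2^{-N},1$ under $H,M,L$), and reduce the residual to the weighted-variance of $(\mu_H,\mu_M,\mu_L)$ about the posterior mean; your algebra, the symmetry argument for PBE.2, and the monotonicity check via $f(\epsilon)=\frac{\epsilon(\epsilon+5)}{4(2\epsilon+1)}$ all verify, reproducing $\frac{(1+5\cdot 2^N)}{3(2+2^N)2^{2+N}}(\mu_H-\mu_L)^2$ exactly. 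The only loose end is the comparison with $\bar L_1$ and $\bar L_2$: arguing that the benchmarks stay bounded away from zero while $\bar L_{Bayesian}\to 0$ only gives the inequality for large $N$, but since you have already shown $\bar L_{Bayesian}$ decreases in $N$, it suffices to check $N=2$, where $\bar L_{Bayesian}=\frac{7}{96}(\mu_H-\mu_L)^2<\frac{1}{6}(\mu_H-\mu_L)^2=\bar L_2\leq\bar L_1$.
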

Regarding multiple PDF types $n>3$, note that each biased user has $n^n$ strategy candidates, making it even more challenging to obtain closed-form PBE. Our analysis in Section IV still applies yet it may be intractable due to the exponentially-increased complexity. It requires us to adopt new approximation methods to determine biased users' messaging strategies.

\end{document}